\newtheorem{theorem}{Theorem}
\newtheorem{lemma}[theorem]{Lemma}
\newtheorem{fact}[theorem]{Fact}
\newcommand{\AutoAdjust}[3]{\mathchoice{ \left #1 #2  \right #3}{#1 #2 #3}{#1 #2 #3}{#1 #2 #3} }
\newcommand{\Xcomment}[1]{{}}
\newcommand{\InBrackets}[1]{\AutoAdjust{[}{#1}{]}}
\newcommand{\Ex}[2][]{\operatorname{\mathbf E}_{#1}\InBrackets{#2}}
\newcommand{\Exlong}[2][]{\operatornamewithlimits{\mathbf E}\limits_{#1}\InBrackets{#2}}
\newcommand{\Prx}[2][]{\operatorname{\mathbf{Pr}}_{#1}\InBrackets{#2}}
\newcommand{\ind}[1]{\mathds{1}\left[\vphantom{\sum}#1\right]}
\newcommand{\piv}[1][v]{p_i^{#1}}
\newcommand{\pjv}{p_j^{v}}
\newcommand{\aiuv}{a_{i}^{(u,v)}}
\newcommand{\Aiu}[1][u]{A_i^{#1}}
\newcommand{\Amiu}[1][u]{A_{\text{-}i}^{#1}}
\newcommand{\giuv}{g_{i}^{(u,v)}}
\newcommand{\xiuv}{x_{i}^{(u,v)}}
\newcommand{\xiu}{x_i^u}
\newcommand{\xu}{x^u}
\newcommand{\hu}{h_u}
\newcommand{\xju}{x_j^u}
\newcommand{\ziuv}{(2 \rhoiuv - 1)^+}
\newcommand{\rhoiuv}[1][u]{\rho_{i}^{(#1,v)}}
\newcommand{\rhojuv}{\rho_{j}^{(u,v)}}
\newcommand{\muiv}[2][u]{\mu_i^v(#1 \to #2)}
\newcommand{\eqdef}{\overset{\mathrm{def}}{=\mathrel{\mkern-3mu}=}}
\newcommand{\dd}[1]{\;\mathrm{d}#1}
\newcommand{\ratmsm}{\Gamma_{\text{MAM}}}
\newcommand{\ratrcrs}{\Gamma_{\text{CAR}}}
\DeclareMathOperator*{\argmax}{argmax}
\newcommand{\threshold}{\beta}
\newcommand{\dif}{\mathrm{d}}
\title{\Large Prophet Secretary and Matching: the Significance of the Largest Item}
\author{
Ziyun Chen\thanks{University of Washington. 
Email: \texttt{ziyuncc@cs.washington.edu}. This work was done when the author was at Tsinghua University.}
\and 
Zhiyi Huang\thanks{The University of Hong Kong. 
Email: \texttt{zhiyi@cs.hku.hk}}
\and 
Dongchen Li\thanks{The University of Hong Kong. 
Email: \texttt{dongchen.li@connect.hku.hk}}
\and 
Zhihao Gavin Tang\thanks{ITCS, Key Laboratory of Interdisciplinary Research of Computation and Economics, Shanghai Institute of International Finance and Economics, Shanghai University of Finance and Economics. Email: \texttt{tang.zhihao@mail.shufe.edu.cn}}
\and 
}
\date{}
\begin{document}

\maketitle

\begin{abstract} 
    The prophet secretary problem is a combination of the prophet inequality and the secretary problem, where elements are drawn from known independent distributions and arrive in uniformly random order. In this work, we design 1) a $0.688$-competitive algorithm, that breaks the $0.675$ barrier of blind strategies (Correa, Saona, Ziliotto, 2021), and 2) a $0.641$-competitive algorithm for the prophet secretary matching problem, that breaks the $1-1/e\approx 0.632$ barrier for the first time. Our second result also applies to the query-commit model of weighted stochastic matching and improves the state-of-the-art ratio (Derakhshan and Farhadi, 2023).
\end{abstract}

\section{Introduction}
\label{sec:intro}
The study of prophet inequality dates back to the 1970s~\cite{krengel1977semiamarts,krengel1978} from optimal stopping theory. Consider $n$ items with independent random values arriving one by one in an adversarial order. The value distribution $F_i$ of each item $i$ is known upfront to the algorithm, but the realization of value $v_i \sim F_i$ is only revealed on the item's arrival. After seeing the item's identity $i$ and value $v_i$, the algorithm decides immediately whether to accept the item and collect its value; the algorithm can accept at most one item in this problem. The goal is to maximize the expected value of the accepted item and compete against the prophet, i.e., the expected maximum value $\Ex{\max_i v_i}$. It is known that the optimal competitive ratio is $\frac{1}{2}$ for this problem.

A fundamental extension of the prophet inequality is prophet matching. Consider an underlying bipartite graph with edge weights drawn from known distributions. The vertices on one side are known upfront and those on the other side arrive online. On the arrival of an online vertex $v_i$, the weights of its incident edges are revealed and the algorithm decides whether to match $v_i$ and to which offline vertex. The classic prophet inequality is captured by this model with one offline vertex. \citet*{soda/FeldmanGL15} gave a tight $\frac{1}{2}$ competitive algorithm for the matching setting, and their result was further generalized to settings when all vertices arrive online~\cite{mor/EzraFGT22}.

In this work, we consider the secretary variants (a.k.a.\ the random order variants) of prophet inequality and prophet matching, i.e., the setting where the arrival order of items (resp.\ vertices) is uniformly at random.

The study of prophet secretary was initiated by \citet{esa/EsfandiariHLM15}, who designed a $1-1/e \approx 0.632$ competitive algorithm and provided an upper bound of $0.75$.%
\footnote{The competitive ratio of an algorithm is a number between $[0,1]$. A lower bound corresponds to an algorithm and an upper bound corresponds to an impossibility result.} 
Since then, a sequence of follow-up works~\cite{ec/AzarCK18, mp/CorreaSZ21, corr/Harb23, ec/BubnaC23, corr/Giambartolomei23} have focused on closing the gap. The state-of-the-art lower and upper bounds are $0.672$ by \citet{corr/Harb23} and $0.723$ by \citet{corr/Giambartolomei23} respectively.

Less progress has been made on the prophet secretary matching problem.
\citet{soda/EhsaniHKS18} gave a $1-1/e$ competitive algorithm.
Very recently, the $1-1/e$ barrier was surpassed in two special cases: 1) the i.i.d.\ setting studied by \citet{soda/Yan24} and \citet{wine/QiuFZW23}; and 2) the query-commit setting studied by \citet{soda/Derakhshan023}. 
Beating $1-1/e$ for the general case of prophet secretary matching remains one of the most intriguing open questions to the online algorithms community.

\subsection{Our Contributions}


\paragraph{Result for Prophet Secretary.}
We design a $0.688$ competitive algorithm for the prophet secretary problem. Besides the improvement over the state-of-the-art $0.672$ ratio,
our result further surpasses the $0.675$ barrier of \emph{blind strategies}~\cite{mp/CorreaSZ21}, the family of algorithms that Correa et al.~\cite{mp/CorreaSZ21} and Harb~\cite{corr/Harb23} focused on. 
Blind strategies rely on only the distribution of the maximum value, but not the fine-grained distributional information of individual items' values.
Intuitively, such fine-grained information must be crucial because the items are heterogeneous in the prophet secretary problem.
However, it is technically challenging to incorporate such information to design and analyze item-dependent strategies: 
changing the strategy for one item would unavoidably affect the probability of accepting other items since we can accept only one of them.

\paragraph{Technique: Activation-Based Algorithms.}
We introduce two ideas to address this difficulty.
First, we change our point of view from designing acceptance probabilities to choosing \emph{activation rates}.
In general, an online algorithm is defined by the probability of accepting an item based on its identity $i$, value $v$, and the set of future items that will arrive later.
Following the conventional wisdom, the current item's arrival time $t$ is a good surrogate for aggregating information about exponentially many possible sets of future items over their random arrivals.
Here, we interpret the random order as having each item arrive within a time horizon from $0$ to $1$ uniformly at random.
In short, algorithms are represented by the acceptance probabilities for item $i$ when it has value $v$ and arrives at time $t$.
However, it is difficult to analyze the algorithms based on this representation.

By contrast, we will consider the activation rates $a_i^v(t)$ for item $i$ when it has value $v$ and arrives at time $t$, and the overall activation rates $A_i(t) = \Ex[v \sim F_i]{a_i^v(t)}$ of the item at time $t$. 
We activate this item (and accept it if no item has been accepted yet) with probability:
\[
	g_i^v(t) = a_i^v(t) \cdot e^{- \int_0^t A_i(x) \dif x}
	~.
\]

This new viewpoint offers two useful invariants.
By definition, the probability that we activate item $i$ before time $t$ equals:
\[
	\int_0^t \Ex[v \sim F_i]{g_i^v(x)} \dif x ~=~ 1 - e^{- \int_0^t A_i(x) \dif x}
	~.
\]

Hence, the activation events effectively follow a Poisson process with rates $A_i(t)$.
Accordingly, the probability that we activate item $i$ with value $v$ and arrival time $t$ is:
\[
	\Prx{v_i = v} \cdot a_i^v(t) \cdot \underbrace{\vphantom{\big|} e^{- \int_0^t \sum_{j=1}^n A_j(x) \dif x}}_{(\star)} \dif t
	~.
\]

Note that the second part $(\star)$ is independent of the item's identity $i$ and value $v$.
Therefore, we can simplify the dependence of different items' strategies by introducing an upper bound on $\sum_{j=1}^n A_j(t)$ for any time $t$.
Subject to this invariant, we can freely design the activation rates $a_i^v(t)$ for each item $i$ and value $v$ to approximately match its contribution to the prophet benchmark.

To further simplify the analysis, we focus on activation rates $a_i^v(t)$ that are step functions that change their values at a common threshold time.
We demonstrate the effectiveness of this viewpoint and such simple step activation rates in \Cref{sec:prophet-secretary-step-activation} by proving a $0.694$ competitive ratio when all items are small in the sense that each contributes only $o(1)$ to the prophet benchmark.

\paragraph{Technique: Significance of the Largest Item.}
To further handle the general case of prophet secretary, we will focus on the item $i_0$ with the largest probability of being selected by the prophet.
This is partly inspired by the existing hard instances (e.g., \cite{mp/CorreaSZ21,ec/BubnaC23,corr/Giambartolomei23}), all of which involve one large item and many small items.
The significance of the largest item is twofold:
1) we need to design a special strategy for it beyond the step-function activation rates, and 2) its characteristics provide sufficient information for selecting the invariants for the other items' activation rates.

Why do we need a special strategy for this largest item?
Consider the extreme case when it is the only item that matters.
Intuitively, we would like to select it with certainty on its arrival.
However, we cannot do that using the step-function activation rates.
Within a time interval where the activation rates $a_i^v(t)$ remain a constant, the $e^{- \int_0^t A_i(x) \dif x}$ term decreases the activation probability over time.
This decrease is mild for smaller items, but could be substantial for the largest item.
Remarkably, this seemingly trivial instance plays an important role of establishing the $0.675$ barrier for blind strategies~\cite{mp/CorreaSZ21}.
Motivated by this extreme case, we let the largest item's activation probability rather than its activation rate be piece-wise constant.
While it is difficult to analyze algorithms based on the representation by activation/acceptance probabilities in general, we show that it is manageable to do that for just one largest item. 
This special treatment of just one largest item is sufficient for breaking the $0.675$ barrier.

We will consider two characteristics of this largest item:
its probability of being selected by the prophet, i.e., $x_0 = \Prx{v_{i_0} = \max_j v_j}$, and a quantity $h_0$ that measures the extent to which item $i_0$ would be selected by the prophet with probability more than half, over the randomness of the other items' values.
Based on just $x_0$ and $h_0$, we will choose the (1) invariants $\sum_{j \ne i_0} A_j(t)$ for the activation rates of other items, (2) the shared threshold time for the step-function activation rates of other items, and (3) the three-stage step-function activation probabilities of the largest item $i_0$.
It is surprising that these two characteristics of the largest item alone are sufficient for choosing all important invariants for our algorithm and analysis.





\paragraph{Result for Prophet Secretary Matching.}

We design a $0.641$-competitive algorithm for the prophet secretary matching problem, breaking the $1-1/e$ barrier for the first time.
As a corollary of this result, we also improve the state-of-the-art ratio of the query-commit setting from $0.633$~\cite{soda/Derakhshan023} to $0.641$ through a reduction~\cite{soda/GamlathKS19, icalp/CostelloTT12} from the query-commit model to the secretary model.


\paragraph{Summary of Techniques for Prophet Secretary Matching.}
We start by extending the activation-based framework to matching.
First, let us consider a simple strategy:
upon the arrival of an online vertex, assign it to an offline vertex with probability proportional to how likely the prophet would match them.
We remark that the assignment is \emph{independent of the arrival of earlier vertices and their matching results}. 
Then, from each offline vertex's viewpoint, it may treat the online vertices (more precisely, the corresponding edges) as online items in the prophet secretary problem, treating those not assigned to it as having zero values.

However, the online stochastic matching literature suggests that we should not na\"ively follow this approach and apply the two-stage step-function activation rates from prophet secretary, or we would miss the opportunity of exploiting second-chance (re-)assignments (a.k.a.\ the power of two choices).
If an offline vertex is already matched, we should no longer assign online vertices to it, but instead redirect the opportunities to other unmatched offline vertices.
Hence, we introduce a third stage into the activation-based algorithm, which has the same activation rates as the second stage, but takes into account the assignments redirected from the other offline vertices.
This may be viewed as reinterpreting the three-stage algorithm by \citet{soda/Yan24} for the i.i.d.\ special case within the activation-based framework.
By doing so, we achieve the same $0.645$ competitive ratio but more generally for all non-i.i.d.\ instances in which all edges are small, i.e., when each edge contributes only $o(1)$ to the prophet benchmark.

On the other hand, the worst-case competitive ratio of this approach degenerates to $1-1/e$ if there is a large edge adjacent to every offline vertex.
To complement this scenario, we introduce a variant of the random order contention resolution scheme (RCRS) algorithm for matching~\cite{esa/LeeS18,icalp/FuTWWZ21}.
This may be viewed under the activation-based framework as follows.
For each offline vertex $u$ and its largest edge $(u,v)$, let its adjacent edges other than $(u,v)$ have constant activation rates;
let edge $(u, v)$'s activation probability be a $0$-$1$ step-function.
This is consistent with our approach for the prophet secretary problem, but the design of activation rates and probabilities is simpler due to the complications in the analysis of the more general matching problem, and the fact that we only need to beat the $1-\frac{1}{e}$ barrier in this case.

We show that a hybrid algorithm that randomizes over the above two approaches achieves the stated $0.641$ competitive ratio.

\subsection{Related Works}
\citet{duttingProphetSecretaryOnline2023} studied the computational complexity of the optimal online algorithm for prophet secretary and gave a PTAS, though it does not imply any competitive ratio of the optimal online algorithm. Abolhassani et al.~\cite{stoc/AbolhassaniEEHK17} and Liu et al.~\cite{ec/LiuLPSS21} studied the prophet secretary problem under small-item assumptions. They proved that if either 1) every distribution appears sufficiently many times~\cite{stoc/AbolhassaniEEHK17, ec/LiuLPSS21} or 2) every distribution has only a negligible probability of being non-zero~\cite{ec/LiuLPSS21}, there exists a $0.745$-competitive algorithm, matching the optimal competitive ratio as in the i.i.d.\ setting.

The order-selection prophet inequality lies between the i.i.d.\ setting and the secretary setting. In this variant, the algorithm is given the extra power of selecting the arrival order of the items. This is motivated by the application of prophet inequalities to sequential posted pricing mechanisms, and has been studied by~\cite{stoc/ChawlaHMS10,or/BeyhaghiGLPS21,focs/PengT22,ec/BubnaC23}. The current state-of-the-art competitive ratio is $0.725$ by Bubna and Chiplunkar~\cite{ec/BubnaC23}.

Besides matching, the prophet inequality has also been generalized to other combinatorial settings, including matroids~\cite{geb/KleinbergW19}, combinatorial auctions~\cite{soda/FeldmanGL15, siamcomp/DuttingFKL20}, and general downward-closed constraints~\cite{stoc/Rubinstein16}. Ehsani et al.~\cite{soda/EhsaniHKS18} also studied the prophet secretary problem under matroid constraints and achieved a competitive ratio of $1-1/e$. Beating this ratio for general matroid constraints remains an important open question.

Finally, the unweighted and vertex-weighted online stochastic matching problems have attracted a lot of attention in the online algorithms community~\cite{focs/FeldmanMMM09,esa/BahmaniK10,wine/HaeuplerMZ11,mor/ManshadiGS12,mor/JailletL14,algorithmica/BrubachSSX20a,DBLP:conf/stoc/HuangS21,stoc/TangWW22}. Most of these works assumed i.i.d. arrivals of online vertices, in order to surpass the optimal $1-1/e$ competitive ratio of online (vertex-weighted) bipartite matching~\cite{stoc/KarpVV90, soda/AggarwalGKM11}.


\section{Prophet Secretary}


\subsection{Model}

Throughout the paper, we use the continuous arrival time formulation of the prophet secretary problem. 
There are $n$ items with non-negative values $v_1,\dots, v_n$ drawn independently from distributions $F_1,\dots, F_n$.
For the ease of exposition, we make two assumptions without loss of generality.
First, we consider \emph{discrete distributions}, since continuous distributions can be approximated up to an arbitrary accuracy by discrete distributions.
Second, we assume that the distributions have \emph{disjoint supports} so that the maximum value is always unique. 
This can be achieved without loss of generality by adding an infinitesimal $\varepsilon_i$ to the values from $F_i$ for tie-breaking.
We shall use $p_i^v$ to denote the probability that $v_i =v$.

The items arrive within a time horizon between $0$ and $1$, where the arrival time $t_i$ of item $i$ is drawn i.i.d.\ from the uniform distribution over $[0,1]$.
The algorithm, only knowing $F_1,\dots ,F_n$, is presented with item $i$ at its arrival time $t_i$. 
Upon item $i$'s arrival, the algorithm sees the its identity $i$, realized value $v_i$, and arrival time $t_i$,
and must decide whether to accept or reject item $i$ immediately.
If the algorithm accepts item $i$, it collects the value $v_i$; otherwise, the algorithm continues to the next item. 
The algorithm can accept at most one item and the goal is to maximize the expected value of the accepted item.

Following standard competitive analysis, we compare an online algorithm's expected objective to the expected offline optimum (a.k.a., the prophet), i.e. $\Ex{\max_i v_i}$.
An algorithm is $\Gamma$-competitive if its expected accepted value is at least $\Gamma$ times the expected offline optimum.

\paragraph{Continuous Arrival Time vs.\ Discrete Arrival Order.} We remark that the continuous time model is equivalent to the classical discrete random order model of prophet secretary.
The discrete random order model samples a permutation $\pi$ of the items uniformly at random over all $n!$ possible permutations.
Then, the items arrive in a sequence according to $\pi$. 
On one hand, any algorithm for the discrete random order model can be simulated in the continuous time model, by letting $\pi$ be the permutation of the items in ascending order of their arrival times $t_i$.
On the other hand, any algorithm for the continuous time model can be simulated in the discrete random order model by first sampling $t_{(1)} \leq t_{(2)} \leq \cdots \leq t_{(n)}$ i.i.d.\ from the uniform distribution over $[0,1]$, sorted in ascending order, and letting $t_{(\pi(i))}$ be the arrival time of item $i$.



\subsection{Our Result}

\begin{theorem}
\label[theorem]{thm:prophet-secretary}
    There exists a $0.688$-competitive algorithm for prophet secretary.
\end{theorem}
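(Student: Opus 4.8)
The plan is to stay entirely within the activation-based framework described above. An algorithm is specified by activation rates $a_i^v(t)$; item $i$ with value $v$ arriving at time $t$ is activated (and accepted if nothing has been accepted yet) with probability $g_i^v(t) = a_i^v(t)\, e^{-\int_0^t A_i(x)\dif x}$, where $A_i(t) = \Ex[v\sim F_i]{a_i^v(t)}$. The crucial structural identity is that the probability of accepting item $i$ with value $v$ at time $t$ factors as $p_i^v\cdot a_i^v(t)\cdot e^{-\int_0^t \sum_j A_j(x)\dif x}\dif t$. Hence, if we fix an \emph{invariant} schedule $\beta(t)$ and only use activation rates with $\sum_j A_j(t) \le \beta(t)$, then item $i$'s expected accepted value is at least $\sum_v p_i^v\, v \int_0^1 a_i^v(t)\, e^{-\int_0^t \beta(x)\dif x}\dif t$, which we compare to its share $\Ex{v_i\ind{v_i = \max_j v_j}}$ of the prophet benchmark. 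This decouples the items modulo the single global budget constraint, and reduces the theorem to designing per-item activation rates that are competitive against these shares while jointly respecting $\beta$.

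First I would handle the regime in which every item is \emph{small}, i.e.\ contributes only $o(1)$ to the benchmark. This is exactly \Cref{sec:prophet-secretary-step-activation}: with two-stage step-function activation rates sharing a common threshold time and a suitable $\beta$, one gets a $0.694$ competitive ratio. So all remaining difficulty comes from a ``large'' item, and I would isolate $i_0 = \argmax_i \Prx{v_i = \max_j v_j}$, recording two statistics: $x_0 = \Prx{v_{i_0} = \max_j v_j}$, and a quantity $h_0$ capturing the extent to which, conditioned on the other items' values, $i_0$ is the maximum with probability more than $\tfrac12$. Every item $j\ne i_0$ retains step-function activation rates with a shared threshold, but now the threshold and the residual budget $\sum_{j\ne i_0} A_j(t)\le\beta(t)$ they must obey are both chosen as functions of $(x_0,h_0)$, with the leftover budget earmarked for $i_0$.

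For $i_0$ itself, step-function \emph{rates} are inadequate: on an interval of constant rate the survival factor $e^{-\int_0^t A_{i_0}(x)\dif x}$ decays fast for a large item, which is precisely the effect that underlies the $0.675$ barrier for blind strategies. Instead I would prescribe $i_0$'s activation \emph{probability} $g_{i_0}^v(t)$ directly as a three-stage step function in $t$, and then back out the activation rates $a_{i_0}^v(t)$ that realize it; doing this for a single item keeps the computation manageable. The three-stage parameters are again tuned from $(x_0,h_0)$. The proof then concludes with a case analysis over $(x_0,h_0)\in[0,1]^2$, verifying in every case that the contribution guaranteed from $i_0$ plus the aggregate contribution guaranteed from the other items is at least $0.688$ times the benchmark, and that the combined feasibility constraint $A_{i_0}(t) + \sum_{j\ne i_0}A_j(t)\le\beta(t)$ holds throughout $[0,1]$.

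The main obstacle I expect is this final balancing act: $i_0$ and the remaining items draw on the same activation budget, so any schedule $\beta(t)$ and threshold that favors one disadvantages the other, and one needs a single family of schedules --- parametrized continuously by $(x_0,h_0)$ --- that keeps both contributions above the target across the whole parameter square. Secondary difficulties are converting $i_0$'s prescribed three-stage activation probability into honest nonnegative rates that still fit under $\beta$, and controlling the error terms when items that are neither small nor the unique largest are present, so that the clean small-item analysis can still be invoked.
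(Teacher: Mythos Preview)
Your plan is essentially the paper's: isolate $i_0$, give it a three-level step-function activation \emph{probability}, give every other item a two-level step-function activation \emph{rate} with a shared threshold, choose all thresholds and the non-$i_0$ budget from $(x_0,h_0)$, and then verify the per-item-value bound $\Pr[\text{accept }(i,v)]\ge 0.688\,x_i^v$ over the $(x_0,h_0)$ domain.

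Three places where your write-up would need tightening to match the actual argument. First, you do \emph{not} back out rates for $i_0$ or fold it into a single global $\beta(t)$: the paper keeps $i_0$ in probability form and, when analysing $i\ne i_0$, writes the survival factor as $(1-L(t))\cdot e^{-\int_0^t \sum_{j\ne i_0}A_j}$ with $L(t)=\int_0^t\Ex[v]{g_{i_0}^v}$, which stays piecewise-affine; converting would produce a non-piecewise-constant $A_{i_0}$ and complicate the integrals for no gain. Second, and more substantively, the claim that the non-$i_0$ budget can be chosen as a function of $(x_0,h_0)$ alone is not free: the minimum feasible first-stage total rate for the other items is $\sum_{i\ne i_0,v}(2x_i^v-p_i^v)^+$, which a priori depends on the whole instance. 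The paper proves a dedicated combinatorial lemma bounding $\sum_{i,v}(sx_i^v-p_i^v)^+$ by an explicit function $h_s(x_0)$ of $x_0$ alone (generalising the Manshadi--Oveis Gharan--Saberi $1-\ln 2$ bound), and this is what lets two scalars parametrise everything; you should anticipate needing such a lemma. Third, the final step is a computer-aided grid search over $(x_0,h_0)$ with Lipschitz error control (and, for the full $0.688$, an additional parameter $s$ optimised), not a by-hand case analysis; and the checked condition is the per-item-value ratio, not ``contribution of $i_0$ plus aggregate of the rest''.
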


Prior to our work, the state-of-the-art is a $0.672$-competitive algorithm~\cite{correaProphetSecretaryBlind2021, harbFishingBetterConstants2023}. This algorithm is a \emph{blind strategy};
\citet*{correaProphetSecretaryBlind2021} showed that no blind strategy can be better than $0.675$ competitive.
Our result breaks this barrier. 

Our result can also be viewed as an exploration on the trade-off between the complexity of the state space and the approximate optimality of the online algorithms.
On one hand, the online optimal algorithm based on dynamic programming has an exponential-size state space.
When it decides whether to accept an item $i$, the state consists of the item's identity $i$, value $v_i$, arrival time $t_i$, and the subset of items that have not yet arrived.
In particular, the last component makes the state space exponentially large (see \citet{duttingProphetSecretaryOnline2023} for a PTAS that discretizes of the state space down to polynomial in $n$ but doubly exponential in $\frac{1}{\varepsilon}$).
On the other hand, all existing algorithms in the literature of competitive analysis for prophet secretary, including the aforementioned blind strategies, decide whether to accept an item based only on this item's value and arrival time, but oblivious to its identity and the subset of remaining items.
By contrast, our algorithm will further take into account the current item's identity, on top of its value and arrival time, to decide whether to accept the item.





The rest of the section is organized as follows.
\Cref{subsec:prophet-framework} introduces the framework of activation-based online algorithms, and how to analyze their competitive ratios.
Then, \Cref{subsec:warm-ups} demonstrates two simple applications of this framework as warm-ups.
Finally, \Cref{subsec:step-act-rate} presents a $0.688$-competitive algorithm under this framework, and proves \Cref{thm:prophet-secretary}.

\subsection{Activation-Based Online Algorithms}
\label{subsec:prophet-framework}

\subsubsection{General Framework}

An \emph{activation-based online algorithm} is as follows. 
Upon the arrival of an item, we see its identity $i$, value $v \sim F_i$, and arrival time $t$.
Note that we suppress the subscript $i$ in the value and arrival time for notation simplicity as the item's identity $i$ will always be clear from context.
Then, we decide whether to activate the item or not based on an \emph{activation probability} $0 \leq g_i^{v}(t) \leq 1$, depending on $i$, $v$, and $t$, but importantly, independent of the subset of items that have not yet arrived.
Finally, we accept an item if it is the first activated item.
By definition, an activation-based online algorithm is characterized by the activation probabilities $g_i^v(t) \in [0,1]$.

\medskip
\begin{tcolorbox}[title=Activation Framework]
For time $t$ from $0$ to $1$:
\begin{enumerate}[(1)]
    \item If item $i$ arrives with value $v$, activate it with activation probability $g_i^v(t)$.
    \item Accept item $i$ if it is the first activated item.
\end{enumerate}
\end{tcolorbox}
\medskip


Now we show two basic properties of these algorithms.

\begin{lemma}
	\label{lem:activation-probability}
	The probability of activating item $i$ before time $\theta$ is $\int_0^\theta \Ex[v \sim F_i]{g_i^{v}(t)} \dd t$.
\end{lemma}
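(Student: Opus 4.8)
The plan is to compute the probability directly from the definition of the activation framework. The event ``item $i$ is activated before time $\theta$'' decomposes over the item's arrival time $t \in [0, \theta)$ and its realized value $v$. First I would condition on item $i$ arriving at time $t$: since arrival times are i.i.d.\ uniform on $[0,1]$, the density of $t_i$ at $t$ is $1$. Conditioned on arrival at time $t$, the value is drawn from $F_i$, and conditioned further on the value being $v$, the algorithm activates the item with probability exactly $g_i^v(t)$ by Step~(1) of the framework. Crucially, this activation decision depends only on $i$, $v$, and $t$, and not on which other items have arrived before $t$ or whether any of them were activated, so there is no conditioning subtlety here.

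Putting these together, the probability that item $i$ arrives at some time in $[t, t + \dif t)$ with value $v$ and is then activated is $p_i^v \cdot g_i^v(t) \dif t$. Summing over the support of $F_i$ gives $\Ex[v \sim F_i]{g_i^v(t)} \dif t$ for the probability of activating item $i$ in the infinitesimal interval around $t$. Since item $i$ arrives exactly once, these events for distinct arrival times $t$ are mutually exclusive, so I would integrate over $t$ from $0$ to $\theta$ to obtain
\[
    \int_0^\theta \Ex[v \sim F_i]{g_i^v(t)} \dd t
\]
as the probability that item $i$ is activated before time $\theta$, as claimed.

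This lemma is essentially a direct unpacking of definitions, so I do not anticipate a genuine obstacle; the only thing to be careful about is articulating why the activation decision is independent of the history (which is precisely the design choice baked into the activation framework — the activation probability $g_i^v(t)$ is defined to be independent of the set of items that have not yet arrived), and why the arrival-time events are disjoint so that the integral (rather than an inclusion--exclusion correction) is exact. A fully rigorous version would phrase this via the law of total probability over the continuum of arrival times, or equivalently discretize $[0,\theta]$ into small intervals, apply the disjointness, and take a limit; but at the level of this paper the one-line integral computation suffices.
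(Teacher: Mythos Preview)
Your proposal is correct and follows essentially the same approach as the paper: decompose into the sub-events that item $i$ arrives at time $t\le\theta$, has value $v$, and is activated (with probability $g_i^v(t)$), then integrate over $t$ and sum over $v$. Your discussion of why the activation decision is history-independent is even slightly more explicit than the paper's own version.
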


\begin{proof}
	We will first consider the probability of activating item $i$ \emph{with value $v$} before time $\theta$. 
	This event can be decomposed into three sub-events:
	(1) the item arrives at time $t \leq \theta$;
	(2) the item has value $v$, and 
	(3) the algorithm activates the item.
	
	Recall that the arrival time $t$ distributes uniformly over $[0, 1]$, and the probability of drawing value $v$ from $F_i$ is $\piv$.
	Note that $t$ and $v$ are independent.
	Further, conditioned on the first two sub-events, the third happens with probability $g_i^v(x)$. 
	Hence, the probability of activating item $i$ with value $v$ before time $\theta$ is:
	\[
	    \int_0^\theta  p_i^{v} g_i^{v}(t)\dd t
		~.
	\]	
	
	Summing over all possible values $v$, the probability of activating a item $i$ before time $\theta$ is: $\int_0^\theta \sum_v p_i^{v} g_i^{v}(t) \dd t = \int_0^\theta \Ex[v \sim F_i]{g_i^{v}(t)}\dd t$.
\end{proof}

\begin{lemma}
\label{lem:acceptance-probability}
	The probability of accepting item $i$ with value $v$ before time $\theta$ is:
	\[
		\int_0^\theta p_i^v g_i^{v}(t) \prod_{j\neq i} \left( 1- \int_0^{t} \Exlong[v \sim F_j]{g_j^{v}(x)} \dd x\right) \dd t
		~.
	\]
\end{lemma}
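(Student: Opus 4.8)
The plan is to compute the probability of accepting item $i$ with value $v$ in an infinitesimal time window $[t, t+\dd t]$, and then integrate over $t \in [0, \theta]$. For item $i$ to be accepted with value $v$ at time $t$, three things must happen: (1) item $i$ arrives at time $t$ with value $v$, (2) the algorithm activates item $i$, and (3) no other item $j \neq i$ has been activated before time $t$. The first event has density $p_i^v \dd t$ (using independence of arrival time and value, and that arrival time is uniform on $[0,1]$), and conditioned on it, the second happens with probability $g_i^v(t)$.

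The key step is handling event (3). I would first argue that the activation events of the different items are mutually independent, because each item's activation depends only on its own identity, value, and arrival time — all of which are independent across items — and crucially \emph{not} on the subset of items that have already arrived or on whether other items were activated. This independence is exactly the feature the activation framework was designed to guarantee. Given this, the probability that no item $j \neq i$ is activated before time $t$ factorizes as $\prod_{j \neq i} \bigl(1 - \Prx{\text{item } j \text{ activated before } t}\bigr)$, and by \Cref{lem:activation-probability} each factor equals $1 - \int_0^t \Ex[v \sim F_j]{g_j^v(x)} \dd x$. One subtlety: conditioning on item $i$ arriving at time $t$ does not affect the arrival times of the other items, since the $t_j$ are i.i.d.; so the conditioning in (3) is harmless.

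Putting the three pieces together, the density of the event ``accept item $i$ with value $v$ at time $t$'' is $p_i^v g_i^v(t) \prod_{j \neq i}\bigl(1 - \int_0^t \Ex[v \sim F_j]{g_j^v(x)} \dd x\bigr)$, and integrating over $t$ from $0$ to $\theta$ gives the claimed formula. I would also note that the events ``accept item $i$ with value $v$ at time $t$'' for distinct $(v, t)$ are disjoint, which justifies summing (integrating) their probabilities without inclusion–exclusion corrections.

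The main obstacle is making the independence argument in event (3) fully rigorous: one must be careful that the activation of item $j$ before time $t$ is a function only of $(t_j, v_j)$ restricted to $\{t_j < t\}$, and that these are jointly independent of $(t_i, v_i)$ and of each other, so that the joint probability of ``$i$ activated with value $v$ at time $t$ AND no $j \neq i$ activated before $t$'' genuinely factorizes. Once this is cleanly stated, the rest is a routine application of \Cref{lem:activation-probability} and linearity of integration.
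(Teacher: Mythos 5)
Your proof is correct and follows essentially the same approach as the paper's: decompose the acceptance event into the same three independent sub-events, compute each probability (invoking \Cref{lem:activation-probability} for the factors in the product), and integrate over $t \in [0,\theta]$. You are somewhat more explicit than the paper in justifying why the activation events factorize across items and why conditioning on $t_i = t$ is harmless, which is a welcome clarification but not a different route.
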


\begin{proof}
For any time $t\leq \theta$, the event that item $i$ is accepted at time $t$ can be decomposed into three independent sub-events:
(1) item $i$ is has value $v$ at time $t$; (2) item $i$ is activated at time $t$, and
(3) any item $j\neq i$ is not activated before time $t$. The first sub-event happens with probability $p_i^v$, and the second sub-event happens with probability $g_i^v(t)$. By \Cref{lem:activation-probability}, the probability of activating item $j$ before time $t$ is $\int_0^{t} \Ex[v \sim F_j]{g_j^{v}(x)} \dd x$, thus the third sub-event happens with probability $\prod_{j\neq i} \big( 1- \int_0^{t} \Ex[v \sim F_j]{g_j^{v}(x)} \dd x \big)$. 
Together, the probability of accepting item $i$ with value $v$ before time $\theta$ is as stated.
\end{proof}
\subsubsection{Activation Rates}
\label{subsec:act-rate-reconstruction}

This subsection presents an alternative representation of the activation probabilities $g_i^v(t)$ that will simplify parts of the subsequent analysis.
Formally, for each item $i$, we focus on activation probabilities in the following form:
\[
	g_i^{v}(t) = a_i^v(t)e^{-\int_0^t A_i(x) \dd x}~, \quad \forall i,v,t
\]
where $A_i(x)\eqdef \sum_{v} p_i^v a_i^v(x)=\Ex[v \sim F_i]{a_i^v(x)}$.
We call $a_i^v(t)$ the \emph{activation rate} of $v$ at time $t$.
We say that the activation rate $a_i^v(t)$ is \emph{well-defined} if the corresponding activation probability $g_i^v(t)$ is at most $1$ for every $i,t,v$.

\begin{lemma}
\label{lemma:single-act-rate}
The probability of activating item $i$ before a threshold time $\theta$ is $1-e^{-\int_0^\theta A_i(t) \dd t}$.
\end{lemma}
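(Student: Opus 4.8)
The plan is to reduce Lemma~\ref{lemma:single-act-rate} to Lemma~\ref{lem:activation-probability} by a direct substitution and an elementary integration. By Lemma~\ref{lem:activation-probability}, the probability of activating item $i$ before time $\theta$ equals $\int_0^\theta \Ex[v \sim F_i]{g_i^v(t)} \dd t$. Plugging in the assumed form $g_i^v(t) = a_i^v(t) e^{-\int_0^t A_i(x) \dd x}$ and using linearity of expectation over the randomness of $v \sim F_i$, the factor $e^{-\int_0^t A_i(x) \dd x}$ pulls out (it does not depend on $v$), so the integrand becomes $\big(\Ex[v \sim F_i]{a_i^v(t)}\big) e^{-\int_0^t A_i(x) \dd x} = A_i(t)\, e^{-\int_0^t A_i(x) \dd x}$, where the last equality is exactly the definition $A_i(x) \eqdef \Ex[v\sim F_i]{a_i^v(x)}$.

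It then remains to evaluate $\int_0^\theta A_i(t)\, e^{-\int_0^t A_i(x)\dd x} \dd t$. First I would introduce $\Phi(t) \eqdef \int_0^t A_i(x)\dd x$, so that $\Phi'(t) = A_i(t)$ and $\Phi(0) = 0$; the integrand is $\Phi'(t)\, e^{-\Phi(t)} = -\frac{\dd}{\dd t}\big(e^{-\Phi(t)}\big)$. By the fundamental theorem of calculus the integral equals $-\big(e^{-\Phi(\theta)} - e^{-\Phi(0)}\big) = 1 - e^{-\int_0^\theta A_i(t)\dd t}$, which is the claimed expression.

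I do not expect any real obstacle here; this lemma is purely a bookkeeping identity once the activation-rate parametrization is in place. The only points worth a word of care are (i) that $A_i$ is a well-defined, integrable function so that the substitution $\Phi(t) = \int_0^t A_i(x)\dd x$ is legitimate and $\Phi$ is absolutely continuous (this holds since each $a_i^v$ is assumed well-defined and the distributions are discrete with finitely many values, so $A_i$ is a finite nonnegative combination of the $a_i^v$), and (ii) that pulling the $v$-independent exponential factor out of $\Ex[v\sim F_i]{\cdot}$ is valid, which is immediate. If one wants to avoid even invoking absolute continuity, one can instead verify the identity by differentiating both sides in $\theta$: the right-hand side has derivative $A_i(\theta)e^{-\int_0^\theta A_i(x)\dd x}$, which matches the integrand of the left-hand side, and both sides vanish at $\theta = 0$.
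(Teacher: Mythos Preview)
Your proposal is correct and follows essentially the same approach as the paper: both start from Lemma~\ref{lem:activation-probability}, substitute $g_i^v(t) = a_i^v(t)e^{-\int_0^t A_i(x)\dd x}$, pull the $v$-independent exponential out of the expectation to obtain $A_i(t)e^{-\int_0^t A_i(x)\dd x}$, and integrate. The paper simply writes the final equality in one line, whereas you spell out the antiderivative $\Phi(t)=\int_0^t A_i(x)\dd x$ and the regularity remarks, but there is no substantive difference.
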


This lemma justifies why we call $a_i^v(t)$ the activation rate, as its conclusion resembles the form of a Poisson activation process with rate $A_i(t)$.
\begin{proof}
Replacing $g_i^v (t)$ with $a_i^v(t)e^{-\int_0^t A_i(x)\dd x}$ in \Cref{lem:activation-probability}, the probability of activating item $i$ before time $\theta$ is:
\[
\int_0^\theta \Ex[v]{a_i^{v}(t)} e^{-\int_0^t A_i(x) \dd x} \dd t = \int_0^\theta A_i(t) e^{-\int_0^t A_i(x) \dd x} \dd t = 1 - e^{-\int_0^\theta A_i(t) \dd t}~.
\]
\end{proof}

\begin{lemma}
\label{lem:all-act-rate}
	The probability of accepting item $i$ with value $v$ is:
	\[
        \int_0^1 p_i^v a_i^v(t) e^{-\int_0^t \sum_j A_j(x) \dd x} \dd t
        ~.
	\]
\end{lemma}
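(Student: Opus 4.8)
The statement we want is just Lemma~\ref{lem:acceptance-probability} specialized to the activation-rate representation $g_i^v(t) = a_i^v(t)\, e^{-\int_0^t A_i(x)\dif x}$ and with $\theta = 1$. So the plan is to start from the conclusion of Lemma~\ref{lem:acceptance-probability} and substitute. First I would write out the integrand: the factor $p_i^v g_i^v(t)$ becomes $p_i^v a_i^v(t) e^{-\int_0^t A_i(x)\dif x}$, and the product $\prod_{j\neq i}\bigl(1 - \int_0^t \Ex[v\sim F_j]{g_j^v(x)}\dif x\bigr)$ can be rewritten using Lemma~\ref{lemma:single-act-rate}: the probability of activating item $j$ before time $t$ is exactly $1 - e^{-\int_0^t A_j(x)\dif x}$, so each factor in the product is simply $e^{-\int_0^t A_j(x)\dif x}$.

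The second step is to collect the exponentials. The product $\prod_{j\neq i} e^{-\int_0^t A_j(x)\dif x}$ equals $e^{-\int_0^t \sum_{j\neq i} A_j(x)\dif x}$, and multiplying by the leftover factor $e^{-\int_0^t A_i(x)\dif x}$ coming from $g_i^v(t)$ gives $e^{-\int_0^t \sum_j A_j(x)\dif x}$, now a sum over all $j$ including $i$. Putting this back into the integral yields exactly
\[
	\int_0^1 p_i^v a_i^v(t)\, e^{-\int_0^t \sum_j A_j(x)\dif x}\dif t~,
\]
which is the claimed formula. That is really the whole argument; it is a one-line substitution once the two prior lemmas are invoked in the right places.

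There is no serious obstacle here — the only thing to be careful about is the bookkeeping of the exponential factors: one must notice that the $e^{-\int_0^t A_i(x)\dif x}$ hidden inside $g_i^v(t)$ combines with the product over $j\neq i$ to restore the full sum $\sum_j$, rather than leaving an asymmetric $\sum_{j\neq i}$. I would also note in passing that this requires $a_i^v$ to be well-defined (so that $g_i^v(t)\le 1$ and Lemma~\ref{lem:acceptance-probability} genuinely applies), but that is implicit in the setup of this subsection.
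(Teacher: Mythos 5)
Your proof is correct and matches the paper's argument: both substitute $g_i^v(t) = a_i^v(t)e^{-\int_0^t A_i(x)\dif x}$ into Lemma~\ref{lem:acceptance-probability}, use Lemma~\ref{lemma:single-act-rate} to replace each factor $1-\int_0^t\Ex[v\sim F_j]{g_j^v(x)}\dif x$ by $e^{-\int_0^t A_j(x)\dif x}$, and then combine the exponentials over all $j$ (in fact the paper's own proof contains a self-referential typo, citing Lemma~\ref{lem:all-act-rate} where it means Lemma~\ref{lemma:single-act-rate}, which you correctly avoid).
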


\begin{proof}
We substitute $g_i^v (t)$ with $a_i^v(t)e^{-\int_0^t A_i(x)\dd x}$ in \Cref{lem:acceptance-probability} and the probability that item $j$ is not activated before time $t$ with $e^{-\int_0^t A_j(x) \dd x}$ by \Cref{lem:all-act-rate}. Hence, the probability of accepting item $i$ with value $v$ is:
\[
\int_0^1 p_i^v a_i^v(t) e^{-\int_0^t A_i(x) \dd x} \prod_{j\neq i} \left( e^{-\int_0^t A_j(x) \dd x} \right) \dd t
= \int_0^1 p_i^v a_i^v(t) e^{-\int_0^t \sum_j A_j(x) \dd x} \dd t~.
\]
\end{proof}

\subsubsection{Sufficient Condition for Competitive Analysis}
\label{subsec:offline-opt-competitive-ratio}
For every item $i$ and value $v$, we denote the (conditional) probability that $v_i=v$ is the maximum value as the following:
\[
x_i^v \eqdef p_i^v \cdot \prod_{j\neq i} \Prx{v_j < v} \quad \text{and} \quad \rho_i^v \eqdef \prod_{j \neq i} \Prx{v_j < v}~.
\]
%

We list the following properties of $x_i^v$ and $\rho_i^v$ that are straightforward to verify.
\begin{lemma}
	\label{lem:prophet-secretary-properties}
1) $x_i^v = p_i^v \rho_i^v$; 2) $\rho_i^v\leq 1$ is non-decreasing in $v$; and 3) $\sum_{i,v} x_i^v = 1$.
\end{lemma}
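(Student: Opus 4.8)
This is a short "straightforward to verify" lemma, so the proof is really a sequence of direct calculations rather than a deep argument. The plan is to dispatch the three claims in order, each essentially by unwinding the definitions $x_i^v = p_i^v \prod_{j \ne i} \Prx{v_j < v}$ and $\rho_i^v = \prod_{j\ne i}\Prx{v_j < v}$, and using the independence of the values $v_1,\dots,v_n$ together with the disjoint-support assumption (which guarantees the maximum is unique, so the events "$v_i = v$ is the max" are genuinely disjoint across $(i,v)$).

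For claim 1), I would simply observe that $x_i^v = p_i^v \cdot \prod_{j\ne i}\Prx{v_j < v}$ by definition and $\rho_i^v = \prod_{j\ne i}\Prx{v_j<v}$ by definition, so $x_i^v = p_i^v \rho_i^v$ is immediate. For claim 2), each factor $\Prx{v_j < v}$ is a probability, hence lies in $[0,1]$; a product of numbers in $[0,1]$ is in $[0,1]$, giving $\rho_i^v \le 1$. Monotonicity in $v$ follows because each factor $\Prx{v_j < v} = F_j(v^-)$ is non-decreasing in $v$ (a CDF is non-decreasing), and a product of non-negative non-decreasing functions is non-decreasing. For claim 3), I would interpret $x_i^v$ as the probability of the event $\{v_i = v \text{ and } v_j < v \text{ for all } j \ne i\}$, i.e.\ the event that the unique maximum value is realized by item $i$ and equals $v$; this uses independence to factor the joint probability as $p_i^v \prod_{j\ne i}\Prx{v_j<v}$. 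Because the supports are disjoint (by the tie-breaking assumption), these events are pairwise disjoint over all choices of $(i,v)$, and their union is the whole probability space (some item attains the unique maximum almost surely). Hence $\sum_{i,v} x_i^v = 1$.

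The "main obstacle" here is essentially nonexistent — the only subtlety worth a sentence is making the disjointness/exhaustiveness argument for claim 3) precise, i.e.\ spelling out that the tie-breaking perturbation $\varepsilon_i$ ensures $\arg\max_j v_j$ is almost surely a single item with a well-defined value, so that $\{x_i^v\}_{i,v}$ really is a probability distribution on (item, value) pairs. I would state that explicitly and then the summation identity is just total probability. No further machinery is needed.
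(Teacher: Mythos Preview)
Your proposal is correct and matches the paper's treatment: the paper simply states that these properties are ``straightforward to verify'' and gives no proof, and your write-up supplies exactly the routine verifications (unwinding the definitions, monotonicity of CDFs, and the disjoint-support/unique-maximum observation for part 3) that justify this. Nothing more is needed.
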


Our competitive analysis is induced by the following stochastic dominance.
\begin{lemma}
	\label{lem:prophet-secretary-sufficient-condition}	
	An online algorithm is $\Gamma$-competitive if for any item $i$ and any value $v$:
	\[
	    \int_0^1 p_i^v g_i^{v}(t) \prod_{j\neq i} \left( 1- \int_0^{t} \Exlong[v_j \sim F_j]{g_j^{v_j}(x)} \dd x\right) \dd t \geq \Gamma \cdot x_i^v = \Gamma \cdot p_i^v \rho_i^v
	    ~.
	\]
\end{lemma}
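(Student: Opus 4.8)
The plan is to show that the stated inequality forces the online algorithm's expected value to dominate $\Gamma$ times the prophet's expected value, via a term-by-term comparison indexed by (item, value) pairs. First I would recall that by \Cref{lem:acceptance-probability} (taking $\theta = 1$), the left-hand side of the hypothesized inequality is exactly the probability that the algorithm accepts item $i$ with value $v$. Call this quantity $\mathrm{ALG}_i^v$. Then the algorithm's expected accepted value is $\sum_{i,v} v \cdot \mathrm{ALG}_i^v$, since the events ``accept item $i$ with value $v$'' are disjoint across $(i,v)$ and the algorithm collects value $v$ on each.

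Next I would express the prophet benchmark in the same form. Because the supports are disjoint, the maximum value is almost surely unique, so the event ``$v_i = v$ is the maximum'' over all $(i,v)$ partitions the probability space (up to the event that all values are zero, which contributes nothing). Hence $\Ex{\max_j v_j} = \sum_{i,v} v \cdot \Prx{v_i = v \text{ is the max}} = \sum_{i,v} v \cdot x_i^v$, using the definition $x_i^v = p_i^v \prod_{j \neq i} \Prx{v_j < v}$ from just before the statement (and \Cref{lem:prophet-secretary-properties}, part 1, to match notation). Now the hypothesis says precisely $\mathrm{ALG}_i^v \geq \Gamma \cdot x_i^v$ for every $(i,v)$. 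Multiplying by $v \geq 0$ and summing over all $(i,v)$ gives
\[
    \sum_{i,v} v \cdot \mathrm{ALG}_i^v ~\geq~ \Gamma \sum_{i,v} v \cdot x_i^v ~=~ \Gamma \cdot \Ex{\max_j v_j}~,
\]
which is the definition of $\Gamma$-competitiveness.

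The two things to be careful about are exactly the two ``without loss of generality'' reductions set up in the model subsection: the disjoint-supports assumption is what makes the argmax unique so that the prophet's value decomposes cleanly as $\sum_{i,v} v\, x_i^v$, and the discreteness assumption is what makes these sums (rather than integrals) legitimate. I would state these dependencies explicitly rather than glossing over them. I do not anticipate a genuine obstacle here — this lemma is really just unwinding definitions — so the ``hard part'' is merely making sure the accounting is airtight: that \Cref{lem:acceptance-probability} is being invoked with $\theta=1$, that no value is double-counted, and that the non-negativity of $v$ is what licenses summing the pointwise inequalities. Optionally, I would remark that this is a \emph{stochastic dominance} argument: the hypothesis says the distribution of the accepted value stochastically dominates $\Gamma$ times (in a scaled sense, value-by-value) the distribution of the prophet's value, which is a stronger statement than the competitive ratio bound and explains the phrasing in the lemma's preamble.
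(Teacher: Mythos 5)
Your proof is correct and takes essentially the same approach as the paper's: you multiply the pointwise inequality by $v$ and sum over all $(i,v)$, identifying the left-hand side (via \Cref{lem:acceptance-probability} at $\theta=1$) with the algorithm's expected value and the right-hand side with $\Gamma$ times the prophet's expected value. The paper's proof is a single sentence stating exactly this; your version just spells out the bookkeeping and the role of the disjoint-supports and discreteness assumptions.
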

\begin{proof}
    Multiplying the inequality by $v$ and summing over all $v$ and $i$, the left hand side gives the gain from the algorithm, and the right hand side is $\Gamma$ of the offline optimal gain.
\end{proof}

%
%
%

\subsection{Warm-ups}
\label{subsec:warm-ups}

This subsection presents two warm-up algorithms that use simple activation rates.
The first algorithm considers a constant activation rate $a_i^v(t) = \rho^v_i$ for any item $i$ and any value $v$.
The second algorithm considers a threshold time $\threshold$, and lets $a_i^v(t)$ be a step function that changes its value at time $\threshold$ for any item $i$ and value $v$.

\subsubsection{Constant Activation Rates}
\label{subsubsec:constant-act-rate}


We now present the Constant Activation Rates algorithm, and prove that it is $1-\frac{1}{e}$ competitive.
We note that it is exactly the $(1-\frac{1}{e})$-selectable RCRS algorithm by \citet{esa/LeeS18}, rewritten in our activation-based framework.

\smallskip

\begin{tcolorbox}[title=Constant Activation Rates~\cite{esa/LeeS18}]
    \label{algo:const-act-rate-0.632}
    For time $t$ from $0$ to $1$:
    \begin{enumerate}[(a)]
        \item If item $i$ arrives with value $v$, activate it with activation rate $a_i^v(t) = \rho_i^v$.
        \item Accept item $i$ if it is the first activated item.
    \end{enumerate}
\end{tcolorbox}

\begin{theorem}
    Constant Activation Rates is $\left(1-\frac{1}{e}\right)$-competitive.
\end{theorem}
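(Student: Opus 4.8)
The plan is to verify the sufficient condition in \Cref{lem:prophet-secretary-sufficient-condition} with $\Gamma = 1 - \frac{1}{e}$, using the activation-rate representation developed in \Cref{subsec:act-rate-reconstruction}. Since the algorithm sets $a_i^v(t) = \rho_i^v$ for all $t$, the overall activation rate of item $i$ is the constant $A_i(t) = \Ex[v \sim F_i]{\rho_i^v} = \sum_v p_i^v \rho_i^v = \sum_v x_i^v$, which I will abbreviate as $x_i$. By \Cref{lem:prophet-secretary-properties}(3), $\sum_i x_i = \sum_{i,v} x_i^v = 1$, so the total rate $\sum_j A_j(x) = \sum_j x_j = 1$ is constant in time.

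\textbf{Main computation.} Plugging the constant rates into \Cref{lem:all-act-rate}, the probability of accepting item $i$ with value $v$ is
\[
	\int_0^1 p_i^v \rho_i^v \, e^{-\int_0^t \sum_j x_j \dd x} \dd t
	~=~ p_i^v \rho_i^v \int_0^1 e^{-t} \dd t
	~=~ \left(1 - \tfrac{1}{e}\right) p_i^v \rho_i^v
	~=~ \left(1 - \tfrac{1}{e}\right) x_i^v
	~.
\]
This matches the target in \Cref{lem:prophet-secretary-sufficient-condition} with equality, so the sufficient condition holds with $\Gamma = 1 - \frac{1}{e}$ and the theorem follows. I should also remark that the activation rates are well-defined: the corresponding activation probability is $g_i^v(t) = \rho_i^v e^{-\int_0^t A_i(x)\dd x} = \rho_i^v e^{-x_i t} \le \rho_i^v \le 1$ by \Cref{lem:prophet-secretary-properties}(2), so the algorithm is legitimate.

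\textbf{Expected obstacle.} There is essentially no analytic obstacle here — the calculation is a one-line integral once the framework is in place. The only things to be careful about are (i) recording that $\sum_{i,v} x_i^v = 1$ makes the exponent exactly $-t$ rather than something instance-dependent, which is the crux of why the constant $1-\frac1e$ pops out; and (ii) confirming well-definedness of the activation rate so that \Cref{lem:all-act-rate} legitimately applies. Everything else is a direct substitution into the lemmas already established in \Cref{subsec:act-rate-reconstruction}.
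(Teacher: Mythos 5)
Your proof is correct and follows essentially the same route as the paper: compute $A_i(t)=\sum_v x_i^v$, invoke \Cref{lem:all-act-rate} with $\sum_{i,v} x_i^v = 1$ to reduce the acceptance probability to $(1-\tfrac1e)\,x_i^v$, and conclude via \Cref{lem:prophet-secretary-sufficient-condition}. The extra remark checking well-definedness of the activation rates (that $g_i^v(t)\le 1$) is a small but sound addition that the paper leaves implicit.
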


\begin{proof}
	Recall that by definition of $A_i(t)$, we have $A_i(t) = \sum_v p_i^v \rho_i^v = \sum_v x_i^v$. According to \Cref{lem:all-act-rate}, the probability of accepting item $i$ with value $v$ is:
    \begin{align*}
    \int_0^1 p_i^v \rho_i^v e^{-\sum_{j}\int_0^t A_{j}(x)\dd x} \dd t
    ~=~ &x_i^v \int_0^1 e^{-t \cdot \sum_{j}\sum_v x_j^v} \dd t 
    ~=~ x_i^v \int_0^1 e^{-t} \dd t = \left(1-\frac{1}{e}\right) x_i^v.
    \end{align*}
    Here we use the property that $\sum_j \sum_v x_j^v = 1$ (\Cref{lem:prophet-secretary-properties}).
    Further by \Cref{lem:prophet-secretary-sufficient-condition}, we get that the algorithm is $\left(1-\frac{1}{e}\right)$-competitive .
\end{proof}

Why are constant activation rates suboptimal?
Intuitively, we would like to let the activation rate be larger, e.g., converging to $1$ when the time approaches $1$, because at that time there would be few items left and thus, we should take any occurring item. 
Accordingly, the activation rate should be smaller at the beginning, unless the realized value is very high, because we anticipate more items to come in the future.


\subsubsection{Step Activation Rates}
\label{sec:prophet-secretary-step-activation}

The next warm-up algorithm implements the above idea by letting the activation rate $a_i^v(t)$ for any item $i$ and value $v$ be a step function that increases at a threshold time $\threshold$.
We maintain the invariant that the activation rates in the two stages sum to $2 \cdot \rho^v_i$, but let the rate in the second stage be as high as possible.


\begin{tcolorbox}[title=Step Activation Rates]
    \label{algo:step-act-rate-0.694}
    Parameter: $\threshold = 0.367$.\\[2ex]
	For time $t$ from $0$ to $1$:
    \begin{enumerate}[(a)]
        \item If item $i$ arrives with value $v$, activate it with activation rate:
        \[
        a_i^v(t)=\left\{ \begin{aligned}
            (2\rho_i^v - 1)^+ & \quad \text{if } t\in [0, \threshold),\\
            2\rho_i^v - (2\rho_i^v - 1)^+ & \quad \text{if } t\in [\threshold, 1].
        \end{aligned} \right.
        \]
        \item Accept item $i$ if it is the first activated item.
    \end{enumerate}
\end{tcolorbox}
    
\bigskip
    
We will analyze this algorithm under the assumption that:
\begin{equation}
	\label{eqn:jl-assumption}	
    h \eqdef \sum_{i,v} (2x_i^v - p_i^v)^+ \le 1 - \ln 2
    ~.
\end{equation}
This is inspired by a property from the stochastic online matching literature proved by \citet{mor/ManshadiGS12}.
It would hold up to an $o(1)$ additive error if every item $i$ is small in the sense that it has the highest value with probability only $o(1)$, namely, $\sum_v x_i^v=o(1)$. A formal proof of this claim is deferred to \Cref{app:h-function}.

\paragraph{Parameter of the Invariant.}
We remark that there is actually a parameter $s$ for the invariant that we can optimize for the above algorithm: the activation rates in the two stages sum to $s \cdot \rho_i^v$. In the main body of the paper, we fix $s$ to be $2$ both in this warm-up algorithm and our main algorithm in the next subsection. This simplification makes it clear the connection between the assumption \eqref{eqn:jl-assumption} and the constraint~\cite{mor/JailletL14} from previous works. The algorithm for general $s$ is provided in \Cref{app:algo-general-s}.

\begin{lemma}
\label{cl:ai}
    We have: 
    \[
    	\sum_j A_j(t) =
    	\begin{cases}
    		h & \mbox{if $0 \le t < \threshold$;} \\
    		2-h & \mbox{if $\threshold \le t \le 1$.}
    	\end{cases}
	\]
\end{lemma}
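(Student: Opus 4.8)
The plan is to compute $\sum_j A_j(t)$ directly from the definition $A_j(t) = \sum_v p_j^v a_j^v(t)$, handling the two time stages separately. Recall that by \Cref{lem:prophet-secretary-properties} we have $x_j^v = p_j^v \rho_j^v$, so $p_j^v (2\rho_j^v - 1)^+ = (2 p_j^v \rho_j^v - p_j^v)^+ = (2 x_j^v - p_j^v)^+$, using that $p_j^v \ge 0$ to pull it inside the positive part. Summing over $j$ and $v$ then recovers exactly the quantity $h$ from \eqref{eqn:jl-assumption}.

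For the first stage $t \in [0, \threshold)$, the activation rate is $a_j^v(t) = (2\rho_j^v - 1)^+$, so $A_j(t) = \sum_v p_j^v (2\rho_j^v - 1)^+ = \sum_v (2x_j^v - p_j^v)^+$, and summing over $j$ gives $\sum_j A_j(t) = \sum_{j,v} (2x_j^v - p_j^v)^+ = h$. For the second stage $t \in [\threshold, 1]$, the activation rate is $a_j^v(t) = 2\rho_j^v - (2\rho_j^v - 1)^+$, so $A_j(t) = \sum_v p_j^v \big( 2\rho_j^v - (2\rho_j^v-1)^+ \big) = 2\sum_v p_j^v \rho_j^v - \sum_v (2x_j^v - p_j^v)^+ = 2 \sum_v x_j^v - \sum_v (2x_j^v - p_j^v)^+$. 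Summing over $j$ and using $\sum_{j,v} x_j^v = 1$ (\Cref{lem:prophet-secretary-properties}) gives $\sum_j A_j(t) = 2 \cdot 1 - h = 2 - h$, as claimed.

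There is essentially no obstacle here — the statement is just the bookkeeping that results from the chosen step activation rates together with the two identities in \Cref{lem:prophet-secretary-properties}. The only mild point to be careful about is the interchange $p_j^v (2\rho_j^v - 1)^+ = (2x_j^v - p_j^v)^+$, which is valid precisely because $p_j^v$ is a nonnegative scalar, and the fact that the sums are finite (guaranteed by the discreteness assumption on the distributions), so reordering the double sum over $j$ and $v$ is unproblematic.
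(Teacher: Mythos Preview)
Your proof is correct and follows essentially the same approach as the paper's own proof: unpack $A_j(t)$ from the definition, use $p_j^v(2\rho_j^v-1)^+=(2x_j^v-p_j^v)^+$ to recognize $h$, and use $\sum_{j,v} x_j^v=1$ to get $2-h$ in the second stage. If anything, your write-up is slightly more explicit about why pulling $p_j^v$ inside the positive part is valid.
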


\begin{proof}
    By the definition of $A_j(t)$'s, we have 
    $\sum_j A_j(t) = \sum_j \sum_v p_j^v a_j^v(t)$. Further by the definition of $h$, we have $\sum_j \sum_v p_j^v (2\rho_j^v-1)^+=\sum_j \sum_v (2x_j^v - p_j^v)^+=h$. Then, according to \Cref{lem:prophet-secretary-properties}, we have $\sum_j \sum_v p_j^v \rho_j^v =\sum_j \sum_v x_j^v =1$, and thus, $\sum_j \sum_v p_j^v (2\rho_j^v-(2\rho_j^v-1)^+)=2-h$.
\end{proof}


\begin{theorem}
	\label{thm:step-activation}
	Assuming~\eqref{eqn:jl-assumption}, Step Activation Rates is $0.694$-competitive.	
\end{theorem}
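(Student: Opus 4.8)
The plan is to apply the sufficient condition from \Cref{lem:prophet-secretary-sufficient-condition} in its activation-rate form via \Cref{lem:all-act-rate}: it suffices to show that for every item $i$ and value $v$, the probability of accepting item $i$ with value $v$ is at least $0.694 \cdot x_i^v$. Using \Cref{cl:ai}, the ``survival'' factor $e^{-\int_0^t \sum_j A_j(x)\dd x}$ has an explicit closed form: it equals $e^{-ht}$ for $t\in[0,\threshold)$ and $e^{-h\threshold}\cdot e^{-(2-h)(t-\threshold)}$ for $t\in[\threshold,1]$. Plugging the step activation rate $a_i^v(t)$ into \Cref{lem:all-act-rate} therefore reduces the acceptance probability to a sum of two elementary integrals — one over $[0,\threshold)$ with integrand $p_i^v (2\rho_i^v-1)^+ e^{-ht}$, and one over $[\threshold,1]$ with integrand $p_i^v\bigl(2\rho_i^v-(2\rho_i^v-1)^+\bigr)e^{-h\threshold}e^{-(2-h)(t-\threshold)}$ — both of which integrate in closed form. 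Dividing through by $x_i^v = p_i^v\rho_i^v$, the competitive ratio bound we must verify becomes a function of the single scalar $\rho := \rho_i^v \in [0,1]$ and of $h$.

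The next step is a case analysis on whether $\rho \le \tfrac12$ or $\rho > \tfrac12$, since $(2\rho-1)^+$ vanishes in the former regime. When $\rho \le \tfrac12$, the first-stage rate is $0$, so acceptance happens only in the second stage at rate $2\rho$, and after dividing by $\rho$ the ratio is $2 e^{-h\threshold}\cdot\frac{1-e^{-(2-h)(1-\threshold)}}{2-h}$, which is \emph{independent of $\rho$}. When $\rho > \tfrac12$, write $r = 2\rho-1 \in (0,1]$; the ratio becomes an affine function of $r/\rho$ (equivalently, a monotone function of $\rho$ on $(\tfrac12,1]$), so it is minimized at one of the two endpoints $\rho = \tfrac12$ or $\rho = 1$. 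Thus it suffices to check the bound at $\rho \le \tfrac12$ (one expression) and at $\rho = 1$ (another expression), i.e., two one-variable inequalities in $h$.

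Finally, I would use the hypothesis \eqref{eqn:jl-assumption}, namely $h \le 1-\ln 2$, together with $\threshold = 0.367$, to verify these two inequalities. Each reduces to checking that a smooth function of $h$ on $[0, 1-\ln 2]$ stays above $0.694$; the natural approach is to show the relevant expressions are monotone in $h$ (I expect the ``$\rho\le\tfrac12$'' expression to be decreasing in $h$, so its worst case is $h = 1-\ln 2$, where $e^{-h\threshold}$ and the remaining factors can be bounded numerically, while the ``$\rho=1$'' case should be checked to be worst at $h=0$ or handled similarly) and then plug in the extreme value of $h$. The main obstacle is not conceptual but the careful bookkeeping in the $\rho>\tfrac12$ case: one must confirm the direction of monotonicity in $\rho$ — i.e., that increasing the first-stage rate while correspondingly decreasing the second-stage rate (keeping the sum at $2\rho$) does not hurt the acceptance probability given the survival factor is larger early on — and then confirm that the resulting endpoint inequality in $h$ actually holds throughout $[0,1-\ln 2]$ with the chosen $\threshold$, rather than only at a single point. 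The choice $\threshold = 0.367$ is presumably the (approximate) optimizer balancing the two endpoint constraints, so I would expect both inequalities to be nearly tight.
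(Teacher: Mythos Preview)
Your plan is correct and matches the paper's proof in structure: compute the acceptance probability via \Cref{lem:all-act-rate} and \Cref{cl:ai}, obtaining $p_i^v$ times a combination of two explicit integrals---call them $(a) = \frac{1-e^{-h\threshold}}{h}$ over $[0,\threshold)$ and $(b) = \frac{e^{-h\threshold}-e^{-(2-h)+(2-2h)\threshold}}{2-h}$ over $[\threshold,1]$---with nonnegative weights $(2\rho_i^v-1)^+$ and $2\rho_i^v-(2\rho_i^v-1)^+$ that sum to $2\rho_i^v$. The paper sidesteps your case split on $\rho_i^v$ entirely: since the weights are nonnegative and sum to $2\rho_i^v$, it suffices to show $(a),(b)\ge 0.347$, whence the combination is at least $2\rho_i^v\cdot 0.347 = 0.694\,\rho_i^v$. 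Both $(a)$ and $(b)$ are decreasing in $h$, and $\threshold=0.367$ is chosen precisely so that $(a)\approx(b)\approx 0.347$ at $h=1-\ln 2$; this handles all $\rho_i^v$ at once. One correction to your speculation: the $\rho=1$ endpoint (the check $(a)+(b)\ge 0.694$) is \emph{not} worst at $h=0$---both $(a)$ and $(b)$ decrease in $h$, so both of your endpoint inequalities are tightest at $h=1-\ln 2$, which is exactly where the choice of $\threshold$ makes them (nearly) tight.
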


\begin{proof}
We denote the cumulative total activation rates up to time $t$ as:
\[
 	G(t) \eqdef \int_0^t \sum_j A_j(t) \dd y =
 	\begin{cases}
		h \cdot t & \mbox{if $0 \le t < \threshold$;} \\
		h \cdot \threshold + (2-h) \cdot (t-\threshold) & \mbox{if $\threshold \le t \le 1$.}
 	\end{cases}
\]

By \Cref{lem:all-act-rate}, the probability of accepting item $i$ when its value is $v$ equals:

\begin{align*}
    & \int_0^1 \piv a_i^v(t) e^{-\int_0^t \sum_j A_j(t) \dd y} \dd t \\
    & \qquad
    = ~ \piv \cdot \bigg((2\rho_i^v - 1)^+ \int_0^\threshold e^{-G(t)} \dd t + \big( 2 \rho_i^v - (2\rho_i^v - 1)^+ \big) \int_{\threshold}^1 e^{-G(t)} \dd t \bigg)\\
    & \qquad
    = ~ \piv \cdot \bigg( (2\rho_i^v - 1)^+ \cdot \underbrace{\frac{1 - e^{-h \threshold}}{h}}_{(a)} ~+~ \big( 2 \rho_i^v - (2\rho_i^v - 1)^+ \big) \cdot \underbrace{\frac{e^{-h \threshold} - e^{-(2-h) + (2-2h) \threshold}}{2-h}}_{(b)} \bigg)
    ~.
\end{align*}

Our choice of threshold time $\beta \approx 0.367$ makes $(a) \approx (b) > 0.347$, i.e., half the claimed competitive ratio, when $h = 1 - \ln 2$.
For smaller $h$, we still have $(a), (b) > 0.347$ because both $(a)$ and $(b)$ are decreasing in $h$.
%
%
%
%
Hence the above bound is at least:
\[
	\piv \cdot \Big( (2\rho_i^v - 1)^+ + \big( 2 \rho_i^v - (2\rho_i^v - 1)^+ \big) \Big) \cdot 0.347 = 0.694 \cdot \piv \rho_i^v
	~.
\]

By~\Cref{lem:prophet-secretary-sufficient-condition}, this algorithm is $0.694$-competitive.
\end{proof}

\subsection{Step Activation Rates Except One Large Item}
\label{subsec:step-act-rate}


Finally, we present our algorithm for the general case of prophet secretary.
As mentioned in the last subsection, we shall optimize the parameter $s$ from the invariant to maximize our competitive ratio. Within this subsection, we fix $s=2$ and establish a competitive ratio of $0.686$. The more general version of our algorithm and its analysis, which prove the $0.688$ competitive ratio stated in \Cref{thm:prophet-secretary}, are provided in \Cref{app:algo-general-s}.

Recall that the analysis in \Cref{sec:prophet-secretary-step-activation} relies on the assumption that $\sum_{i,v} (2x_i^v - p_i^v)^+ \le 1 - \ln 2$.
To get rid of the assumption, we consider the importance of the largest item $i_0=\arg \max_i \sum_v x_i^v$:
\[
x_0 \eqdef \sum_v x_{i_0}^v = \Prx{v_{i_0} = \max_{j} v_j}~.
\]
\begin{lemma}
\label{lem:hu_xu}
    For any instance of prophet secretary in which the largest item $i_0$'s probability of having the highest value is $x_0$, we have:
    \[
        \sum_{i,v} \big( 2 x_i^v - p_i^v \big)^+
        ~\le~
        h_2(x_0) ~\eqdef~ \max_{t\in [0,x_0)} ~ t + \sum_{k \ge 0} \Big( 2x_0 - \frac{x_0}{1 - t - kx_0} \Big)^+
        ~.
    \]
    Moreover, $\lim_{x\to 0^+} h_2(x) = 1-\ln 2$, and $h_2(1)=1$.
\end{lemma}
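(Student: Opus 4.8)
The plan is to rewrite $\sum_{i,v}(2x_i^v-p_i^v)^+$ so that a one-dimensional extremal problem appears, and then to solve that problem. First, since $(2x_i^v-p_i^v)^+ = p_i^v(2\rho_i^v-1)^+$ and $\rho_i^v$ is non-decreasing in $v$ (\Cref{lem:prophet-secretary-properties}), only the largest values of each item contribute. I would list the atoms $(i,v)$ with $p_i^v>0$ in \emph{decreasing} order of $v$ as $\xi_1,\xi_2,\dots$, set $W_m \eqdef \sum_{m'\le m} x_{\xi_{m'}} = \Prx{\max_j v_j \ge (\text{value of }\xi_m)}$, and express $\rho$ through $W$ via $\rho_i^v = \Prx{\max_j v_j < v}/\Prx{v_i < v}$ together with the telescoping identity $\Prx{\max_j v_j < v} = 1 - W_{m-1} - x_{\xi_m}$ at atom $m$. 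A short computation then shows that $\xi_m=(i,v)$ contributes exactly $\big(2x_{\xi_m} - \frac{x_{\xi_m}(1-\mu_m)}{1-W_{m-1}}\big)^+$, where $\mu_m \eqdef \Prx{v_i > v}\in[0,W_{m-1}]$ (using $\{v_i>v\}\subseteq\{\max_j v_j>v\}$). In particular, if every item has a single non-zero value then $\mu_m=0$, this is $\big(2x_{\xi_m}-\frac{x_{\xi_m}}{1-W_{m-1}}\big)^+$, and writing $q_m$ for the $m$-th item's prophet mass gives
\[
	\sum_{i,v}(2x_i^v - p_i^v)^+ ~=~ \sum_{m\ge 1}\Big(2q_m - \frac{q_m}{1-Q_{m-1}}\Big)^+ , \qquad Q_{m-1}\eqdef\sum_{m'<m}q_{m'}~,
\]
with $q_m\ge 0$, $\sum_m q_m\le 1$, and $q_m\le x_0$ since each $q_m$ is one item's prophet mass.

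I would then reduce the general case to this single-valued case: the claim is that coalescing all of an item's non-zero values into its largest one does not decrease $\sum_{i,v}(2x_i^v-p_i^v)^+$ (intuitively, spreading an item's mass across several atoms both weakens the other items' $\rho$'s and splits the budget, and the above formula quantifies exactly what is lost through the $\mu_m$-correction). After this step, $\sum_{i,v}(2x_i^v-p_i^v)^+$ is at most the maximum of the displayed right-hand side over all sequences with $q_m\in[0,x_0]$, $\sum_m q_m\le 1$, and over all orderings, which we may freely optimize.

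To evaluate that maximum, note that $2q-\frac{q}{1-Q}$ is linear in $q$ with slope $2-\frac{1}{1-Q}$, which is non-negative exactly for $Q\le\frac12$ and decreasing in $Q$. An exchange argument then shows an optimal sequence can be taken of the form $(q_1,q_2,q_3,\dots)=(t,x_0,x_0,\dots)$ for some $t\in[0,x_0)$: pushing each later term up to the cap $x_0$ does not hurt (the per-term slope is non-negative while $Q<\frac12$, and terms with $Q\ge\frac12$ contribute nothing), while the leading term is left free because shrinking it lowers every subsequent $Q_{m-1}$. Substituting $Q_0=0$ and $Q_k=t+(k-1)x_0$ yields $\sum_m(\cdots)=t+\sum_{k\ge0}\big(2x_0-\frac{x_0}{1-t-kx_0}\big)^+$, and the maximum over $t\in[0,x_0)$ is exactly $h_2(x_0)$.

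The two limits are routine. As $x\to0^+$ the constraint $t\in[0,x)$ forces $t\to0$, and $\sum_{k\ge0}\big(2x-\frac{x}{1-t-kx}\big)^+$ is a step-$x$ Riemann sum for $\int_t^{1/2}\big(2-\frac{1}{1-s}\big)\dd{s}=1-\ln2-2t-\ln(1-t)$, so $h_2(x)\to 1-\ln2$. For $x=1$, every $k\ge1$ has $1-t-k\le0$, so (interpreting the sum as over $k$ with $1-t-kx_0>0$) only the $k=0$ term survives, namely $\big(2-\frac{1}{1-t}\big)^+$, and $\max_{t\in[0,1)}\big[t+\big(2-\frac{1}{1-t}\big)^+\big]=1$, attained at $t=0$. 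I expect the main obstacle to be the reduction in the second paragraph: since the $\mu_m$-correction makes a single atom's raw contribution \emph{larger} than the single-valued formula, the reduction must genuinely exploit that an item's atoms share one prophet-mass budget $q_i\le x_0$; a secondary point is making the exchange argument of the third paragraph fully rigorous, in particular when $1/x_0$ is not an integer so the cap cannot be saturated on every term.
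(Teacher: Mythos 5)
Your first paragraph is correct: using $1-W_{m-1}=\Prx{\max_j v_j\le v}=\Prx{v_i\le v}\cdot\rho_i^v=(1-\mu_m)\rho_i^v$ gives $\rho_i^v=\frac{1-W_{m-1}}{1-\mu_m}$, so the $m$-th atom's contribution $p_i^v(2\rho_i^v-1)^+$ is indeed $\big(2x_{\xi_m}-\frac{x_{\xi_m}(1-\mu_m)}{1-W_{m-1}}\big)^+$. The limit computations in the last paragraph are also fine. But the reduction in your second paragraph is a genuine gap, and not just a missing verification: physically coalescing an item's mass onto its largest value is problematic in two ways. First, since $\rho_i^v$ is non-decreasing in $v$, coalescing strictly increases $\sum_v x_i^v$, so the coalesced largest item can exceed the cap $x_0$ --- and $h_2$ is increasing, so replacing $x_0$ by the inflated cap gives a bound in the wrong direction. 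Second, coalescing item $i$ also changes $\Prx{v_i<v}$ for values $v$ between the merged atoms, which shrinks $\rho_j^v$ for $j\ne i$ and hence shrinks those items' contributions; the net effect on the total sum is not obviously one-sided. Your instinct that the $\mu_m$-correction must be paid for by the shared budget $q_i\le x_0$ points at the tension, but no argument is given and it is not clear this route closes.

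The paper sidesteps all of this by never coalescing the probabilistic instance. It instead maximizes $\frac{1}{s-1}\sum_{i,v}(sx_i^v-p_i^v)^+$ over the \emph{LP relaxation}: variables $\{x_i^v\},\{p_i^v\}$ subject to $\sum_i\sum_{v\in V_i}x_i^v\le 1-\prod_i\big(1-\sum_{v\in V_i}p_i^v\big)$ for all $\{V_i\}$ (which every actual instance satisfies, since the max being in $V_i$ implies some value is in $V_i$) and $\sum_v x_i^v\le x_0$. In this relaxation the reduction to one value per item is purely algebraic and exact: set $x_i=\sum_{v:2x_i^v\ge p_i^v}x_i^v$ and $p_i=\sum_{v:2x_i^v\ge p_i^v}p_i^v$. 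The objective is preserved identically (it already sums only the positive terms), the cap $x_i\le x_0$ holds trivially, and feasibility of $\sum_{i\in S}x_i\le 1-\prod_{i\in S}(1-p_i)$ follows from the original constraint by choosing $V_i=\{v:2x_i^v\ge p_i^v\}$ for $i\in S$. This is the step your proposal is missing, and it is the step that makes the whole argument go through. After that reduction, the paper is essentially carrying out the one-dimensional extremal problem you set up, but with care: substitute $\alpha_i=\ln(1-p_i)$ so the constraint set is a polytope and the objective $\sum_i(1-e^{\alpha_i})$ is concave, hence extremized at a vertex, which corresponds to a chain of tight constraints giving $p_i=\frac{x_i}{1-\sum_{j<i}x_j}$; then two separate exchange arguments (an adjacent-swap to show $x_i$ non-decreasing, and an $\epsilon$-shift to show at most one $x_i$ lies strictly inside $(0,x_0)$) produce the extremal form $(t,x_0,x_0,\dots)$. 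Your exchange sketch conflates these two steps and glosses over the fact that raising $q_m$ changes all later slopes; it can be made rigorous along the paper's lines, but as written it is not a proof.
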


Intuitively, $(2 x_i^v - p_i^v)^+$ captures the extent to which an item-value pair $(i,v)$ is the largest more than half the time when it is present.
\Cref{lem:hu_xu} upper bounds the sum of this quantity over all item-value pairs, depending on the characteristics of the largest item.
It generalizes~\eqref{eqn:jl-assumption} by \citet{mor/ManshadiGS12} in the context of i.i.d.\ online stochastic matching, which plays an important role in that literature (e.g., \cite{mor/ManshadiGS12,mor/JailletL14,DBLP:conf/stoc/HuangS21,soda/Yan24}).

In \Cref{app:h-function}, we prove the following generalized statement to optimize over $s$.

\begin{lemma}
\label{lem:h_s_general}
For any instance of prophet secretary in which the largest item $i_0$'s probability of having the highest value is $x_0$, we have:
%
\[
    \frac{1}{s-1} \cdot \sum_{i,v} \big(sx_i^v - p_i^v \big)^+
    ~\le~
    h_s(x_0) 
    ~\eqdef~ \max_{t\in [0,x_0)} ~ t + \frac{1}{s-1}\sum_{k \ge 0} \Big(sx_0 - \frac{x_0}{1 - t - kx_0} \Big)^+
    ~.
\]
\end{lemma}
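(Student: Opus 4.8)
\textbf{Proof proposal for \Cref{lem:h_s_general} (and hence \Cref{lem:hu_xu} as the case $s=2$).}

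The plan is to bound $\sum_{i,v}(sx_i^v - p_i^v)^+$ by reformulating it as an optimization problem over the ``profile'' of contributions $x_i^v$ and then showing the worst case has a clean structure. First I would fix the largest item $i_0$ with $x_0 = \sum_v x_{i_0}^v$ and observe, using \Cref{lem:prophet-secretary-properties}, that $x_i^v = p_i^v \rho_i^v$ with $\rho_i^v \le 1$, so $(sx_i^v - p_i^v)^+ = p_i^v (s\rho_i^v - 1)^+$, which is at most $x_i^v \cdot \frac{(s\rho_i^v-1)^+}{\rho_i^v}$; more usefully, I want to rewrite each term purely in terms of how the item-value pair sits relative to ``being the max more than a $1/s$-fraction of the time it appears.'' The key structural fact is that $\rho_i^v = \prod_{j\ne i}\Prx{v_j < v}$, so for a fixed item $i$ these quantities are nested as $v$ ranges over the support; summing $(sx_i^v - p_i^v)^+$ over $v$ for a single item, and then over items, has a telescoping/monotone structure that I would exploit.

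The heart of the argument is to set up the worst-case instance. I expect the extremal configuration to be: the largest item $i_0$ contributes a block of mass at a single high value (so $\rho_{i_0}^v$ is close to $1$ there and $x_{i_0}^v \approx x_0$), and the remaining mass $1 - x_0$ of the prophet benchmark is spread over a continuum of infinitesimal items arranged in ``layers.'' The parameter $t \in [0, x_0)$ in the statement should correspond to the total $x$-mass of items/values whose $\rho$ is so small that they never contribute to the sum (the ``$0 \le \rho_i^v$ below threshold'' part — mass that gets absorbed additively as the leading $t$ term because for tiny items $(sx_i^v-p_i^v)^+ = p_i^v(s\rho_i^v-1)^+ \to p_i^v \cdot 0$ but the $p_i^v$ mass still exists), while the geometric-looking terms $\big(sx_0 - \frac{x_0}{1-t-kx_0}\big)^+$ come from the cumulative effect of the max-item's mass competing against successive layers: after mass $t + kx_0$ has been ``used up'' below value $v$, a value at which item $i_0$ sits has $\rho_{i_0}^v \approx \frac{1}{1-t-kx_0}$ by the standard computation $\prod(1-\text{small}) \approx e^{-\sum} $ inverted, times the contribution. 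I would make this rigorous by a direct calculation: parametrize by the CDF-like variable, show that $\sum_{i\ne i_0, v}(sx_i^v - p_i^v)^+$ is maximized by making all non-largest items infinitesimal (a smoothing/convexity argument: splitting an item into smaller copies only increases $\sum(s\rho-1)^+$ because $\rho$ increases and the function is convex in the relevant sense), and then the largest item's own contribution $\sum_v (sx_{i_0}^v - p_{i_0}^v)^+$ gives the discrete sum over $k$.

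The main obstacle I anticipate is the reduction-to-infinitesimal-items step and getting the bookkeeping of the layers exactly right — specifically, justifying that concentrating the largest item's mass at one value (rather than spreading it) is worst-case, and that the $k$-th layer really contributes $\big(sx_0 - \frac{x_0}{1-t-kx_0}\big)^+$ after taking the continuum limit of the $\prod_{j\ne i}\Prx{v_j < v}$ factors. This requires care because changing the largest item affects $\rho_i^v$ for all other items simultaneously (the interdependence the introduction warns about). I would handle it by working with the ``conditional'' quantities: fix the event structure of the other items and integrate, using that $\int_0^\infty$ (over a value-threshold parametrization) of the relevant density telescopes. For the limiting claims, $\lim_{x_0\to 0^+} h_s(x_0) = $ the value from \eqref{eqn:jl-assumption} generalized (for $s=2$ this is $1-\ln 2$) follows by noting the sum over $k$ becomes an integral $\int_0^{1} \frac{1}{s-1}(s - \frac{1}{1-y})^+\,dy$ as $x_0 \to 0$ with $t \to 0$ optimal, which evaluates in closed form; and $h_s(1) = 1$ follows since with $x_0 = 1$ only $t\to 1^-$ survives and the $k=0$ term vanishes in the limit, leaving just $t = 1$. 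These final evaluations are routine once the main inequality is established, so I would state them as short corollaries of the formula rather than belaboring them. Throughout, I would lean on \Cref{lem:prophet-secretary-properties} for the normalization $\sum_{i,v} x_i^v = 1$, which is what makes the ``mass budget'' $1 - t - kx_0$ decrease correctly with each layer.
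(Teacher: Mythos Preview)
Your high-level plan---cast the left-hand side as an optimization and identify the extremal instance---matches the paper's, but the extremal instance you conjecture is wrong, and this makes the argument collapse. You propose that the worst case has the single largest item $i_0$ plus a continuum of \emph{infinitesimal} items, and that ``splitting an item into smaller copies only increases $\sum(s\rho-1)^+$.'' Both claims point in the wrong direction. The formula you are trying to prove has a \emph{discrete} sum over $k$ in steps of $x_0$; that structure is the signature of an extremal instance with \emph{many} items each of mass exactly $x_0$, not one big item and dust. Indeed, the paper's proof shows (by an exchange argument) that the optimum has $x_1\le x_2=\cdots=x_n=x_0$, i.e.\ every item is as large as the constraint $x_i\le x_0$ allows; the leading $t$ in the formula is simply the single possibly-fractional item $x_1$, and its contribution $\tfrac{1}{s-1}(sx_1-p_1)=x_1=t$ comes from $p_1=x_1$, not from ``small-$\rho$ mass that never contributes.'' Your infinitesimal limit would instead yield the integral $\int_0^{1-1/s}\tfrac{1}{s-1}(s-\tfrac{1}{1-y})\,dy=1-\tfrac{\ln s}{s-1}$, which is the \emph{smallest} value of $h_s$, not the maximum.

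There is also a secondary confusion: you write $\rho_{i_0}^v\approx \tfrac{1}{1-t-kx_0}$, but $\rho_i^v\le 1$ always; what actually equals $\tfrac{x_0}{1-t-kx_0}$ in the extremal instance is $p_i$, not $x_0/\rho_i$. The paper obtains this by a clean structural step you are missing: after reducing to one value per item, it fixes the $x_i$'s and minimizes $\sum p_i$ subject to the polymatroid-type constraints $\sum_{i\in S}x_i\le 1-\prod_{i\in S}(1-p_i)$. Substituting $\alpha_i=\ln(1-p_i)$ linearizes the constraints, and concavity of $\sum(1-e^{\alpha_i})$ forces the optimum to an extreme point; a submodularity argument then shows the tight sets form a chain, yielding $p_i=x_i/(1-\sum_{j<i}x_j)$ directly. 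This chain/extreme-point step is the real engine of the proof, and nothing in your outline substitutes for it.
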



This immediately implies \Cref{lem:hu_xu}, which is the special case when $s = 2$. 
We will also use this lemma in the prophet secretary matching section.

\bigskip

\Cref{fig:prophet-secretary} presents our algorithm for the prophet secretary problem.
It handles the largest item and the other items using two different strategies.

For the largest item, we let its activation probabilities, rather than activation rates, be step functions that change values at some thresholds $\beta_0 < \beta_2$ to be determined.
We never activate this item from time $0$ to $\beta_0$.
The activation probabilities from time $\beta_0$ to $\beta_2$, and from time $\beta_2$ to $1$ sum to $2 \rho_{i_0}^v$, twice the probability that the largest item $i_0$ has the highest value \emph{conditioned on its value being $v$};
we further let the latter (from time $\beta_2$ to $1$) be as large as possible.

For the other items, we let their activation rates be step functions that change values at some threshold $\beta_1$ to be determined.
We will let $\beta_0 \le \beta_1 \le \beta_2$ and hence, the choice of their subscripts.
Like the warm-up case in the last subsection, we also let the activate rates before and after the threshold $\beta_1$ sum to $2 \rho_i^v$.
Unlike the warm-up case, where we let the activation rates before $\beta_1$ be as small as possible, here we let them be parameters $z_i^v$ that sum to (weighted by probability $p_i^v$) a fixed invariant depending on the characteristics of the largest item.


\bigskip

We next prove that the algorithm is well-defined.







\begin{figure}[ht]
\begin{tcolorbox}[title=Step Activation Rates Except One Large Item]
    \label{algo:step-act-rate-0.688}
    \begin{enumerate}[(1)]
        \item Let $i_0=\arg \max_i \sum_v x_i^v$ be the largest item, and let:
        	\[
        		x_0 = \sum_v x_{i_0}^v
        		\quad,\quad
        		z_{i_0}^v = (2 \rho_{i_0}^v - 1)^+
        		\quad,\quad
        		h_0 = \sum_{v} p_{i_0}^v z_{i_0}^v
        		~.
    		\]
        \item Calculate a set of $z_i^v$'s for the remaining items $i \ne i_0$ such that:
			\[
				h_{ot} \eqdef \sum_{i\neq i_0} \sum_v p_i^v z_i^v = \min \big\{h_2(x_0) - h_0, 1 - x_0 \big\}
				\quad,\quad
				z_i^v \in \big[ (2\rho_i^v - 1)^+, \rho_i^v \big]
				~,
			\]
			where $h_2(x_0)$ is the function as defined in \Cref{lem:hu_xu}.
        \item Select threshold times $\threshold_0 \leq \threshold_1 \leq \threshold_2$ based on $x_0$, $h_0$ to maximize \Cref{eqn:prophet-secretary-ratio}.
        \item For time $t$ from $0$ to $1$:
        \begin{enumerate}[(a)]
            \item If the largest item $i_0$ arrives with value $v$, activate it with probability:
            \[
	            g_{i_0}^v(t)=\left\{ \begin{aligned}
	                0 & \quad \text{if } t\in [0,\threshold_0),\\
	                z_{i_0}^v & \quad \text{if } t\in [\threshold_0, \threshold_2),\\
	                2\rho_{i_0}^v - z_{i_0}^v & \quad \text{if } t\in [\threshold_2, 1].
	            \end{aligned} \right.
            \]
            \item If an item $i\neq i_0$ arrives with value $v$, activate it with activation rate:
            \[
            a_i^v(t)=\left\{ \begin{aligned}
                z_i^v & \quad \text{if } t\in [0,\threshold_1),\\
                2\rho_i^v - z_i^v & \quad \text{if } t\in [\threshold_1, 1].
            \end{aligned} \right.
            \]
            \item Accept the item if it is the first activated item.
        \end{enumerate}
    \end{enumerate}
\end{tcolorbox}
    \caption{Algorithm for Prophet Secretary}
    \label{fig:prophet-secretary}
\end{figure}



\begin{lemma}
	There exist $z_i^v$'s satisfying the stated constraints in (2) of \Cref{fig:prophet-secretary}.
\end{lemma}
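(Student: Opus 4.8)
The plan is to show that the set of feasible $z_i^v$'s is non-empty by verifying that the target value $h_{ot} = \min\{h_2(x_0) - h_0, 1 - x_0\}$ lies within the range of achievable values of $\sum_{i \ne i_0} \sum_v p_i^v z_i^v$ as each $z_i^v$ ranges independently over the interval $[(2\rho_i^v - 1)^+, \rho_i^v]$. Since this sum is a continuous (indeed linear) function of the $z_i^v$'s over a product of intervals, its attainable set is exactly the interval $[L, U]$, where $L = \sum_{i \ne i_0} \sum_v p_i^v (2\rho_i^v - 1)^+$ and $U = \sum_{i \ne i_0} \sum_v p_i^v \rho_i^v$. So it suffices to check $L \le h_{ot} \le U$.

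First I would compute $U$. By \Cref{lem:prophet-secretary-properties}, $\sum_{i,v} x_i^v = 1$ and $x_i^v = p_i^v \rho_i^v$, so $\sum_{i \ne i_0}\sum_v p_i^v \rho_i^v = 1 - \sum_v x_{i_0}^v = 1 - x_0$; hence $U = 1 - x_0$. This immediately gives $h_{ot} \le 1 - x_0 = U$ since $h_{ot}$ is defined as a minimum that includes $1 - x_0$. Next I would compute $L$. Using $p_i^v(2\rho_i^v - 1)^+ = (2x_i^v - p_i^v)^+$, we get $L = \sum_{i \ne i_0}\sum_v (2x_i^v - p_i^v)^+ = \big(\sum_{i,v}(2x_i^v - p_i^v)^+\big) - h_0$, where $h_0 = \sum_v p_{i_0}^v (2\rho_{i_0}^v - 1)^+ = \sum_v (2x_{i_0}^v - p_{i_0}^v)^+$ is precisely the contribution of the largest item. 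By \Cref{lem:hu_xu}, $\sum_{i,v}(2x_i^v - p_i^v)^+ \le h_2(x_0)$, so $L \le h_2(x_0) - h_0$. Combined with $L \le U = 1 - x_0$ (which follows because each summand satisfies $(2\rho_i^v-1)^+ \le \rho_i^v$), we conclude $L \le \min\{h_2(x_0) - h_0,\, 1 - x_0\} = h_{ot}$.

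Having established $L \le h_{ot} \le U$, the intermediate value property of the linear map $(z_i^v) \mapsto \sum_{i \ne i_0}\sum_v p_i^v z_i^v$ on the box $\prod_{i \ne i_0, v}[(2\rho_i^v-1)^+, \rho_i^v]$ yields a feasible choice of $z_i^v$'s, completing the proof. The only mild subtlety to spell out is that $h_{ot} \ge 0$ as well, so that the degenerate case where all intervals are singletons (i.e.\ $(2\rho_i^v-1)^+ = \rho_i^v$, which forces $\rho_i^v \le 1$ and $\rho_i^v \ge 1$, impossible unless the term is zero) does not cause trouble; in fact $L \ge 0$ trivially since each term $(2x_i^v - p_i^v)^+ \ge 0$, and $h_{ot} \ge L \ge 0$ is automatic once we know $h_{ot} \ge L$. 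I do not anticipate a genuine obstacle here — the statement is essentially a feasibility check — but the one place to be careful is correctly identifying that the upper endpoint of the achievable sum is $1 - x_0$ via the normalization $\sum_{i,v} x_i^v = 1$, since that is what makes the $\min$ in the definition of $h_{ot}$ exactly match the constraint.
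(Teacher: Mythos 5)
Your proposal is correct and takes essentially the same route as the paper: both reduce the claim to checking $L \le h_{ot} \le U$ where $L = \sum_{i\ne i_0}\sum_v p_i^v(2\rho_i^v-1)^+$ and $U = \sum_{i\ne i_0}\sum_v p_i^v \rho_i^v = 1-x_0$, using $(2\rho_i^v-1)^+ \le \rho_i^v$, the identity $p_i^v(2\rho_i^v-1)^+ = (2x_i^v-p_i^v)^+$, and \Cref{lem:hu_xu}. The only cosmetic difference is that you spell out the intermediate-value/continuity step explicitly, which the paper leaves implicit.
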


\begin{proof}
    Note that $ (2 \rho_{i}^v - 1)^+ \leq \rho_{i}^v$, it suffices to show that
    \[
        \sum_{i\neq i_0} \sum_v p_i^v (2 \rho_{i}^v - 1)^+ \leq \min\{h_2(x_0) - h_0, 1 - x_0\} \leq \sum_{i\neq i_0} \sum_v p_i^v \rho_i^v.
    \]
    
    The first inequality follows by:
    \[
        \begin{aligned}
            \sum_{i\neq i_0} \sum_v p_i^v (2 \rho_{i}^v - 1)^+
            &
            = \sum_{i\neq i_0} \sum_v (2x_i^v - p_i^v)^+ \\
            & 
            = \sum_{i} \sum_v (2x_i^v - p_i^v)^+ -h_0  \leq h_2(x_0) - h_0
            ~,
        \end{aligned}
    \]
    
    The second inequality follows by $\sum_{i\neq i_0} \sum_v p_i^v \rho_i^v = \sum_{i\neq i_0} \sum_v x_i^v = 1 - x_0$.
\end{proof}

\begin{lemma}
    The activation rates and probabilities are well-defined, namely they are between $0$ and $1$.
\end{lemma}

\begin{proof} 
    Recall that $0\leq x_i^v \leq \rho_{i}^v\leq 1$.
    For any item $i$, we have $0 \le z_{i}^v \leq \rho_{i}^v \leq 1$, and $2\rho_{i}^v - z_{i}^v = 2\rho_{i}^v - (2 \rho_{i}^v - 1)^+ = \min\{2\rho_{i}^v, 1\} \in [0,1]$.
    
    Hence, for the largest item $i_0$, the activation probability $g_{i_0}^v(t)$ is always between $0$ and $1$.
    For any other item $i \ne i_0$, the activation rate $a_i^v(t)$ is between $0$ and $1$ and the resulting activation probability $g_i^v(t) = a_i^v(t)e^{-\int_0^t A_i(x) \dd x}$ is also between $0$ and $1$.
\end{proof}


We are now ready to prove \Cref{thm:prophet-secretary}, which says that the Step Activation Rates Except One Large Item is $0.686$-competitive.

\begin{proof}[Proof of \Cref{thm:prophet-secretary}]
    We will leave $\threshold_0 \le \threshold_1 \le \threshold_2$ as parameters to be determined, and optimize them to maximize the competitive ratio via a computer-aided search.

    \paragraph{Largest Item.}
    By \Cref{lemma:single-act-rate}, the probability that an item $j \neq i_0$ was not activated before time $t$ is $e^{-\int_{0}^t A_j(x)\dd x}$, where $A_j(x)$ equals $\sum_v p_j^v z_j^v$ when $x\in [0, \threshold_1]$ and equals $2\sum_v x_j^v - \sum_v p_j^v z_j^v$ otherwise.

    By the definition of $h_{ot}$ and $\sum_j \sum_v x_j^v =1$ (\Cref{lem:prophet-secretary-properties}), we have:
    \[
        \sum_{j\neq i_0} A_j(x) = 
        \begin{cases} 
        \sum_{j \neq i_0} \sum_v p_j^v z_j^v = h_{ot}, & \text{for } x \in [0, \threshold_1] \\
        \sum_{j \neq i_0} (2\sum_v x_j^v - \sum_v p_j^v z_j^v) = 2(1 - x_0) - h_{ot}. & \text{for } x \in [\threshold_1, 1]
        \end{cases}
    \]
    
    Thus, the probability of not having any item $j\neq i_0$ activated before time threshold $t$ is:
    \[
        \prod_{j\neq i_0} e^{-\int_0^t A_j(x)\dd x} 
        = e^{-\int_0^t \sum_{j\neq i_0}A_j(x)\dd x}
        = e^{-h_{ot}\cdot t - 2(1 - x_0 - h_{ot}) \cdot (t - \threshold_1)^+}\eqdef e^{-K(t)} ~.
    \]
    
    Then, according to~\Cref{lem:acceptance-probability}, the probability of accepting item $i_0$ value $v_{i_0} = v$ is:
    \begin{align*}
        \int_0^1 p_{i_0}^v g_{i_0}^v(t) e^{-K(t)} \dd t = ~ & p_{i_0}^v \left( z_{i_0}^v \int_{\threshold_0}^1 e^{-K(t)} \dd t + (\rho_{i_0}^v - z_{i_0}^v) \cdot 2 \int_{\threshold_2}^1 e^{-K(t)} \dd t \right) \\
        \ge ~ & p_{i_0}^v \rho_{i_0}^v \cdot \min \left(\int_{\threshold_0}^1 e^{-K(t)}\dd t,
        ~ 2 \int_{\threshold_2}^1 e^{-K(t)} \dd t \right)~.
    \end{align*}

    \paragraph{Other Items.}
    We define $L(t)$ as the probability that item $i_0$ is activated before time $t$. Then, according to the definition of $g_{i_0}^v(t)$ and $h_0$, $\sum_v p_{i_0}^v g_{i_0}^v(t)$ equals $0$ for $t\in [0, \threshold_0]$, equals $h_0$ for $t\in [\threshold_0, \threshold_2]$, and equals $2 x_0 - h_0$ for $t\in [\threshold_2, 1]$. Thus, by~\Cref{lem:activation-probability}, we have
    \[
        \begin{aligned}
            L(t) = \int_0^t \sum_v p_{i_0}^v g_{i_0}^v(x) \dd x= h_0 \cdot (t - \threshold_0)^+ + 2 (x_0 - h_0) \cdot (t - \threshold_2)^+.
        \end{aligned}
    \]

Therefore, for any item $i\neq i_0$, according to~\Cref{lem:acceptance-probability}, the probability of accepting item $i$ with value $v$ is equal to:
\begin{align*}
    & \int_0^1 p_i^v a_i^v(t) e^{-\int_0^t \sum_i A_i(x) \dd x} (1-L(t)) \prod_{j\neq i, i_0} e^{-\int_0^t A_{j}(y)\dd y}\dd t \\
    = ~ & p_i^v\int_0^1 a_i^v(t)  \prod_{j\neq i_0} e^{-\int_0^t A_{j}(y)\dd y} (1-L(t)) \dd t \\
    = ~ & p_i^v\int_0^1 a_i^v(t) e^{-K(t)} (1-L(t)) \dd t \\
    = ~ & p_i^v \left( z_i^v\cdot   \int_0^1 e^{-K(t)}  (1- L(t)) \dd t +  (\rho_i^v-z_i^v) \cdot 2 \int_{\threshold_1}^1 e^{-K(t) }(1-L(t)) \dd t \right) \\
    \ge ~ & p_i^v \rho_i^v \cdot \min \left( \int_0^1 e^{-K(t)} (1-L(t)) \dd t,
    ~ 2 \int_{\threshold_1}^1 e^{-K(t)} \left(1 - L(t)\right) \dd t  \right)
    ~.
\end{align*}

\paragraph{Choice of Parameters.}
Therefore, we define:
\begin{equation}
\label{eqn:prophet-secretary-ratio}
\begin{aligned}
    \Gamma(x_0, h_0, \threshold_0, \threshold_1, \threshold_2) = \min \Bigg(& \int_{\threshold_0}^1 e^{-K(t)} \dd t,
    ~ 2\int_{\threshold_2}^1 e^{-K(t)} \dd t,\\
    & \int_0^1 e^{-K(t)} \left(1 - L(t)\right) \dd t,
~ 2 \int_{\threshold_1}^1 e^{-K(t)} \left(1 - L(t)\right) \dd t \Bigg)
~,
\end{aligned}
\end{equation}
which is a function with respect to $x_0, h_0, \threshold_0, \threshold_1, \threshold_2$. 
It remains to show that:

\begin{lemma}
\label{lem:verification_687}
For every $x_0 \in [0,1]$ and $h_0 \in [0,x_0]$, there exist $0\le \threshold_0\le \threshold_1\le \threshold_2\le 1$, such that $\Gamma( x_0, h_0, \threshold_0, \threshold_1, \threshold_2) > 0.686.$
\end{lemma}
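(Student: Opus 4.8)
\textbf{Proof proposal for \Cref{lem:verification_687}.}

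The plan is to reduce the claim to a finite, verifiable computation over the two-dimensional parameter domain $(x_0, h_0) \in \{(x_0,h_0) : x_0 \in [0,1], h_0 \in [0,x_0]\}$. For each fixed $(x_0, h_0)$ the quantities $K(t)$ and $L(t)$ are explicit piecewise-polynomial (in fact piecewise-linear in the exponent) functions of $t$, determined by $\threshold_0 \le \threshold_1 \le \threshold_2$ and by the invariant constants $h_{ot} = \min\{h_2(x_0) - h_0,\, 1 - x_0\}$ and the slopes $h_0$, $2x_0 - h_0$, $h_{ot}$, $2(1-x_0) - h_{ot}$. Consequently each of the four integrals in \eqref{eqn:prophet-secretary-ratio} has a closed form: $e^{-K(t)}$ is a product of at most two exponentials-of-affine-functions over the pieces $[0,\threshold_1]$ and $[\threshold_1,1]$, and $(1-L(t))$ contributes at most two more breakpoints at $\threshold_0$ and $\threshold_2$, so on each of the resulting subintervals the integrand is $c \cdot e^{at+b}$ and integrates elementarily. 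I would first write $\Gamma$ out explicitly as a min of four such closed forms, so that for any given $(x_0,h_0,\threshold_0,\threshold_1,\threshold_2)$ it is evaluable exactly.

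Next I would specify the choice of thresholds. The honest approach is: we do not give a closed-form optimizer; instead, for each $(x_0,h_0)$ we let $\threshold_0,\threshold_1,\threshold_2$ be the maximizer of $\Gamma$ (which exists by compactness and continuity of $\Gamma$ in all five arguments — $K$ and $L$ depend continuously on the thresholds, and the integrals and their min are continuous). So the statement to verify becomes
\[
\min_{\substack{x_0 \in [0,1] \\ h_0 \in [0,x_0]}} \ \max_{0 \le \threshold_0 \le \threshold_1 \le \threshold_2 \le 1} \ \Gamma(x_0, h_0, \threshold_0, \threshold_1, \threshold_2) \ > \ 0.686.
\]
To make this a rigorous finite check rather than a heuristic search, I would (i) establish quantitative Lipschitz bounds: $\Gamma$ is Lipschitz in $(x_0,h_0)$ with some explicit constant $\Lambda$, uniformly over thresholds, because $h_2(x_0)$ is Lipschitz (it is a max of finitely many smooth functions on compact pieces, away from $x_0 = 0$, and one handles a neighborhood of $0$ using $\lim_{x\to 0^+} h_2(x) = 1 - \ln 2$ separately), and the integrands depend on $(x_0,h_0)$ only through bounded affine exponents; and (ii) take a grid of $(x_0,h_0)$ of mesh $\delta < (\text{slack})/\Lambda$, exhibit at each grid point an explicit threshold triple achieving $\Gamma > 0.686 + \Lambda\delta$, and verify these finitely many closed-form inequalities. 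The endpoint cases $x_0 \to 0$ (where the algorithm degenerates to Step Activation Rates and \Cref{thm:step-activation} already gives $0.694$, since $h_2(x_0) \to 1-\ln 2$) and $x_0 = 1$ (where $h_0 = x_0 = 1$, $h_{ot} = 0$, so $K \equiv 0$ and the bottleneck is $\int_{\threshold_0}^1 (1-L(t))\,dt$ versus $2\int_{\threshold_2}^1(1-L(t))\,dt$, both tunable — intuitively one sets $\threshold_0 = 0$, picks $\threshold_2$ to balance) serve as sanity anchors and can be checked in closed form.

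The main obstacle is making the "computer-aided search" genuinely a proof: the optimization over $(\threshold_0,\threshold_1,\threshold_2)$ is not convex, so I cannot simply assert the maximizer; the rigor comes entirely from step (i)–(ii) above, i.e. from the Lipschitz/interval-arithmetic certification that a fine enough grid plus explicit feasible threshold choices suffices. I expect the delicate part of that to be the uniform Lipschitz control of $h_2(x_0)$ near $x_0 = 0$, where the supremum in its definition is over $t \in [0,x_0)$ shrinking to a point and the sum $\sum_{k\ge 0}(2x_0 - \frac{x_0}{1-t-kx_0})^+$ has a number of nonzero terms growing like $\Theta(1/x_0)$; one resolves this by noting the sum telescopes/bounds cleanly and $h_2$ extends continuously (indeed the excerpt already asserts $\lim_{x\to 0^+}h_2(x) = 1-\ln 2$), so a separate small-$x_0$ argument piggybacking on \Cref{thm:step-activation} covers $[0,\epsilon]$ and the grid argument covers $[\epsilon,1]$. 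A secondary but routine check is that the feasibility constraint $h_0 \in [0,x_0]$ together with $h_{ot} = \min\{h_2(x_0)-h_0, 1-x_0\}$ keeps all exponents in $K,L$ nonnegative on $[0,1]$, which is needed so that the bounds $1-L(t)\ge 0$ and the monotonicity used to pass from the exact acceptance probability to the $\min$ lower bound in the "Largest Item" and "Other Items" paragraphs are valid.
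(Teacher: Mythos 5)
Your proposal is essentially the paper's proof: reduce to a grid of $(x_0,h_0)$ values, establish a quantitative continuity bound for $\Gamma$ in $(x_0,h_0)$ uniformly over thresholds, and then verify $\Gamma > 0.686 + (\text{error})$ at the grid points with computer assistance using explicit threshold triples. The paper (Appendix~C) does exactly this with mesh $\epsilon = 10^{-4}$, rounding \emph{down} to grid points and proving the one-sided bound $\Gamma(x_0,h_0,\cdot) \ge \Gamma(x_0',h_0',\cdot) - (\tfrac{3}{2}s^2 + \tfrac{1}{2}s)\epsilon$ directly from bounds on $K(t)'-K(t)$ and $L(t)'-L(t)$.

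The one place where you diverge, and where you are more cautious than you need to be, is the small-$x_0$ regime. You worry that $h_2$ may not have a uniform Lipschitz constant near $x_0 = 0$ (because the number of nonzero terms in the defining sum grows like $\Theta(1/x_0)$) and propose to handle $[0,\epsilon]$ separately by piggybacking on \Cref{thm:step-activation}. The paper does not need this: it proves (\Cref{lem:h-derivative}, built from the left-derivative bound $\partial h_s / \partial x^- \le \tfrac{s+1}{2}$ and right continuity, Lemmas~15--16 in the appendix) that $h_s(x) - h_s(x') \le \tfrac{s+1}{2}(x-x')$ for \emph{all} $0 \le x' \le x \le 1$. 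Combined with monotonicity of $h_s$, this already gives a uniform modulus of continuity all the way to the left endpoint, so the grid argument covers $x_0 \in [0,1]$ without a case split. Your fallback via \Cref{thm:step-activation} would also require extra work you have not done: that theorem is stated under assumption~\eqref{eqn:jl-assumption} exactly, and for $x_0 \in (0,\epsilon]$ one has $h \le h_2(x_0)$ which may slightly exceed $1 - \ln 2$, so you would need to quantify how the $0.694$ bound degrades with $h$ and verify it stays above $0.686$, plus check that the ``Step Activation Rates Except One Large Item'' algorithm for $x_0 \to 0$ actually dominates the plain ``Step Activation Rates'' analysis. None of this is fatal, but the paper's uniform derivative bound is the cleaner way to close the gap, and it is the key technical lemma you would otherwise have to rediscover.
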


We use a computer-aided search to numerically verify this lemma (see \Cref{app:program-verify}).
Then, taking such $\threshold_0, \threshold_1, \threshold_2$ in the algorithm, we get the $0.686$ competitive ratio by~\Cref{lem:prophet-secretary-sufficient-condition}.
\end{proof}


\section{Prophet Secretary Matching}
\subsection{Preliminaries}
\paragraph{Model.}
Let there be an edge-weighted bipartite type graph $G=(U,V,E,w)$, where $U$ corresponds to the set of offline vertices that are known in advance to the algorithm and $V$ corresponds to the set of all possible types of each online vertices.

There are $n$ online vertices. Each online vertex $v_i$ has a type $v \in V$ drawn independently from a known distribution $F_i$, and an arrival time of $t_i$ drawn uniformly from $[0,1]$. We use $\piv$ to denote the probability of $v_i = v$. When $v_i$ arrives at time $t_i$, its type is realized and the algorithm decides immediately whether to match $v_i$ to an unmatched offline vertex $u$ (if exists). The goal is to maximize the total weight of the selected matching and to compete against the expected maximum matching.

\paragraph{LP Relaxation.}
Consider the following linear program for stochastic matching by Gamlath, Kale, and Svensson~\cite{soda/GamlathKS19}.
\begin{align*}
\textbf{LP:} \qquad \max_{\{\xiuv\}}: \quad & \sum_{i,u,v} w_{uv} \cdot \xiuv \\
\text{subject to} : \quad & \sum_{u} \xiuv \le \piv & \forall i \in [n], v \in V \\
& \sum_{i} \sum_{v \in S} \xiuv \le 1 - \prod_{i} \left(1- \sum_{v\in S} \piv\right) & \forall u \in U, S \subseteq V
\end{align*}

\begin{fact}[\cite{soda/GamlathKS19}]
This linear program can be solved efficiently and its value is an upper bound of the expected optimum matching.
\end{fact}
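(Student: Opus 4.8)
The statement has two parts: that the LP is solvable in polynomial time, and that its optimum upper-bounds $\Ex{w(M^\star)}$, where $M^\star$ denotes a maximum-weight matching in the realized graph. I would prove them separately. For the upper bound, the plan is to exhibit a \emph{feasible} LP solution whose objective equals $\Ex{w(M^\star)}$. Fix any deterministic tie-breaking rule so that $M^\star$ is a well-defined function of the realized types $(v_1,\dots,v_n)$ (there are finitely many realizations), and set
\[
	\xiuv \eqdef \Prx{v_i = v \text{ and } (i,u)\in M^\star} \qquad \text{for all } i,u,v .
\]
Its objective value is $\sum_{i,u,v} w_{uv}\,\xiuv = \Ex{\sum_{(i,u)\in M^\star} w_{uv_i}} = \Ex{w(M^\star)}$, so it suffices to verify feasibility.

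For the first constraint family I would use that the events $\{(i,u)\in M^\star\}$ are disjoint over $u$ (vertex $i$ is matched at most once), whence $\sum_u \xiuv = \Prx{v_i=v \text{ and } i \text{ is matched in } M^\star} \le \Prx{v_i=v} = \piv$. For the second family, fix $u\in U$ and $S\subseteq V$; since $u$ is matched to at most one online vertex, the events $\{v_i\in S \text{ and } (i,u)\in M^\star\}$ are disjoint over $i$, so
\[
	\sum_i\sum_{v\in S}\xiuv = \Prx{\exists i:\ v_i\in S \text{ and } (i,u)\in M^\star} \le \Prx{\exists i:\ v_i\in S} = 1 - \prod_i\Big(1-\sum_{v\in S}\piv\Big),
\]
where the last equality uses independence of the $v_i$. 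Hence the solution is feasible and the LP optimum is at least $\Ex{w(M^\star)}$.

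For polynomial-time solvability, the LP has only $n\,|U|\,|V|$ variables but exponentially many constraints, so I would run the ellipsoid method with a separation oracle. The first family is checked directly. For the second, fix a candidate point $\{\xiuv\}$ and an offline vertex $u$, and consider the set function $g_u(S) \eqdef \bigl(1-\prod_i(1-\sum_{v\in S}\piv)\bigr) - \sum_i\sum_{v\in S}\xiuv$ over $S\subseteq V$. The subtracted term is modular in $S$, and the first term equals $\Prx{\exists i:\ v_i\in S}$, which is monotone submodular: its marginal gain for adding a type $e$ is the probability that some $v_i$ equals $e$ while no $v_i$ lies in the current set, and this only shrinks as the set grows. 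Thus $g_u$ is submodular, so $\min_{S\subseteq V} g_u(S)$ is computable in polynomial time by standard submodular function minimization; the constraint for $u$ is violated exactly when this minimum is negative, and a minimizer then furnishes a separating hyperplane. Repeating over all $u$ gives a polynomial-time separation oracle, so the LP can be solved efficiently.

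I expect the only non-routine point to be recognizing the submodular structure of the right-hand side of the second constraint family, which is precisely what makes the separation oracle run in polynomial time; the feasibility verification is a short computation once the canonical solution $\xiuv = \Prx{v_i=v \text{ and } (i,u)\in M^\star}$ is written down, and the reduction to the ellipsoid method is standard.
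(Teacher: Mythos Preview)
Your argument is correct on both counts. The paper itself does not prove this statement at all: it is stated as a \emph{Fact} and simply cited to \cite{soda/GamlathKS19}, so there is nothing to compare against in the paper proper. Your feasibility proof via the canonical solution $\xiuv=\Prx{v_i=v \text{ and } (i,u)\in M^\star}$ and your separation-oracle argument (submodularity of $S\mapsto \Prx{\exists i:\ v_i\in S}$ plus submodular function minimization, then ellipsoid) are exactly the standard route taken in \cite{soda/GamlathKS19}; both steps are sound as written.
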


Our algorithm starts by solving the linear program and throughout the section, we use $\left\{\xiuv\right\}$ to denote the optimal solution of the LP.
For each online vertex $u$, we consider the following quantities that shall be crucial for our algorithm design:
\[
\forall i \in [n], u \in U, \quad \xiu \eqdef \sum_v \xiuv, \quad \xu \eqdef \max_i \xiu, \quad \text{and} \quad h_u \eqdef \sum_{i,v} \left(2 \xiuv - \piv \right)^+~.
\]
According to the constraints of the LP and by \Cref{lem:hu_xu}, we have that $h_u \le h_2(\xu)$.
\paragraph{Assumption.} Without loss generality, we assume that the graph is $1$-regular. I.e., for every offline vertex $u \in U$, $\sum_{i,v} \xiuv = 1$ and for every $i \in [n]$, $\sum_{u,v} \xiuv = 1$.
This can be achieved by first adding dummy offline vertices and dummy online types to the type graph, and then introducing dummy online vertices to the instance. A similar pre-processing step is applied by Fu et al.~\cite{icalp/FuTWWZ21} (refer to Lemma 2 of \cite{icalp/FuTWWZ21} for a detailed construction). Finally, we simulate the random arrival of the dummy online vertices on our own.


\subsection{Overview of Our Result}
Our main result is a $0.641$-competitive algorithm for the prophet secretary matching problem, beating the $1-1/e\approx 0.632$ barrier for the first time.

We introduce two algorithms that we call 1) Multistage Activation-based Matching (MAM) and 2) Constant Activation Rate except One Large Vertex (CAR). 
\begin{itemize}
\item Our Multistage Activation-based Matching algorithm is a generalization of the algorithm by Yan~\cite{soda/Yan24} that works in the i.i.d. arrival setting. Our algorithm achieves the same competitive ratio of Yan for an arbitrary ``infinitesimal instance'', i.e., when $\xu = o(1)$ for every $u \in U$. In other words, an instance is infinitesimal if every online vertex $v_i$ contributes $o(1)$ to every offline vertex $u$ in the offline maximum matching. 
Crucially, our algorithm avoids the pre-processing step of Yan, which to our knowledge, is hard to generalize beyond the i.i.d. case.
Formally, our algorithm matches each edge $(u,v_i=v)$ with probability $\ratmsm(\xu)  \cdot \xiuv$, where $\ratmsm(\cdot)$ is a decreasing function with $\ratmsm(0) \approx 0.645$ and $\ratmsm(1) = 1-1/e$. This algorithm and its analysis (including the definition of $\ratmsm$) are provided in \Cref{subsec:msm}.

\item The second algorithm is built on the technique of random order contention resolution schemes. It is noticed by Fu et al.~\cite{icalp/FuTWWZ21} that the RCRS by Lee and Singla~\cite{esa/LeeS18} can be adapted to a $1-1/e$ competitive algorithm for the prophet secretary matching problem and the ratio is tight. Moreover, we observe that the worst-case instance for the RCRS based approach is the infinitesimal case when $\xu$'s are small. 
To this end, we refine the RCRS algorithm by Lee and Singla so that our algorithm matches each edge $(u,v_i=v)$ with probability $\ratrcrs(\xu)$, where $\ratrcrs(\cdot)$ is an increasing function with $\ratrcrs(0) = 1-1/e$ and $\ratrcrs(1) = \sqrt{3}-1\approx 0.732$. This algorithm and its analysis (including the definition of $\ratrcrs$) are provided in \Cref{subsec:rcrs}.
\end{itemize}

Though each of the two algorithms only achieves a competitive ratio of $1-1/e$ in the worst case, they nicely complement each other. Finally, we apply a randomization between the two algorithms.

\begin{theorem}
\label[theorem]{thm:secretary_matching}
A hybrid algorithm by running MAM with probability $0.8$ and running CPR with probability $0.2$ is at least $0.641$-competitive for prophet secretary matching.
\end{theorem}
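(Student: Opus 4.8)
The plan is to analyze the two component algorithms separately and show that, for each offline vertex $u$ and each edge $(u, v_i = v)$, the hybrid matches $(u, v_i=v)$ with probability at least $0.641 \cdot \xiuv$; summing against the LP objective and invoking the LP upper bound on the expected optimum matching then yields the claimed ratio. The two algorithms are parametrized so that $\mathrm{MAM}$ matches $(u, v_i=v)$ with probability $\ratmsm(\xu) \cdot \xiuv$ and $\mathrm{CAR}$ with probability $\ratrcrs(\xu) \cdot \xiuv$ (the second of these to be established in \Cref{subsec:rcrs}; note the theorem's ``CPR'' should read ``CAR''). Hence, under the $0.8/0.2$ mixture, the matching probability of edge $(u,v_i=v)$ is at least
\[
\big( 0.8 \cdot \ratmsm(\xu) + 0.2 \cdot \ratrcrs(\xu) \big) \cdot \xiuv~.
\]
So the entire proof reduces to the scalar inequality $0.8 \cdot \ratmsm(x) + 0.2 \cdot \ratrcrs(x) \ge 0.641$ for every $x \in [0,1]$.

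First I would assemble the inputs: from \Cref{subsec:msm}, $\ratmsm$ is a decreasing function with $\ratmsm(0) \approx 0.645$ and $\ratmsm(1) = 1 - 1/e$; from \Cref{subsec:rcrs}, $\ratrcrs$ is an increasing function with $\ratrcrs(0) = 1 - 1/e$ and $\ratrcrs(1) = \sqrt{3} - 1 \approx 0.732$. Since one is decreasing and the other increasing, the convex combination $\Gamma(x) = 0.8\,\ratmsm(x) + 0.2\,\ratrcrs(x)$ need not be monotone, so I would bound it by checking that the minimum over $[0,1]$ exceeds $0.641$. At $x = 0$ we get $0.8 \cdot 0.645 + 0.2 \cdot (1 - 1/e) \approx 0.516 + 0.126 = 0.642$; at $x = 1$ we get $0.8 \cdot (1 - 1/e) + 0.2 \cdot (\sqrt 3 - 1) \approx 0.506 + 0.146 = 0.652$. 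For intermediate $x$ I would rely on the explicit closed forms of $\ratmsm$ and $\ratrcrs$ (defined in the two subsections) and a computer-aided verification over a fine grid, together with a Lipschitz/monotonicity argument on the pieces to rule out dips between grid points — the same style of numerical certification used for \Cref{lem:verification_687}.

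The main obstacle is not this final scalar step but the per-vertex guarantees that feed into it, which are exactly the content of the two later subsections and which I would cite as black boxes here: (i) for $\mathrm{MAM}$, showing that the multistage activation scheme — where later-stage activation rates absorb assignments redirected from already-matched offline vertices — actually delivers matching probability $\ratmsm(\xu)\cdot \xiuv$ rather than degrading, which requires the generalization of the Manshadi–Gönen–Saberi type bound $h_u \le h_2(\xu)$ to control how much ``redirected mass'' each vertex receives; and (ii) for $\mathrm{CAR}$, handling the single large edge $(u,v)$ at every offline vertex via a $0$–$1$ step activation probability while keeping the remaining edges at constant rates, and verifying the contention-resolution-style selectability bound $\ratrcrs(\xu)$. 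Granting those two subsection-level results, the proof of \Cref{thm:secretary_matching} is just linearity of expectation over the LP solution plus the one-line convex-combination inequality above.
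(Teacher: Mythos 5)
Your proposal is correct and takes essentially the same route as the paper: cite \Cref{thm:matching_msm} and \Cref{thm:matching_rcrs} for the per-edge guarantees $\ratmsm(x^u)\cdot x_i^{(u,v)}$ and $\ratrcrs(x^u)\cdot x_i^{(u,v)}$, reduce to the scalar inequality $0.8\,\ratmsm(x)+0.2\,\ratrcrs(x)\ge 0.641$ on $[0,1]$, verify it by computer, and sum against the LP objective using the LP upper bound on the optimum. Your added endpoint sanity checks and the note that ``CPR'' is a typo for ``CAR'' are fine but not a different argument.
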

\begin{proof}
By \Cref{thm:matching_msm} and \Cref{thm:matching_rcrs}, the hybrid algorithm matches each edge $(u,v_i=v)$ with probability
\[
\left( 0.8 \cdot \ratmsm(\xu) + 0.2 \cdot \ratrcrs(\xu) \right) \cdot \xiuv \ge 0.641 \cdot \xiuv, \quad \forall \xu \in [0,1]~,
\]
where the inequality is verified with computer assistance. Refer to \Cref{fig:ratios_matching}. Recall that $\sum_{i,u,v} w_{uv} \cdot \xiuv$ is an upper bound of the expected maximum matching. The property achieved directly implies a competitive ratio of $0.641$.
\end{proof}

\begin{figure}
  \centering
  \includegraphics[width=0.5\textwidth]{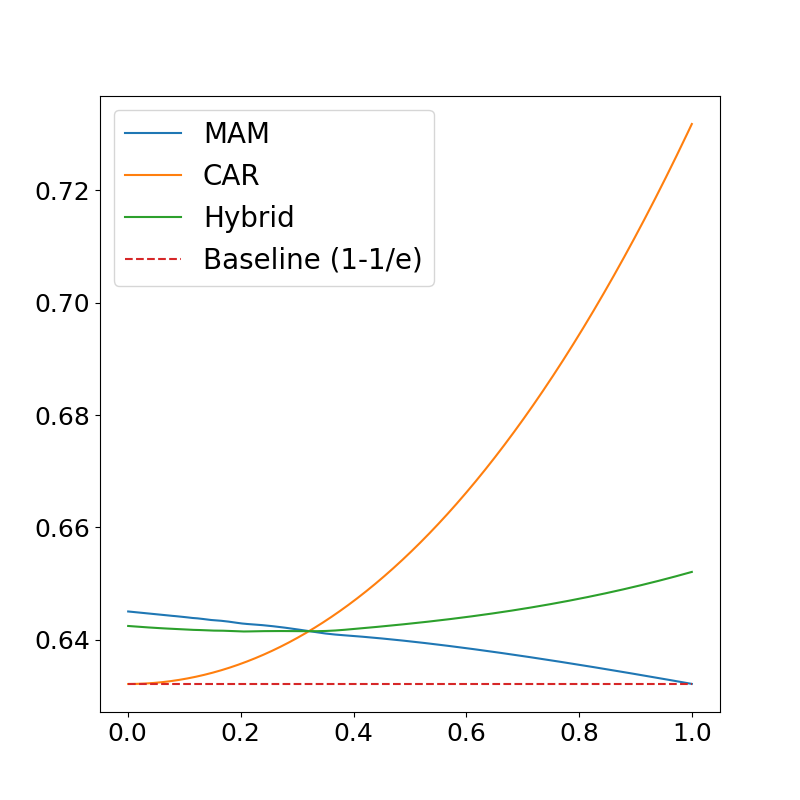}
  \caption{Competitive Ratio Curves For MAM, CAR, Hybrid}
  \label{fig:ratios_matching}
\end{figure}

\subsection{Query-Commit Stochastic Bipartite Matching}
Our algorithm for the prophet secretary matching can be applied to the query-commit stochastic bipartite matching problem. Consider a bipartite graph $G=(U,V,E)$ where each edge $e$ is associated with a weight $w_e$ and a probability $p_e$. 
At each step, the algorithm queries the existence of an edge $e$, which happens independently with probability $p_e$. If the edge exists, the algorithm must select it. The goal is to maximize the selected matching and competes against the optimum matching.

Gamlath, Kale, and Svensson~\cite{soda/GamlathKS19} designed an $1-1/e$ approximation algorithm for this problem, and the ratio is recently improved to $1-1/e+0.0014$ by Derakhshan and Farhadi~\cite{soda/Derakhshan023}.
As an implication of \Cref{thm:secretary_matching} through a folklore reduction~\cite{sicomp/FeldmanSZ21,icalp/FuTWWZ21} from the query-commit model to secretary model, we also improve the state-of-the-art approximation ratio of this problem. For completeness, we sketch a proof of the reduction in \Cref{app:reduction}.
\begin{theorem}
There exists a $0.641$ approximation algorithm for the query-commit stochastic bipartite matching problem.
\end{theorem}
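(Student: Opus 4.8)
The plan is to obtain the theorem as a black-box consequence of \Cref{thm:secretary_matching} via the folklore reduction of \cite{sicomp/FeldmanSZ21,icalp/FuTWWZ21} from the query-commit model to the prophet secretary matching model. Three ingredients are needed: (i) a linear program whose value $\mathrm{OPT}_{\mathrm{LP}}$ upper bounds the expected reward of the best (adaptive) query-commit algorithm; (ii) a way to view the query-commit instance as a prophet secretary matching instance whose own LP value is at least $\mathrm{OPT}_{\mathrm{LP}}$; and (iii) the observation that the $0.641$-competitive algorithm guaranteed by \Cref{thm:secretary_matching}, run on that instance, can be implemented legally under the query-commit rules. Ingredient (i) is the Gamlath--Kale--Svensson LP for stochastic matching \cite{soda/GamlathKS19} --- essentially the same LP already used in our matching section --- and the upper bound is their result.

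For ingredient (ii), fix the query-commit graph $G=(U,V,E)$ with edge probabilities $p_e$ and weights $w_e$, and an optimal solution $\{y_e\}$ of the query-commit LP. Keep $U$ as the offline side; create one online vertex for each $v\in V$, to be presented at a uniformly random time in $[0,1]$ --- this is precisely the randomness the secretary model supplies, and a uniformly random processing order is a legitimate (if non-adaptive) choice for a query-commit algorithm. The ``type'' revealed on the arrival of the online vertex associated with $v$ is the realization of which edges incident to $v$ exist, a known product of independent Bernoullis; the algorithm, seeing this type, is allowed to match $v$ only along an edge that has been revealed to exist. One then checks that $\{y_e\}$ induces a feasible LP solution of the constructed prophet secretary matching instance of the same objective value: the matching constraints $\sum_u \xiuv\le\piv$ and, crucially, the subset constraints of the prophet matching LP follow from the corresponding constraints satisfied by $\{y_e\}$ in the query-commit LP, in particular from $\sum_{e\in\delta(v)}y_e\le 1-\prod_{e\in\delta(v)}(1-p_e)$, which is exactly the probability that some edge of $v$ exists. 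Hence the constructed instance has LP value at least $\mathrm{OPT}_{\mathrm{LP}}$, and \Cref{thm:secretary_matching} applies to it. Running that algorithm and actually performing the queries upon each arrival (this is where commitment takes effect), each edge $(u,v)$ is committed with probability at least $0.641\,\xiuv$, so the expected reward is at least $0.641\sum_{i,u,v}w_{uv}\xiuv\ge 0.641\cdot\mathrm{OPT}_{\mathrm{LP}}$, which in turn is at least $0.641$ times the query-commit optimum.

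The technical heart --- and the step I expect to be the main obstacle --- is ingredient (ii): exhibiting the correspondence so that the induced $\xiuv$'s are genuinely LP-feasible for the constructed instance with the full objective $\sum_e w_e y_e$, i.e., translating each subset constraint of the query-commit LP into the matching LP and back. Three smaller points also need care: the algorithm of \Cref{thm:secretary_matching} may request an already-matched offline vertex, in which case it leaves $v$ unmatched --- harmless, since wasting a query never violates commitment; the ``$1$-regular without loss of generality'' normalization used in our matching section must be composed with the reduction by introducing dummy offline vertices, dummy online types, and dummy online vertices whose arrivals we simulate ourselves, exactly as in \cite{icalp/FuTWWZ21}; and one should confirm that no adaptivity in the query order is needed beyond choosing a uniformly random one. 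Since every step of the reduction is value-preserving, the $0.641$ ratio transfers verbatim; the routine verifications are deferred to \Cref{app:reduction}.
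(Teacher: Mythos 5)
Your high-level plan (reduce query-commit to prophet secretary matching, simulate arrival times, apply \Cref{thm:secretary_matching}) is the paper's, but the sentence ``Running that algorithm and actually performing the queries upon each arrival (this is where commitment takes effect)'' papers over the one step that is genuinely non-trivial and that the appendix spends essentially all of its space on. The constructed PSM instance has ``type'' equal to the full realization of which edges at $v$ exist, and the algorithm of \Cref{thm:secretary_matching} makes its proposal \emph{after seeing that full type}. In the query-commit model you cannot observe the type: the moment you probe an edge and it exists, you are forced to take it. So a PSM policy cannot be ``run'' directly; it must be converted into a sequence of probes. You identify ``ingredient (ii)'' (LP feasibility of the constructed instance) as the technical heart, and dismiss the rest as ``routine verifications,'' but (ii) is straightforward while the conversion is where the real argument lives.

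The paper's appendix handles this as follows: upon the arrival of online vertex $v$, compute the marginal vector $(x_v^u)_{u\in U}$ of matching probabilities that $A_{\mathrm{PSM}}$ would produce on $v$; observe this vector lies in the polytope $\{x\ge 0:\ \sum_{u\in S}x_v^u\le 1-\prod_{u\in S}(1-p_{(v,u)})\ \forall S\}$; invoke Lemma~5 of \cite{soda/GamlathKS19}, which characterizes the extreme points of this polytope as sequential-probe outcomes along a permutation $u_1,\dots,u_\ell$; decompose $(x_v^u)_u$ as a convex combination of extreme points, sample one, and probe in that order. This exactly reproduces the marginals while respecting commitment. None of this is ``choosing a uniformly random query order'' --- the per-vertex probe order is adaptive and is derived from the decomposition. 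Your proposal as written lacks this mechanism, so the claimed implementability of the PSM algorithm under query-commit is unsupported, and the proof does not go through without it.
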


\subsection{Multistage Activation-based Matching}
\label{subsec:msm}
In this section, we introduce the multistage activation-based matching algorithm. 
%
%
We first recall the correlated sampling method of Jaillet and Lu~\cite{mor/JailletL14}.
\paragraph{Jaillet and Lu Sampling\cite{mor/JailletL14}.} For every online vertex $v_i$ and type $v$:
\begin{itemize}
\item Consider an interval $[0,1]$ and align subintervals $I_u$ of length $\rhoiuv \eqdef \frac{\xiuv}{\piv}$ from left to right for each offline vertex $u \in U$. 
Refer to \Cref{fig:correlated-sample}. 
\item Sample $\eta$ uniformly at random from $[0,1]$. Let $\eta' = \eta \pm 1/2$ such that $\eta' \in [0,1]$.
\item Let $\muiv{u'}\eqdef \Prx{\eta \in I_u \text{ and } \eta' \in I_{u'}}$.
\end{itemize}
The quantities $\muiv{u'}$ are important parameters of our algorithm and analysis. We list the following properties of $\muiv{u'}$ that are straightforward to verify. 
\begin{lemma}
\label{lem:JL_sample}
	We have that
\begin{itemize}
\item $\muiv{u'} = \muiv[u']{u}$;
\item $\sum_{u' \ne u}\muiv{u'} = \rho_{i}^{(u,v)} - (2\rhoiuv-1)^+ \le \rhoiuv$.
\end{itemize}
\end{lemma}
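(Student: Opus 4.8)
\textbf{Proof proposal for \Cref{lem:JL_sample}.}

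The plan is to unwind the definition of the correlated sampling of Jaillet and Lu. Recall that, for a fixed online vertex $v_i$ and type $v$, we place the intervals $I_u$ of lengths $\rhoiuv = \xiuv / \piv$ consecutively inside $[0,1]$. The total length used is $\sum_u \rhoiuv = \sum_u \xiuv / \piv \le 1$ by the first LP constraint, so the construction makes sense; if the total is strictly less than $1$ we may imagine the leftover mass sits in a ``dummy'' interval, consistent with the $1$-regularity pre-processing. We draw $\eta \sim \mathrm{Unif}[0,1]$ and set $\eta' = \eta \pm \tfrac12$ so that $\eta' \in [0,1]$; this is the standard ``antipodal'' shift on the circle of circumference $1$. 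Then $\muiv{u'} = \Prx{\eta \in I_u,\ \eta' \in I_{u'}}$.

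For the first bullet, symmetry $\muiv{u'} = \muiv[u']{u}$ follows because the map $\eta \mapsto \eta'$ is an involution that preserves Lebesgue measure: the map sending $\eta$ to $\eta'$ is a measure-preserving bijection of $[0,1]$ whose inverse is the same rule (if $\eta' = \eta + \tfrac12$ then $\eta = \eta' - \tfrac12$, and analogously with the other sign), and it swaps the two events $\{\eta \in I_u, \eta' \in I_{u'}\}$ and $\{\eta \in I_{u'}, \eta' \in I_u\}$. Hence the two probabilities coincide.

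For the second bullet, I would compute $\sum_{u' \ne u} \muiv{u'} = \Prx{\eta \in I_u \text{ and } \eta' \notin I_u}$, since the events $\{\eta' \in I_{u'}\}$ over all $u'$ (including a possible dummy interval) partition the sample space, and the $u' = u$ term is subtracted. Now $\Prx{\eta \in I_u} = \rhoiuv$, so it remains to show $\Prx{\eta \in I_u \text{ and } \eta' \in I_u} = (2\rhoiuv - 1)^+$. Write $\ell = \rhoiuv$ for brevity and think of $[0,1]$ as a circle; the shift by $\tfrac12$ is a rotation by half the circumference. The set of $\eta$ with both $\eta$ and its antipode in an arc of length $\ell$ is exactly the intersection of that arc with its half-turn rotate, which is empty when $\ell \le \tfrac12$ and has length $2\ell - 1$ when $\ell > \tfrac12$ — i.e.\ length $(2\ell - 1)^+$. (One must double-check the boundary bookkeeping of the $\pm$ convention, but this only affects a measure-zero set.) Combining, $\sum_{u' \ne u}\muiv{u'} = \ell - (2\ell-1)^+ = \rhoiuv - (2\rhoiuv - 1)^+ \le \rhoiuv$, which is the claim.

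I do not expect any genuine obstacle here; this is a ``straightforward to verify'' lemma. The only mildly delicate point is handling the case $\sum_u \rhoiuv < 1$ cleanly (so that the complement event $\{\eta' \notin I_u\}$ is accounted for correctly, with the residual mass treated as an extra interval) and being careful that the antipodal-arc length computation is done on the circle rather than the interval, so that the ``wrap-around'' contributes correctly to the length $2\ell - 1$.
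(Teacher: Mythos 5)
Your proof is correct and fills in exactly the elementary computation the paper omits (the paper only says the lemma is ``straightforward to verify'' and provides no argument). The symmetry follows, as you say, because $\eta \mapsto \eta'$ is a Lebesgue-measure-preserving involution of $[0,1]$ viewed as a circle, so swapping the roles of $\eta$ and $\eta'$ exchanges the two events while preserving probability. For the second bullet, the decomposition
\[
\sum_{u' \ne u} \muiv{u'}
= \Prx{\eta \in I_u} - \Prx{\eta \in I_u,\ \eta' \in I_u}
= \rhoiuv - (2\rhoiuv-1)^+
\]
is exactly right, and your geometric argument that the arc $I_u$ of length $\ell$ and its antipodal rotation overlap in measure $(2\ell-1)^+$ is a clean way to compute the second term; one can also verify it by direct endpoint bookkeeping of the case $\ell > \tfrac12$ where the rotated arc wraps around and yields two sub-arcs of length $\ell - \tfrac12$ each.

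One small point worth tightening: for the \emph{equality} in the second bullet you are implicitly using that the intervals $\{I_{u'}\}_{u'}$ together exactly tile $[0,1]$, i.e.\ $\sum_{u'} \rhoiuv[u'] = 1$, which is a consequence of the $1$-regularity normalization (every LP constraint $\sum_u \xiuv \le \piv$ is tight after the pre-processing of Fu et al.). Your phrasing of a ``dummy'' residual interval is a safe fallback, but note that if any genuine residual mass remained, the middle expression would only upper-bound $\sum_{u' \ne u} \muiv{u'}$; the final inequality $\le \rhoiuv$ used downstream would of course still hold, so nothing breaks either way.
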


\begin{figure*}
	\centering
	\begin{tikzpicture}
		\draw(0,0)--(2,0)[red,ultra thick];
		\draw(2,0)--(3.2,0)[blue,ultra thick];
		\draw(3.2,0)--(5,0)[green,ultra thick];
		\draw(0,-0.1)--(0,0.1);
		\draw(2,-0.1)--(2,0.1);
		\draw(3.2,-0.1)--(3.2,0.1);
		\draw(5,-0.1)--(5,0.1);
		\draw(1,0.3)node{$I_{u_1}$};
		\draw(2.6,0.3)node{$I_{u_2}$};
		\draw(4.1,0.3)node{$I_{u_3}$};
		\draw(1,-0.3)node{$0.4$};
		\draw(2.6,-0.3)node{$0.24$};
		\draw(4.1,-0.3)node{$0.36$};
		\draw(-5,2)--(-7,0)[red,thick];
		\draw(-5,2)--(-5,0)[blue,thick];
		\draw(-5,2)--(-3,0)[green,thick];
		\filldraw[fill=white](-5,2)circle(0.55)node{$v_i=v$};
		\filldraw[fill=white](-7,0)circle(0.3)node{$u_1$};
		\filldraw[fill=white](-5,0)circle(0.3)node{$u_2$};
		\filldraw[fill=white](-3,0)circle(0.3)node{$u_3$};
		\draw(-7,1.3)node[red]{$\rhoiuv[u_1]=0.4$};
		\draw(-5,0.6)node[blue]{$\rhoiuv[u_2]=0.24$};
		\draw(-3,1.3)node[green]{$\rhoiuv[u_3]=0.36$};
	\end{tikzpicture}
	\caption{Illustration of intervals $I_{u_1},I_{u_2}$ and $I_{u_3}$for $v_i=v$.}
	\label{fig:correlated-sample}
\end{figure*}
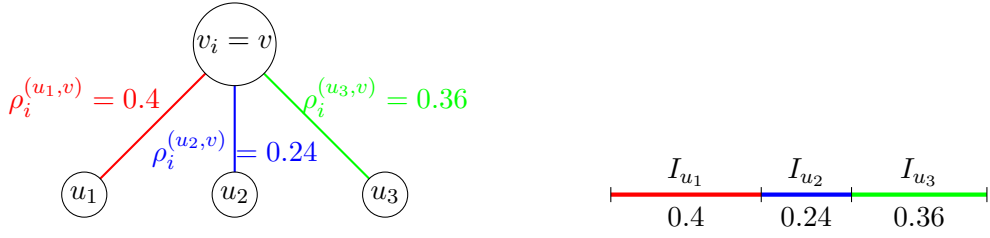

Now, we present our algorithm, a generalization of the algorithm by Yan~\cite{soda/Yan24}.

\begin{tcolorbox}[title=Multistage Activation-based Matching]
Parameters: $\beta_0=0.05, \beta_1=0.75$. \\
Notations: $M(t), N(t)$ are the sets of matched and unmatched offline vertices at time $t$.
\begin{enumerate}[(1)]
\item First Stage: for $t \in [0,\beta_0)$, let $\aiuv(t) = \ziuv$ for every $u,v,i$;
\item Second Stage: for $t \in [t_0,\beta_1)$, let $\aiuv(t) = \rhoiuv$ for every $u,v,i$;
\item If vertex $v_i$ arrives at time $t \in [0, \beta_1)$ with type $v \sim F_i$: \\
propose to a random offline vertex $u \in U$ with probability 
\begin{equation}
\giuv(t) = \aiuv(t) \cdot \exp\left(-\int_0^t \Aiu(y) \dd y\right),
\end{equation}
where $\Aiu(t) \eqdef \Exlong[v]{\aiuv(t)} = \sum_{v} \piv \cdot \aiuv(t)$. \\
Accept the edge $(u,v_i=v)$ if $u$ is unmatched.
\item Third Stage: at time $\beta_1$, let $\aiuv(t)$ for $t \in [\beta_1,1)$ be the following: \\
if $u$ is unmatched at time $t_1$, for every $v,i$,
\begin{multline*}
\aiuv(t) = \rhoiuv + \sum_{u' \ne u} \muiv[u']{u} \cdot \left(1 - \exp\left(-\int_0^{\beta_1} A_i^{u'}(y) \dd y\right) \right) \\
+ \sum_{u' \in M(\beta_1)} \muiv[u']{u} \cdot \exp\left(-\int_0^{\beta_1} A_i^{u'}(y) \dd y\right),
\end{multline*}
else, let $\aiuv(t) = 0$ for every $v,i$.
\item If vertex $v_i$ arrives at time $t \in [\beta_1,1]$ with type $v \sim F_i$: \\
propose to a random offline vertex $u \in U$ with probability 
\begin{equation}
\giuv(t) = \aiuv(t) \cdot \exp\left(-\int_0^t \Aiu(y)~\dd y\right),	
\end{equation}
where $\Aiu(t) \eqdef \Exlong[v]{2\rhoiuv - \ziuv} = \sum_{v} \piv \cdot \left(2\rhoiuv - \ziuv \right)$. \\
Accept the edge $(u,v_i=v)$ if $u$ is unmatched.
\end{enumerate}
\end{tcolorbox}

\clearpage
Our algorithm consists of three stages, in which the first two stages are non-adaptive and are similar to the step activation rates algorithm for the prophet secretary problem; the third stage is adaptive to the randomness of the instance and our algorithm from the first two stages.

The parameters $\aiuv(t)$ play a similar role as the activation rates for the prophet secretary problem. 
We first verify that the stated algorithm is well-defined. That is, the vector $\left\{\giuv(t)\right\}_{u}$ is a valid probability distribution for every $t \in [0,1]$.
\begin{lemma}
For every $v \in V, i \in [n], t \in [0,1]$, we have $\sum_{u} \giuv(t) \le 1$. 
\end{lemma}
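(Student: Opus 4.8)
The plan is to check $\sum_u \giuv(t)\le 1$ separately in each of the three stages; the first two stages are routine and the third carries all the weight. For $t\in[0,\beta_1)$ the activation rate satisfies $\aiuv(t)\le\rhoiuv$ (it equals $\ziuv\le\rhoiuv$ on $[0,\beta_0)$ and $\rhoiuv$ on $[\beta_0,\beta_1)$), and since $\Aiu(y)\ge 0$ the factor $e^{-\int_0^t\Aiu(y)\dd y}$ is at most $1$; hence $\giuv(t)\le\rhoiuv$ and $\sum_u\giuv(t)\le\sum_u\rhoiuv=\tfrac{1}{\piv}\sum_u\xiuv=1$, where the last equality holds because $1$-regularity forces $\sum_u\xiuv=\piv$ (we have $\sum_u\xiuv\le\piv$ for every $v$ while $\sum_{u,v}\xiuv=1=\sum_v\piv$, so equality must hold termwise).

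For $t\in[\beta_1,1]$ the vertices in $M(\beta_1)$ have $\aiuv(t)=0$, so only $u\in N(\beta_1)$ contribute. Put $q_i^u\eqdef e^{-\int_0^{\beta_1}\Aiu(y)\dd y}\in(0,1]$; using $\Aiu\ge 0$ once more gives $e^{-\int_0^t\Aiu(y)\dd y}\le q_i^u$, so it suffices to prove $\sum_{u\in N(\beta_1)}q_i^u\,\aiuv(t)\le 1$. The first step is to simplify the third-stage rate: collecting, for each $u'\in M(\beta_1)$, its two contributions $1-q_i^{u'}$ and $q_i^{u'}$, and then using $\rhoiuv+\sum_{u'\ne u}\muiv[u']{u}=\min\{2\rhoiuv,1\}$ from \Cref{lem:JL_sample} (with the symmetry $\muiv[u']{u}=\muiv{u'}$), one gets, for $u\in N(\beta_1)$,
\[
\aiuv(t)=\min\{2\rhoiuv,1\}-\sum_{u'\in N(\beta_1)\setminus\{u\}}q_i^{u'}\cdot\muiv[u']{u}~.
\]
In particular $\rhoiuv\le\aiuv(t)\le 1$, which also gives nonnegativity of the $\giuv(t)$. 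The quantity to bound is now $\sum_{u\in N}q_i^u\min\{2\rhoiuv,1\}-\sum_{u\ne u'\in N}q_i^u q_i^{u'}\,\muiv[u']{u}$.

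To finish, I would return to the sampling behind \Cref{lem:JL_sample}. Since $1$-regularity gives $\sum_u\rhoiuv=1$, the intervals $I_u$ partition $[0,1]$; for $\eta$ uniform on $[0,1]$ with antipode $\eta'=\eta\pm\tfrac{1}{2}$, let $u^\ast,u^{\ast\ast}$ be defined by $\eta\in I_{u^\ast}$ and $\eta'\in I_{u^{\ast\ast}}$. Then $\min\{2\rhoiuv,1\}=\Prx{u^\ast=u\text{ or }u^{\ast\ast}=u}$ and $\muiv[u']{u}=\Prx{u^\ast=u',\ u^{\ast\ast}=u}$, so the quantity above equals $\Ex[\eta]{\Phi(\eta)}$ with
\[
\Phi(\eta)=\sum_{u\in N}q_i^u\,\ind{u\in\{u^\ast,u^{\ast\ast}\}}\;-\;q_i^{u^\ast}q_i^{u^{\ast\ast}}\,\ind{u^\ast\ne u^{\ast\ast},\ u^\ast\in N,\ u^{\ast\ast}\in N}~.
\]
A short case analysis on $(u^\ast,u^{\ast\ast})$ then gives $\Phi(\eta)\le 1$ pointwise: if the two are distinct and both in $N$, then $\Phi=q_i^{u^\ast}+q_i^{u^{\ast\ast}}-q_i^{u^\ast}q_i^{u^{\ast\ast}}=1-(1-q_i^{u^\ast})(1-q_i^{u^{\ast\ast}})\le 1$; if $u^\ast=u^{\ast\ast}$, or exactly one of them lies in $N$, then $\Phi$ reduces to a single $q_i^u\le 1$; and if neither lies in $N$, then $\Phi=0$. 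Taking expectations yields $\sum_{u\in N}q_i^u\aiuv(t)\le 1$, which completes the proof.

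The step I expect to be the real obstacle is the third stage: the crude bounds $q_i^u\le 1$ and $\aiuv(t)\le\min\{2\rhoiuv,1\}$ are hopeless, since $\sum_u\min\{2\rhoiuv,1\}$ can exceed $1$ by a large amount. One has to keep the quadratic correction term $q_i^u q_i^{u'}\,\muiv[u']{u}$ and recognize that, after rewriting everything through the Jaillet--Lu sampling variable $\eta$, the whole expression is the expectation of a function that is bounded by $1$ pointwise via the identity $a+b-ab=1-(1-a)(1-b)$.
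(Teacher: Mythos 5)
Your proof is correct, and for the third stage it takes a genuinely different route from the paper's (the $t<\beta_1$ part matches the paper exactly). The paper's argument is a redistribution/bookkeeping one: each product $q_i^u\cdot\muiv[u']{u}\cdot(\cdots)$ appearing inside $\giuv(\beta_1)$ is reassigned from $u$ to $u'$, after which the redistributed contribution of every offline vertex is bounded by $\rhoiuv$ using the crude facts $q_i^u\le 1$ and $\sum_{u'\ne u}\muiv{u'}\le\rhoiuv$; summing over $u$ gives $1$. You instead unroll the $\mu$'s back to the underlying Jaillet--Lu sampling variable $\eta$, recognize the quantity to bound as $\Ex[\eta]{\Phi(\eta)}$ where $\Phi$ depends on $\eta$ only through the two offline vertices $u^*,u^{**}$ hit by $\eta,\eta'$, and then bound $\Phi\le 1$ pointwise via $q_i^{u^*}+q_i^{u^{**}}-q_i^{u^*}q_i^{u^{**}}=1-\bigl(1-q_i^{u^*}\bigr)\bigl(1-q_i^{u^{**}}\bigr)$. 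This is arguably the cleaner of the two proofs, since it makes the complementary-event structure and the reason for tightness transparent, whereas the paper's inequality chain is a little opaque. Two small remarks: (i) your route leans on the fact that the intervals $\{I_u\}_u$ exactly partition $[0,1]$, which your derivation of $\sum_u\rhoiuv=1$ from $1$-regularity correctly supplies (this is also implicitly used in the paper's \Cref{lem:JL_sample}); (ii) the paper's displayed case analysis appears to have the $M(\beta_1)$ and $N(\beta_1)$ labels swapped, though this does not affect the total since both cases are summed.
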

\begin{proof}
For $t \in [0,\beta_1)$, we have
\[
\sum_{u} \giuv(t) \le \sum_{u} \aiuv(t)\le \sum_{u} \rhoiuv = 1~.
\]
For $t \in [\beta_1,1]$, notice that $\aiuv(t) = \aiuv(t_1)$ and $\giuv(t)$ is decreasing in $t$. It suffices to verify that $\sum_{u} \giuv(\beta_1) \le 1$.
We redistribute the terms with factor $\muiv[u']{u}$ in $a_i^{(u,v)}(t)$ to $u'$ and consider the (redistributed) contribution of every $u$ to the sum. If $u \in M(\beta_1)$, its contribution is:
\begin{multline*}
    \rhoiuv \cdot \exp\left(-\int_0^{\beta_1} \Aiu(y) \dd y\right) \\
    + \sum_{u'\neq u} \muiv[u]{u'} \cdot \left(1 - \exp\left(-\int_0^{\beta_1} \Aiu(y) \dd y\right) \right) \cdot \exp\left(-\int_0^{\beta_1} A_i^{u'}(y) \dd y\right) \\
    \le \rhoiuv \cdot \exp\left(-\int_0^{\beta_1} \Aiu(y) \dd y\right) + \sum_{u'\neq u} \muiv[u]{u'} \cdot \left(1 - \exp\left(-\int_0^{\beta_1} \Aiu(y) \dd y\right) \right) \\
    \le \rhoiuv \cdot \exp\left(-\int_0^{\beta_1} \Aiu(y) \dd y\right) + \rhoiuv \cdot \left(1 - \exp\left(-\int_0^{\beta_1} \Aiu(y) \dd y\right) \right) = \rhoiuv~,
\end{multline*}
where the first inequality is by dropping exponent term; the second inequality is by \Cref{lem:JL_sample}.
Else $u \in N(\beta_1)$, its contribution is:
\begin{align*}
    \sum_{u'\neq u} \muiv[u]{u'} \cdot \exp\left(-\int_0^{\beta_1} A_i^{u'}(y) \dd y\right)
    \le \sum_{u'\neq u} \muiv[u]{u'} \le \rhoiuv~.
\end{align*}
Therefore, we have $\sum_u \giuv(\beta_1) \le \sum_u \rhoiuv = 1$.
\end{proof}

For every vertex $u \in U$ and $t \in [0,1]$, let $A^u(t) \eqdef \sum_i \Aiu(t)$ and $\Amiu(t) \eqdef \sum_{j \ne i} A_j^u(t)$.
\begin{lemma}
\label{lem:aiu}
We have:
\[
A^u(t) =
\begin{cases}
\hu & t \in [0, \beta_0]~; \\
1 & t \in (\beta_0, \beta_1]~; \\
2-\hu & t \in (\beta_1, 1]~. \\
\end{cases}
\]
\end{lemma}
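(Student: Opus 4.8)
The plan is to compute $A^u(t) = \sum_i \Aiu(t)$ stage by stage, using the explicit definitions of $\aiuv(t)$ in each of the three stages and the $1$-regularity assumption $\sum_{i,v} \xiuv = 1$. Throughout I will repeatedly use that $\xiuv = \piv \rhoiuv$, so that $\sum_i \Aiu(t) = \sum_i \sum_v \piv \aiuv(t)$ can be rewritten in terms of the $\xiuv$'s and the quantity $\hu = \sum_{i,v}(2\xiuv - \piv)^+$.

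\textbf{First stage ($t \in [0,\beta_0]$).} Here $\aiuv(t) = \ziuv = (2\rhoiuv - 1)^+$, so $\sum_i \Aiu(t) = \sum_{i,v} \piv (2\rhoiuv - 1)^+ = \sum_{i,v}(2\xiuv - \piv)^+ = \hu$ by definition.

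\textbf{Second stage ($t \in (\beta_0, \beta_1]$).} Here $\aiuv(t) = \rhoiuv$, so $\sum_i \Aiu(t) = \sum_{i,v} \piv \rhoiuv = \sum_{i,v} \xiuv = 1$ by $1$-regularity.

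\textbf{Third stage ($t \in (\beta_1, 1]$).} This is the step that requires the most care, because $\aiuv(t)$ is now a random quantity depending on whether $u \in M(\beta_1)$ or $u \in N(\beta_1)$, and in the latter case on the realized matching status of the other offline vertices $u'$. The key observation is that the lemma should be read as a statement about the realized $A^u(t)$ along any sample path: for each fixed $i$, I will expand $\sum_v \piv \aiuv(t)$. If $u \in N(\beta_1)$, the definition gives three groups of terms; the ``base'' term contributes $\sum_v \piv \rhoiuv = \rho$-type mass, and the two $\muiv[u']{u}$ sums, when summed over $v$ and combined using $\sum_{u' \ne u} \muiv[u']{u} = \rhoiuv - (2\rhoiuv - 1)^+$ from \Cref{lem:JL_sample}, telescope. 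The cleanest route is to sum over \emph{all} offline vertices $u$ simultaneously rather than fix one: by the symmetry $\muiv[u']{u} = \muiv{u'}$ (again \Cref{lem:JL_sample}), the mass redistributed \emph{into} each unmatched $u$ from the $\muiv[u']{u}$ terms equals the mass redistributed \emph{out of} the other vertices, so the global sum $\sum_{u} \sum_i \Aiu(t)$ is conserved and one can check it equals $\sum_u (2 - h_u)$ over the unmatched vertices — but since the lemma asserts a per-vertex identity, I instead argue directly: for a fixed unmatched $u$ and fixed $i$, $\sum_v \piv \aiuv(t) = \Aiu(\beta_1^-)$-analogue plus correction terms, and the definition of $\Aiu(t)$ in step (5) of the algorithm, $\Aiu(t) = \sum_v \piv(2\rhoiuv - \ziuv)$, is precisely what makes $\sum_i \Aiu(t) = \sum_{i,v}\piv(2\rhoiuv - \ziuv) = 2 \cdot 1 - \hu = 2 - \hu$. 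So in fact the per-$i$ activation rate $\Aiu(t)$ used to compute the proposal probabilities in step (5) is the \emph{deterministic} expression $\sum_v \piv(2\rhoiuv - \ziuv)$, independent of the matching realization, and the randomness only enters the individual $\aiuv(t)$'s (not their $F_i$-expectation $\Aiu(t)$) — that is the crux, and I would state it as the one nontrivial point: the $\muiv[u']{u}$-corrections are designed so that $\Exlong[v]{\aiuv(t)}$ is unchanged. Then $A^u(t) = \sum_i \Aiu(t) = \sum_{i,v}\piv(2\rhoiuv - \ziuv) = 2 - \hu$.

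\textbf{Main obstacle.} The only genuine subtlety is verifying, in the third stage, that the $F_i$-expectation $\Aiu(t) = \Exlong[v]{\aiuv(t)}$ collapses to the stated deterministic value $\sum_v \piv(2\rhoiuv - \ziuv)$ despite $\aiuv(t)$ itself being random and containing the $\muiv[u']{u}$ redistribution terms; this relies on showing those extra terms have zero net contribution to $\sum_v \piv \aiuv(t)$, which in turn uses $\sum_{u' \ne u}\muiv[u']{u} = \rhoiuv - (2\rhoiuv-1)^+ = \rhoiuv - \ziuv$ together with the exponential weights summing appropriately — a short computation, but the one place where \Cref{lem:JL_sample} and the precise form of step (4) of the algorithm are both needed. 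Everything else is a direct substitution.
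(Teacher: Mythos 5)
Your computations for the first two stages are exactly the paper's argument. Your third stage also lands on the right answer, but your framing of the ``Main obstacle'' misidentifies what is going on there. You say the crux is verifying that $\Exlong[v]{\aiuv(t)}$ collapses to the deterministic value $\sum_v \piv (2\rhoiuv - \ziuv)$ because the $\muiv[u']{u}$-corrections are designed to leave the $F_i$-expectation unchanged. That claim is not true: $\Exlong[v]{\aiuv(t)}$ for $t\in(\beta_1,1]$ is genuinely random (it is $0$ when $u\in M(\beta_1)$, and even when $u\in N(\beta_1)$ it depends on which other vertices landed in $M(\beta_1)$), so it does \emph{not} equal the deterministic expression on every sample path. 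The resolution, which the paper uses and which you actually stumble onto mid-paragraph, is much simpler: in step (5) of the algorithm, $\Aiu(t)$ is \emph{defined} to be the prefixed deterministic quantity $\sum_v \piv(2\rhoiuv - \ziuv)$, decoupled from the realized $\aiuv(t)$. This is a deliberate design choice --- the exponent in the proposal probability uses a fixed $\Aiu$, not $\Exlong[v]{\aiuv(t)}$ --- so the third case of the lemma is a one-line substitution with no $\mu$-bookkeeping or telescoping needed. The $\muiv[u']{u}$-redistribution and \Cref{lem:JL_sample} only become relevant in the well-definedness lemma and \Cref{thm:matching_msm}, not here. If you strip out the ``Main obstacle'' discussion and just read off the definition in step (5), your proof matches the paper's.
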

\begin{proof}
For $t\in [0,\beta_0]$, we have $A^u(t) = \sum_i \Aiu(t) = \sum_i \Ex[v]{\ziuv} = h_u$, where the last equality holds by the definition of $\hu$.
For $t \in (\beta_0,\beta_1]$, we have $A^u(t) = \sum_i \Ex[v]{\rhoiuv} = \sum_{i} \sum_v \piv \cdot \rhoiuv = \sum_{i,v} \xiuv = 1$.
For $t \in (\beta_1,1]$, we have $ A^u(t) = \sum_i \Ex[v]{2\rhoiuv - \ziuv} = 2-h_u$.
\end{proof}

As a consequence, we define and calculate the following functions:
\begin{itemize}
\item $c_1(h_u) \eqdef \frac{1-e^{-h_u\beta_0}}{h_u} = \int_0^{\beta_0} \exp\left( -\int_0^t A^u(y) \dd y \right) \dd t $~;
\item $c_2(h_u) \eqdef e^{-h_u \beta_0} \left( 1-e^{-(\beta_1-\beta_0)} \right) = \int_{\beta_0}^{\beta_1} \exp\left( -\int_0^t A^u(y) \dd y \right) \dd t $~;
\item $c_3(h_u) \eqdef e^{-h_u \beta_0 - (\beta_1 - \beta_0)} \cdot \frac{1-e^{-(2 - h_u) (1 - \beta_1)}}{2 - h_u} = \int_{\beta_1}^1 \exp\left( -\int_0^t A^u(y) \dd y \right) \dd t $~.
\end{itemize}

\begin{theorem}
\label[theorem]{thm:matching_msm}
For every $u \in U, v \in V, i \in [n]$, Multistage Proposal-based Matching matches edge $(u,v_i=v)$ with probability $\ratmsm(\xu) \cdot \xiuv$, where
\[
\ratmsm(x) \eqdef \min \left( c_1(h_2(x)) + c_2(h_2(x)) + c_3(h_2(x)), c_2(h_2(x)) + \left( 2-e^{-(\beta_1-\beta_0)} \right) \cdot c_3(h_2(x))\right)~.
\]
\end{theorem}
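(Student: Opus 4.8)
The plan is to analyze the matching probability of a fixed edge $(u, v_i = v)$ stage by stage, exploiting the fact that everything that matters about vertex $u$ is summarized by $h_u$ via \Cref{lem:aiu}, and then reduce the final bound to a function of $\xu$ alone via the inequality $h_u \le h_2(\xu)$ from \Cref{lem:hu_xu} together with monotonicity of the $c_j$'s. First I would observe that, for the first two stages ($t \in [0,\beta_1)$), the proposal process for edge $(u,v_i=v)$ is exactly a Poisson-type activation process as in the prophet secretary analysis: by the same computation as in \Cref{lem:all-act-rate} (carried out per offline vertex $u$), the probability that $v_i$ proposes to $u$ with type $v$ and is accepted because $u$ is still unmatched factors as $\piv \cdot \aiuv(t) \cdot \exp(-\int_0^t A^u(y)\dd y)$ integrated over $t$, since the event that $u$ is unmatched at time $t$ is precisely the event that no $v_j$ with $j \ne i$ has successfully proposed to $u$ before $t$, which has probability $\exp(-\int_0^t \Amiu(y)\dd y)$, and multiplying by $\exp(-\int_0^t \Aiu(y)\dd y)$ from $\giuv$ itself reconstitutes $\exp(-\int_0^t A^u(y)\dd y)$. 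Using \Cref{lem:aiu} this gives the first-two-stage contribution $\piv(\ziuv \cdot c_1(h_u) + \rhoiuv(c_1(h_u)+c_2(h_u)))$ — wait, more carefully: $\aiuv = \ziuv$ on $[0,\beta_0)$ and $\rhoiuv$ on $[\beta_0,\beta_1)$, so the contribution is $\piv\big(\ziuv c_1(h_u) + \rhoiuv c_2(h_u)\big)$.

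The crux is the third stage. Here $\aiuv(t)$ for $t \in [\beta_1,1)$ is a random variable depending on which offline vertices are matched at time $\beta_1$, but the key point is that its \emph{expectation} over the algorithm's randomness equals $\rhoiuv + \sum_{u'\ne u}\muiv[u']{u}$, because for each $u'$ the vertex $u'$ is matched at $\beta_1$ with probability $1 - \exp(-\int_0^{\beta_1} A_i^{u'}(y)\dd y)$ (by \Cref{lemma:single-act-rate} applied to $u'$; note $v_i$'s own proposal to $u'$ doesn't affect $u'$'s matched status conditioning here — this needs the independence structure to be spelled out, and is where \citet{soda/Yan24}'s argument is adapted), so the matched and unmatched terms in the definition of $\aiuv(t)$ combine to $\rhoiuv + \sum_{u'\ne u}\muiv[u']{u}\big((1-p_{u'}) + p_{u'}\big)$ — actually the combination is exactly $\rhoiuv + \sum_{u'\ne u}\muiv[u']{u}$ regardless, since the matched case contributes $\muiv[u']{u}\exp(-\int_0^{\beta_1}A_i^{u'})$ and the unmatched case contributes $\muiv[u']{u}(1-\exp(-\int_0^{\beta_1}A_i^{u'}))$, summing to $\muiv[u']{u}$ deterministically. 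So in expectation the third-stage activation rate at $u$ (conditioned on $u$ unmatched at $\beta_1$) is $\rhoiuv + \sum_{u'\ne u}\muiv[u']{u} = \rhoiuv + \rhoiuv - (2\rhoiuv-1)^+ = 2\rhoiuv - \ziuv$ by \Cref{lem:JL_sample}, which is precisely the rate used in defining $\Aiu(t)$ in stage 5. Thus the per-$u$ aggregate rate $A^u(t) = 2 - h_u$ on $(\beta_1,1]$ is consistent, and the third-stage matching probability is $\piv\big(2\rhoiuv - \ziuv\big)\cdot\big(\text{prob. }u\text{ unmatched at }\beta_1\big)\cdot\big(\text{survival on }[\beta_1,t]\big)$ integrated — but one must be careful that the "second chance" redirections give an \emph{additional} boost beyond the naive survival term, which is what produces the $(2 - e^{-(\beta_1-\beta_0)})$ factor in front of $c_3$ in the second branch of the min. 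I would isolate this gain: the extra contribution comes from the $\sum_{u'\in M(\beta_1)}\muiv[u']{u}\exp(-\int_0^{\beta_1}A_i^{u'})$ term, which corresponds to proposals that \emph{would have} gone to an already-matched $u'$ being rerouted to $u$; bounding its cumulative effect against $c_3$ and the factor $e^{-(\beta_1-\beta_0)} = \exp(-\int_{\beta_0}^{\beta_1}A^{u}(y)\dd y)$ (the probability $v_i$ survives the middle stage without proposing) is the technical heart.

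Assembling, the total matching probability of $(u,v_i=v)$ is $\piv\rhoiuv$ times a factor that is at least the minimum of two expressions: the "first-chance" bound $c_1(h_u)+c_2(h_u)+c_3(h_u)$ (obtained by weighting $\ziuv$ and $2\rhoiuv-\ziuv$ against $\rhoiuv$ and using $\ziuv \le \rhoiuv$, $2\rhoiuv - \ziuv \le 2\rhoiuv$ — precisely the same convexity trick as in \Cref{lem:prophet-secretary-sufficient-condition}'s use), and the "second-chance-boosted" bound $c_2(h_u) + (2 - e^{-(\beta_1-\beta_0)})c_3(h_u)$. Since $\piv\rhoiuv = \xiuv$, this gives matching probability $\ge \xiuv \cdot \min(\cdots)$ with $h_u$ in place of $h_2(\xu)$. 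The final step is to replace $h_u$ by $h_2(\xu)$: one checks that $c_1, c_2, c_3$ and hence both arguments of the min are monotone in $h_u$ in the direction that makes substituting the upper bound $h_u \le h_2(\xu)$ (\Cref{lem:hu_xu}, via the LP constraints) only decrease the guarantee, yielding $\ratmsm(\xu)$ as defined. The main obstacle I anticipate is making the third-stage argument fully rigorous — specifically, justifying that the random activation rate $\aiuv(t)$ can be replaced by its expectation inside the acceptance-probability integral without introducing correlation errors between "$u$ is unmatched at $\beta_1$" and "$u'$ is matched at $\beta_1$" for the various $u'$, and separately verifying the monotonicity-in-$h_u$ claim holds on the whole range $h_u \in [0, h_2(\xu)]$ rather than just near the endpoints; the rest is bookkeeping with the explicit exponential integrals already tabulated as $c_1, c_2, c_3$.
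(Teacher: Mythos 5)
Your overall plan is sound: compute the first-two-stage contribution exactly, bound the third-stage contribution from below, and then push the resulting min-of-two-linear-forms bound through $h_u \le h_2(\xu)$ using monotonicity of $c_1, c_2, c_3$. The first-two-stage computation and the final monotonicity/substitution step are handled correctly. But the third-stage argument has a genuine error plus an acknowledged gap, and the error is not cosmetic.

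The error: you assert that $\aiuv(\beta_1)$ ``combines to $\rhoiuv + \sum_{u' \ne u} \muiv[u']{u}$ deterministically.'' This misreads the algorithm. In the definition, the middle sum $\sum_{u'\ne u}\muiv[u']{u}\bigl(1-\exp(-\int_0^{\beta_1}A_i^{u'})\bigr)$ is taken over \emph{all} $u'\ne u$ irrespective of matched status, and the third sum $\sum_{u'\in M(\beta_1)}\muiv[u']{u}\exp(-\int_0^{\beta_1}A_i^{u'})$ is taken only over matched $u'$. So an unmatched $u'$ contributes $\muiv[u']{u}(1-\exp(\cdots))$ and a matched $u'$ contributes the full $\muiv[u']{u}$; the rate $\aiuv(\beta_1)$ genuinely depends on the random set $M(\beta_1)$ and is strictly below $2\rhoiuv - \ziuv$ whenever some $u'$ is unmatched. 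Separately, you state that $u'$ is matched at $\beta_1$ with probability $1-\exp(-\int_0^{\beta_1}A_i^{u'})$, but $A_i^{u'}$ is only vertex $i$'s contribution to $u'$'s activation rate; the relevant probability involves the aggregate rate $\sum_{j\ne i}A_j^{u'}$, which is of order $1$, not $1/n$. The $(2 - e^{-(\beta_1-\beta_0)})$ factor in the stated ratio does not come from the expected activation rate equalling $2\rhoiuv - \ziuv$; it comes from showing the expected rate is at least $\rhoiuv + (\rhoiuv - \ziuv)(1 - e^{-(\beta_1 - \beta_0)})$ after accounting for the fact that not every $u'$ will be matched.

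The gap: you correctly flag that the correlation between $\aiuv(\beta_1)$ (which depends on $M(\beta_1)$) and $\ind{u\in N(t)}$ must be controlled, but you do not say how. This is exactly where the paper's \Cref{lem:matching_lb} is needed: its second and third claims give lower bounds on $\Pr[u\in N(t)\mid t_i=t]$ and on the \emph{joint} probability $\Pr[u\in N(t), u'\in M(\beta_1)\mid t_i=t]$ by a coupling with a hypothetical algorithm whose activation rates decouple across the two offline vertices. Only with this in hand can one compute $\Ex[\aiuv(\beta_1)\cdot\ind{u\in N(t)}]$ term by term and factor out $\exp(-\int_0^t\Amiu)$ cleanly. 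Without a substitute for this lemma, replacing the random rate by an expectation inside the integral is not justified, since the event $\{u\in N(t)\}$ is negatively correlated with the total proposal mass to $u$ and hence correlated with $M(\beta_1)$.
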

As an immediate corollary of the theorem, in the infinitesimal case when $\max_{u} x_u = o(1)$, our algorithm achieves a competitive ratio of $\ratmsm(0^+) = c_2(1-\ln 2) + (2 - e^{-(\beta_1 -\beta_0)}) c_3(1-\ln 2) \approx 0.645$. Note that we intentionally use the same parameters $\beta_0,\beta_1$ as Yan~\cite{soda/Yan24} so that our competitive ratio is exactly the same as Yan's ratio. Moreover, the competitive ratio of the algorithm is at least $\ratmsm(1) = 1-1/e$ for all instances.

\subsubsection{Proof of \Cref{thm:matching_msm}}

From now on, we fix arbitrary $u \in U, v \in V, i \in [n]$ and study the probability that $(u,v_i=v)$ is matched by our algorithm. 
We shall prove that
\begin{equation}
\label{eqn:matching_main}
\Prx{(u,v_i=v) \text{ is matched}} \ge \min \left( c_1(\hu) + c_2(\hu) + c_3(\hu), c_2(\hu) + \left( 2-e^{-(\beta_1-\beta_0)} \right) \cdot c_3(\hu)\right) \cdot \xiuv ~.
\end{equation}
This inequality together with the following lemma and that $h_u \le h_2(\xu)$ would conclude the statement of the theorem.
\begin{lemma}
\label{lem:math_fact}
The function $\left( c_1(h) + c_2(h) + c_3(h), c_2(h) + \left( 2-e^{-(\beta_1-\beta_0)} \right) \cdot c_3(h)\right)$ is decreasing for $h \in [0,1]$~.
\end{lemma}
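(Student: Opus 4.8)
The claim is that the map $h \mapsto \bigl(c_1(h)+c_2(h)+c_3(h),\; c_2(h)+(2-e^{-(\beta_1-\beta_0)})c_3(h)\bigr)$ is ``decreasing'' on $[0,1]$ — which I read as: each of the two coordinate functions is non-increasing in $h$ (so that the minimum of the two, $\ratmsm(h_2(x))$, is also non-increasing, and combined with monotonicity of $h_2$ one controls $\ratmsm$). So the plan is simply to differentiate each of the three building blocks $c_1,c_2,c_3$ with respect to $h$, then assemble the two linear combinations and check the sign of each derivative on $[0,1]$, using the fixed numerical values $\beta_0=0.05$, $\beta_1=0.75$ (so $\beta_1-\beta_0=0.7$ and $1-\beta_1=0.25$).

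First I would record the derivatives. Writing $c_1(h)=\frac{1-e^{-h\beta_0}}{h}$, one gets $c_1'(h)=\frac{h\beta_0 e^{-h\beta_0}-(1-e^{-h\beta_0})}{h^2}$; since $1-e^{-x}\ge x e^{-x}$ for all $x\ge 0$ (convexity of $e^{-x}$, or the standard inequality $e^x\ge 1+x$ applied at $-x$ after rearranging), the numerator is $\le 0$, so $c_1'\le 0$ on $(0,1]$, with the value at $h=0$ understood as the limit $c_1(0)=\beta_0$. Next $c_2(h)=e^{-h\beta_0}(1-e^{-(\beta_1-\beta_0)})$, which is manifestly decreasing in $h$: $c_2'(h)=-\beta_0 e^{-h\beta_0}(1-e^{-(\beta_1-\beta_0)})<0$. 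Finally $c_3(h)=e^{-h\beta_0-(\beta_1-\beta_0)}\cdot\frac{1-e^{-(2-h)(1-\beta_1)}}{2-h}$; here the exponential prefactor $e^{-h\beta_0}$ is decreasing, while the factor $\phi(h):=\frac{1-e^{-(2-h)(1-\beta_1)}}{2-h}$ needs a closer look. Writing $m=2-h\in[1,2]$ and $\gamma=1-\beta_1=0.25$, $\phi=\frac{1-e^{-\gamma m}}{m}$, and $\frac{d\phi}{dm}=\frac{\gamma m e^{-\gamma m}-(1-e^{-\gamma m})}{m^2}\le 0$ by the same $1-e^{-x}\ge xe^{-x}$ inequality; since $m$ decreases as $h$ increases, $\phi$ is non-decreasing in $h$. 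So $c_3$ is a product of a decreasing positive factor and a non-decreasing positive factor, and its monotonicity is not immediate — this is the one place requiring real work.

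So the main obstacle is showing the two \emph{sums} $c_1+c_2+c_3$ and $c_2+(2-e^{-(\beta_1-\beta_0)})c_3$ are decreasing despite $c_3$ possibly increasing. The approach I would take is to bound $c_3'(h)$ explicitly and show it is dominated by $c_2'(h)$ (and, for the first sum, by $c_1'+c_2'$). Concretely, $c_3'(h)=e^{-(\beta_1-\beta_0)}\bigl(-\beta_0 e^{-h\beta_0}\phi(h)+e^{-h\beta_0}\phi'(h)\bigr)$ where $\phi'(h)=\frac{d\phi}{dm}\cdot(-1)\ge 0$ is bounded above by plugging in the extreme $m=1$ (i.e. $h=1$): a crude bound is $\phi'(h)\le \gamma^2/2$ via $1-e^{-x}-xe^{-x}\le x^2/2$, actually $\le \tfrac12 x^2 e^{-x}\cdot$something — I would just take $\phi'(h)\le \phi'|_{m=1}$ after checking $\phi'$ is itself monotone in $m$, or simply evaluate $\frac{\gamma m e^{-\gamma m}-(1-e^{-\gamma m})}{-m^2}$ at $m=1,\gamma=0.25$ numerically (it is a tiny positive number, on the order of $10^{-3}$). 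Meanwhile $|c_2'(h)|\ge \beta_0 e^{-\beta_0}(1-e^{-0.7})$, a constant of order $10^{-2}$, and the prefactor $e^{-(\beta_1-\beta_0)}=e^{-0.7}\approx 0.497$ shrinks $c_3'$ further. Putting numbers in: for the second coordinate, $\frac{d}{dh}\bigl[c_2+(2-e^{-0.7})c_3\bigr]=c_2'(h)+(2-e^{-0.7})c_3'(h)$; since $c_2'<0$ with magnitude bounded below by a constant uniformly in $h\in[0,1]$, and $(2-e^{-0.7})c_3'(h)$ is positive but uniformly small, the sum is negative throughout — this is a finite numerical verification (one could even just check the sign at finitely many points after bounding the derivative of the derivative, i.e. via a Lipschitz/interval argument, matching the paper's ``verified with computer assistance'' style). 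The same estimate, a fortiori, handles $c_1+c_2+c_3$ since $c_1'\le 0$ only helps. I would present the algebraic derivative formulas and the inequality $1-e^{-x}\ge xe^{-x}$ cleanly, then state that the remaining scalar inequality on $[0,1]$ is checked numerically with the fixed $\beta_0,\beta_1$.
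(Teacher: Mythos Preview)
Your plan is sound and in fact goes well beyond what the paper does: the paper offers no formal proof of this lemma at all, stating only ``We omit the tedious formal proof of this lemma and provide a plot of the function'' and pointing to a figure. So your derivative-based argument is strictly more rigorous than the paper's treatment, and the overall structure (show $c_1'\le 0$ and $c_2'<0$ analytically via $1-e^{-x}\ge xe^{-x}$, then verify that the possibly-positive $c_3'$ is dominated) is exactly the right way to turn the picture into a proof.

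One small correction worth noting before you write it up: your order-of-magnitude estimate for $\phi'(h)$ is off. At $m=1$, $\gamma=0.25$ one has
\[
\phi'(h)\big|_{m=1}=\bigl(1-e^{-0.25}\bigr)-0.25\,e^{-0.25}\approx 0.0265,
\]
which is order $10^{-2}$, not $10^{-3}$. This does \emph{not} break the argument, because what actually enters the second coordinate is $(2-e^{-0.7})\,c_3'(h)$, and $c_3'(h)=e^{-0.7}e^{-h\beta_0}\bigl(\phi'(h)-\beta_0\phi(h)\bigr)$ picks up both the damping factor $e^{-0.7}\approx 0.497$ and the partial cancellation $-\beta_0\phi$; numerically $(2-e^{-0.7})\,c_3'(h)\lesssim 0.011$ on $[0,1]$, while $|c_2'(h)|\ge \beta_0 e^{-\beta_0}(1-e^{-0.7})\approx 0.024$, so the sum is still negative with room to spare. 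Just be careful not to lean on the incorrect ``$10^{-3}$'' heuristic in the write-up --- do the honest bound on $c_3'$ and the comparison goes through.
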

We omit the tedious formal proof of this lemma and provide a plot of the function. Refer to \Cref{fig:monotonicity}.

\begin{figure}
  \centering
  \includegraphics[width=0.5\textwidth]{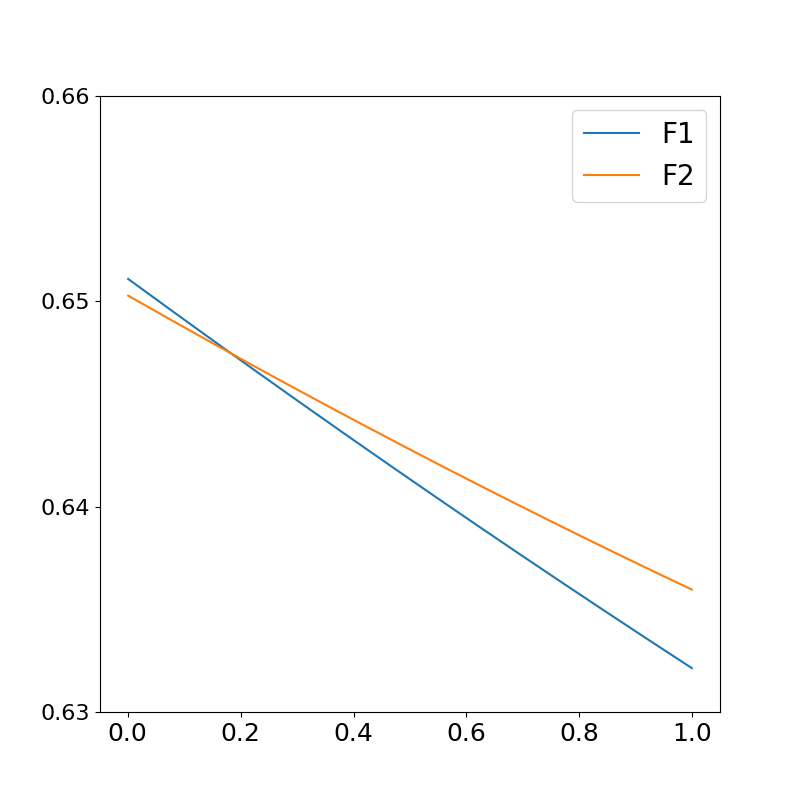}
  \caption{A Plot of $F_1(h) = c_1(h)+c_2(h)+c_3(h)$ and $F_2(h) = c_2(h)+(2-e^{-(\beta_1-\beta_0)}) c_3(h)$.}
  \label{fig:monotonicity}
\end{figure}

The next lemma is crucial for proving \Cref{eqn:matching_main}, that lower bounds the probability of $u$ remains unmatched on the arrival time $t_i=t$ of vertex $v_i$.
\begin{lemma}
\label{lem:matching_lb}
For every $t \in [0,\beta_1)$ and $u \in U$, 
\[
\Prx{u \in N(t) \mid t_i = t} = \exp\left( - \int_0^t \Amiu(y) \dd y \right)~.
\]
For every $t \in [\beta_1,1]$ and $u \ne u' \in U$,
\begin{align*}
& \Prx{u\in N(t) \mid t_i=t} \ge \exp\left( - \int_0^t \Amiu(y) \dd y \right)~; \\
& \Prx{u\in N(t), u'\in M(\beta_1) \mid t_i =t} \ge \exp\left( - \int_0^t \Amiu(y) \dd y \right) \left(1 - \exp\left( - \int_0^{t_1} \Amiu[u'](y) \dd y \right)\right) 
\end{align*}
\end{lemma}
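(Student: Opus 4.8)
I would split the argument according to whether $t<\beta_1$ or $t\ge\beta_1$, and in both cases reduce each displayed bound to a product over the online vertices $j\neq i$ of a quantity depending only on $j$'s own randomness. For $t<\beta_1$ the algorithm never consults the matching, so whether $j$ proposes to $u$ is a function of $(j,v_j,t_j)$ and $j$'s coin only; hence the events ``$j$ proposes to $u$ before $t$'' are mutually independent over $j\neq i$, and since an offline vertex is matched exactly at the first moment a proposal reaches it, $\{u\in N(t)\}=\bigcap_{j\neq i}\{j\text{ does not propose to }u\text{ before }t\}$. Substituting $g_j^{(u,v)}(y)=a_j^{(u,v)}(y)\exp(-\int_0^y A_j^u(x)\dd x)$ into the proposal version of \Cref{lem:activation-probability} (exactly as in \Cref{lemma:single-act-rate}) shows $j$ fails to propose to $u$ before $t$ with probability exactly $\exp(-\int_0^t A_j^u(x)\dd x)$, and multiplying over $j\neq i$ gives $\exp(-\int_0^t\Amiu(x)\dd x)$.

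For $t\ge\beta_1$ I would condition on the time-$\beta_1$ history $\mathcal H$ (the arrival times, types, and proposal outcomes of all vertices arriving before $\beta_1$, on top of $t_i=t$), which fixes $M(\beta_1)$, $N(\beta_1)$, and the set $L$ of late vertices; conditionally, the late vertices make mutually independent third-stage proposals. The key estimate, and the only place \Cref{lem:JL_sample} is used, is that for \emph{every} realization of $M(\beta_1)$ the two correction sums in $a_j^{(u,v)}(t)$ are together at most $\sum_{u'\neq u}\mu_j^v(u'\to u)=\rho_j^{(u,v)}-z_j^{(u,v)}$, hence $\Ex[v]{a_j^{(u,v)}(t)\mid M(\beta_1)}\le\Ex[v]{2\rho_j^{(u,v)}-z_j^{(u,v)}}=A_j^u(t)$, the nominal rate inside $g_j^{(u,v)}$. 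Feeding this into the proposal version of \Cref{lem:acceptance-probability} shows that, conditioned on $\mathcal H$ with $u\in N(\beta_1)$, a late $j$ proposes to $u$ before $t$ with probability at most a \emph{deterministic} constant $x_j=\tfrac{1}{1-\beta_1}e^{-\int_0^{\beta_1}A_j^u(x)\dd x}\bigl(1-e^{-\int_{\beta_1}^tA_j^u(x)\dd x}\bigr)$ not depending on $\mathcal H$; consequently $\Prx{u\text{ gets no third-stage proposal}\mid\mathcal H}\ge\prod_{j\in L\setminus\{i\}}(1-x_j)$.

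Given this, $\Prx{u\in N(t)\mid t_i=t}\ge\Ex[\mathcal H]{\ind{u\in N(\beta_1)}\prod_{j\in L\setminus\{i\}}(1-x_j)}$, and I would expand the right side as $\prod_{j\neq i}\Ex{Z_j}$: since a proposal to $u$ before $\beta_1$ can only come from an early $j\neq i$, we have $\ind{u\in N(\beta_1)}=\prod_{j\text{ early},j\neq i}\ind{j\text{'s proposal}\neq u}$, so the integrand is $\prod_{j\neq i}Z_j$ with $Z_j=1-x_j$ when $j$ is late and $Z_j=\ind{j\text{'s stage-1-2 proposal}\neq u}$ when $j$ is early; the $Z_j$ depend on disjoint randomness, hence are independent, and a one-line computation gives $\Ex{Z_j}=(1-\beta_1)(1-x_j)+\bigl(\beta_1-1+e^{-\int_0^{\beta_1}A_j^u(x)\dd x}\bigr)=e^{-\int_0^tA_j^u(x)\dd x}$, so the product equals $\exp(-\int_0^t\Amiu(x)\dd x)$, the first bound. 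For the joint bound, repeat the expansion with $\ind{u\in N(\beta_1)}$ replaced by $\ind{u\in N(\beta_1)}\bigl(1-\ind{u'\in N(\beta_1)}\bigr)$, which yields $\prod_{j\neq i}\Ex{Z_j}-\prod_{j\neq i}\Ex{W_j}$, where $W_j$ equals $Z_j$ except that the early case is strengthened to ``$j$'s proposal avoids both $u$ and $u'$''; using $g_j^{(u,v)}(y)+g_j^{(u',v)}(y)\le1$ one finds $\Ex{W_j}=\Ex{Z_j}-1+e^{-\int_0^{\beta_1}A_j^{u'}(x)\dd x}$, and the elementary inequality $(a-1)(1-b)\le0$ for $a,b\in[0,1]$ gives $\Ex{W_j}\le\Ex{Z_j}\,e^{-\int_0^{\beta_1}A_j^{u'}(x)\dd x}$. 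Hence $\prod_j\Ex{W_j}\le e^{-\int_0^t\Amiu(x)\dd x}\,e^{-\int_0^{\beta_1}\Amiu[u'](x)\dd x}$, and subtracting gives the claimed $e^{-\int_0^t\Amiu(x)\dd x}\bigl(1-e^{-\int_0^{\beta_1}\Amiu[u'](x)\dd x}\bigr)$.

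The main obstacle is the adaptivity of the third stage: since its rates depend on the random set $M(\beta_1)$, the proposals of distinct online vertices are correlated, so one cannot directly multiply per-vertex survival probabilities. The two moves above are the crux of getting around this — conditioning on $\mathcal H$ freezes $M(\beta_1)$ and restores conditional independence among the late vertices, and \Cref{lem:JL_sample} caps the third-stage rate uniformly over all realizations of $M(\beta_1)$, which is precisely what makes the per-vertex factors $(1-x_j)$ deterministic so the product telescopes to the target exponential. A secondary bookkeeping issue is that ``$u'\in M(\beta_1)$'' is a union, not an intersection, of per-vertex events; writing it as $1-\ind{u'\in N(\beta_1)}$ keeps the whole computation inside the product-over-$j$ form.
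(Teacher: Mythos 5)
Your proof is correct, and it reaches the three displayed bounds by a genuinely different route than the paper. The paper handles the adaptivity of the third stage via a \emph{coupling with a hypothetical algorithm}: it replaces, for every $j$ and every $t\ge\beta_1$, the adaptive rate $a_j^{(u,v)}(t)$ by the deterministic nominal rate $2\rho_j^{(u,v)}-(2\rho_j^{(u,v)}-1)^+$ (and zeros out proposals to offline vertices other than $u$), observes via \Cref{lem:JL_sample} that this only increases the proposal rate to $u$, and then computes the target probabilities \emph{exactly} in the hypothetical world, where every vertex's proposal is a function of its own randomness and so the product over $j\neq i$ is immediate; the joint bound uses the elementary inequality $1-a-b\le(1-a)(1-b)$ applied to the two disjoint per-vertex proposal events. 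You instead condition on the time-$\beta_1$ history $\mathcal H$ to freeze $M(\beta_1)$ and restore conditional independence among the late vertices, then use \Cref{lem:JL_sample} (the same place it enters the paper's argument) to cap each late vertex's conditional proposal probability by a deterministic $x_j$, and finally unfold the $\mathcal H$-expectation as $\prod_{j\neq i}\Ex{Z_j}$ — the observation that $\ind{u\in N(\beta_1)}$ and the set-$L$ product factor jointly over $j$ is the piece of bookkeeping that replaces the paper's "deterministic hypothetical world." What the paper's coupling buys is a slightly shorter computation (no need to split into early/late per vertex, and the exponentials appear directly); what your conditioning argument buys is that you never leave the actual algorithm, avoiding the need to reason about an algorithm that is not itself a valid matching procedure. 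Your $(a-1)(1-b)\le 0$ step plays exactly the role of the paper's $1-a-b\le(1-a)(1-b)$, just applied after the early/late averaging rather than before. Two minor cosmetic points: the bound on late-proposal probability comes from (the proposal analogue of) \Cref{lem:activation-probability} rather than \Cref{lem:acceptance-probability}, and the inequality you cite, $g_j^{(u,v)}+g_j^{(u',v)}\le1$, is a consequence of — but not a substitute for — the disjointness of the events ``$j$ proposes to $u$'' and ``$j$ proposes to $u'$,'' which is what your $\Ex{W_j}$ computation actually uses.
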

\begin{proof}
For every $t\in [0, \beta_1)$ and $u\in U$, the statement and its proof is essentially the same of \Cref{lemma:single-act-rate} for the prophet secretary problem. Nevertheless, we include a proof for completeness.

\begin{multline*}
\Prx{u \in N(t) \mid t_i = t} = \prod_{j\neq i} \Prx{\text{$j$ doesn't propose $u$ before $t$}} = \prod_{j\neq i} \left(1 - \sum_v \piv\int_0^t \giuv(y) \dd y \right) \\
= \prod_{j\neq i} \left(1 - \int_0^t A_j^u(y)\cdot \exp\left(- \int_0^y A_j^u(z) \dd z \right) \dd y\right) \\
= \prod_{j\neq i} \exp\left(- \int_0^t A_i^u(y) \dd y\right) = \exp\left( - \int_0^t \Amiu(y) \dd y \right)~.
\end{multline*}
    
    For any $t\in [\beta_1, 1]$ and $u\neq u'\in U$, conditioned on box $i$ arrive exactly at $t$, we consider following hypothetical algorithm for other boxes and for time range $[1, t]$. with activation probability $\hat{g}_j^{(\cdot, v)}(t)$. It use the same activation method before $t_1$, which means $\hat{g}_j^{(\cdot, v)}(t) = g_j^{(\cdot, v)}(t)$, while for time $t$ after $t_1$, the algorithm proposed to $u$ with probability $\hat{g}_{j}^{(u,v)}(t) =  (2\rho_j^{(u,v)} - (2\rho_j^{(u,v)} - 1)^+) \cdot \exp\left( - \int_0^t A_j^u(y) \dd y \right)$, and do not propose to other vertices.

    Since the activation probability for vertex $u$ of original algorithm is at most the probability in this algorithm and the activation probability for vertex $u'$ is unchanged, the original probability of the event we concern is no smaller than the the probability of the same event while in this hypothetical algorithm. Following we work on the hypothetical algorithm, and lower bound the probability of the event.

    An observation is that for any $j\neq i$:
    \[\sum_v p_j^v \hat{g}_j^{(u,v)}(t) = A_j^u(t) \cdot \exp\left( - \int_0^t \Aiu(y) \dd y \right)~.\]

    Hence, the probability of $u$ is unmatched before $t$ is:
    \begin{align*}
        \prod_{j\neq i}\Pr[\text{$j$ doesn't propose $u$ before $t$}]
        = & \prod_{j\neq i} \left(1 - \sum_{j}p_i^v\int_0^{t} \hat{g}_i^{(u,v)}(y) \dd y \right)\\
        = & \prod_{j\neq i}\exp\left( - \int_0^t A_j^u(y) \dd y \right)~.
    \end{align*}

    The probability of $u$ is unmatched before $t$ and $u'$ is unmatched before $t_1$ is:
    \begin{align*}
        & \prod_{j\neq i}\Pr[\text{$j$ doesn't propose $u'$ before $t_1$ and $u'$ before $t$}]\\
        = ~ & \prod_{j\neq i} \left(1 - \sum_v \piv \int_0^{t} \hat{g}_i^{(u,v)}(y) \dd y - \sum_v \piv \int_0^{t_1} g_i^{(u',v)}(y) \dd y\right)\\
        \le ~ & \prod_{j\neq i} \left(1 - \sum_v \piv \int_0^{t} \hat{g}_i^{(u,v)}(y) \dd y \right) \left(1 - \sum_v \piv \int_0^{t_1} g_i^{(u',v)}(y) \dd y \right)\\
        = ~ & \prod_{j\neq i}\exp\left( - \int_0^t \Aiu(y) \dd y \right) \cdot \prod_{j\neq i}\exp\left( - \int_0^{t_1}A_j^{u'}(y)\dd y \right)~.
    \end{align*}

    The probability of $u$ is unmatched before $t$ and $u'$ is matched before $t_1$ is at least the difference of above two terms, as we desired.
\end{proof}

We first study the probability of matching $(u,v_i=v)$ in the first two stages of our algorithm, i.e., before time $\beta_1$. 
\begin{lemma}
The probability of $(u,v_i=v)$ being matched before time $\beta_1$ equals:
\begin{equation}
(2 \xiuv - \piv)^+ \cdot c_1(h_u) + \xiuv \cdot c_2(h_u)
\end{equation}
\end{lemma}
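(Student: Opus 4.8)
The plan is to track the probability of matching the edge $(u, v_i = v)$ separately in the first stage $[0, \beta_0)$ and the second stage $[\beta_0, \beta_1)$, using the activation-rate machinery already developed. The key tool is \Cref{lem:matching_lb}, which gives an \emph{exact} expression $\Prx{u \in N(t) \mid t_i = t} = \exp(-\int_0^t \Amiu(y)\dd y)$ for $t < \beta_1$. Conditioned on $t_i = t$ with $t < \beta_1$, the edge $(u, v_i = v)$ is matched at time $t$ exactly when (1) $v_i = v$, which has probability $\piv$; (2) vertex $v_i$ proposes to $u$, which has probability $\giuv(t) = \aiuv(t)\exp(-\int_0^t \Aiu(y)\dd y)$; and (3) $u$ is unmatched at time $t$, which by \Cref{lem:matching_lb} has probability $\exp(-\int_0^t \Amiu(y)\dd y)$. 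These three events are independent in the usual way (the arrival of $v_i$ is independent of the other vertices' arrivals), so integrating over $t \in [0, \beta_1)$ and noting $\Aiu(y) + \Amiu(y) = A^u(y)$, the matching probability before $\beta_1$ is
\[
  \int_0^{\beta_1} \piv\, \aiuv(t)\, \exp\Big(-\int_0^t A^u(y)\dd y\Big)\dd t
  ~.
\]

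Next I would split this integral at $\beta_0$ and plug in the piecewise-constant activation rates from the algorithm: $\aiuv(t) = \ziuv = (2\rhoiuv - 1)^+$ on $[0, \beta_0)$ and $\aiuv(t) = \rhoiuv$ on $[\beta_0, \beta_1)$. Pulling these constants out, the two pieces become $\piv (2\rhoiuv - 1)^+ \int_0^{\beta_0} \exp(-\int_0^t A^u(y)\dd y)\dd t$ and $\piv \rhoiuv \int_{\beta_0}^{\beta_1} \exp(-\int_0^t A^u(y)\dd y)\dd t$, which by the definitions of $c_1(h_u)$ and $c_2(h_u)$ (recorded just before \Cref{thm:matching_msm}) equal $\piv(2\rhoiuv-1)^+ \cdot c_1(h_u)$ and $\piv\rhoiuv \cdot c_2(h_u)$. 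Finally, using the identities $\piv \rhoiuv = \xiuv$ and $\piv(2\rhoiuv-1)^+ = (2\xiuv - \piv)^+$ (the latter from \Cref{lem:prophet-secretary-properties}-style manipulation, exactly as used in \Cref{cl:ai} and \Cref{lem:aiu}), we obtain $(2\xiuv - \piv)^+ \cdot c_1(h_u) + \xiuv \cdot c_2(h_u)$, as claimed.

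The only mildly delicate point — and the place I would be most careful — is justifying the independence/decomposition that lets me write the matching probability as the stated integral of the product of three factors. This is really just \Cref{lem:acceptance-probability} adapted to the matching setting: one conditions on $t_i = t$, uses that $v_i$'s type and arrival are independent of everything else, multiplies by $\piv$ and $\giuv(t)$, and invokes \Cref{lem:matching_lb} for the probability that $u$ is still free. Since \Cref{lem:matching_lb} holds with equality throughout $[0, \beta_1)$, there is no loss in this regime, so the result is an exact equality rather than a bound. The remaining steps — substituting the step-function values and collapsing the integrals into $c_1, c_2$ — are purely mechanical given the earlier computation of $A^u(t)$ in \Cref{lem:aiu}.
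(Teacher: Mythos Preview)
Your proposal is correct and follows essentially the same approach as the paper: start from $\piv \int_0^{\beta_1} \giuv(t)\,\Prx{u\in N(t)\mid t_i=t}\dd t$, apply \Cref{lem:matching_lb} for the exact expression on $[0,\beta_1)$, collapse $\giuv(t)\cdot e^{-\int_0^t \Amiu}$ into $\aiuv(t)\cdot e^{-\int_0^t A^u}$, then split at $\beta_0$ and use the definitions of $c_1,c_2$ together with $\piv\rhoiuv=\xiuv$ and $\piv(2\rhoiuv-1)^+=(2\xiuv-\piv)^+$. The paper's proof is the same chain of equalities, only more tersely presented.
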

\begin{proof}
The probability of edge $(u,v_i=v)$ being matched in time $[0,\beta_1)$ is that
\begin{multline*}
\piv \cdot \int_0^{\beta_1} \giuv(t) \cdot \Prx{u \in N(t) \mid t_i =t} \dd t = \piv \cdot \int_0^{\beta_1} \giuv(t) \cdot \exp\left(-\int_0^{t} \Amiu(y)\dd y\right) \dd t \\
= \piv \cdot \int_0^{\beta_1} \aiuv(t) \cdot \exp\left(- \int_0^{t} A^u(y) \dd y \right) \dd t = (2 \xiuv - \piv)^+ \cdot  c_1(h_u) + \xiuv \cdot c_2(h_u)~.
\end{multline*}
Here, the first equality is by \Cref{lem:matching_lb}; the second and third equalities come from the construction of $\aiuv(t)$ and $c_1(h_u),c_2(h_u)$.
\end{proof}

Next, we study the probability of matching $(u,v_i=v)$ in the third stage of our algorithm, i.e., after time $\beta_1$.
\begin{lemma}
The probability of $(u,v_i=v)$ being matched between time $[\beta_1,1]$ is at least:
\[
\left( \xiuv + (\xiuv - (2 \xiuv - \piv)^+) \cdot  \left(1 - e^{- (\beta_1 - \beta_0)} \right) \right) \cdot c_3(h_u)~.
\]
\end{lemma}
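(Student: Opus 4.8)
The plan is to compute the probability that $(u,v_i=v)$ gets matched in the third stage $[\beta_1,1]$ by conditioning on the arrival time $t_i = t \in [\beta_1,1)$ and on the status of $u$ and the other offline vertices at time $\beta_1$. Concretely, for such a $t$, the edge $(u,v_i=v)$ is accepted when: (i) $v_i$ proposes to $u$, which happens with probability $\giuv(t) = \aiuv(t)\exp(-\int_0^t \Aiu(y)\dd y)$ where $\aiuv(t) = \aiuv(\beta_1)$ is the (random) third-stage rate determined by which of the $u'\ne u$ were matched at time $\beta_1$; and (ii) $u$ itself is still unmatched at time $t$. So the contribution is $\piv\int_{\beta_1}^1 \Ex{\aiuv(\beta_1)\cdot\ind{u\in N(t)}\mid t_i=t}\exp(-\int_0^t\Aiu(y)\dd y)\dd t$, where the expectation is over the randomness of the first two stages (which, conditioned on $t_i=t\ge\beta_1$, is unaffected by $v_i$'s own proposal).

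First I would expand $\aiuv(\beta_1)$ according to its definition in the algorithm box: on the event $u\in N(\beta_1)$ it equals $\rhoiuv + \sum_{u'\ne u}\muiv[u']{u}\bigl(1-\exp(-\int_0^{\beta_1}A_i^{u'}(y)\dd y)\bigr) + \sum_{u'\in M(\beta_1)}\muiv[u']{u}\exp(-\int_0^{\beta_1}A_i^{u'}(y)\dd y)$, and on $u\in M(\beta_1)$ it is $0$, which automatically enforces the indicator $\ind{u\in N(t)}$ partially; but since $u$ can become matched during $(\beta_1,t)$ as well, I still need the $u\in N(t)$ event. The key simplification is to pull the expectation inside term by term. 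The deterministic part $\rhoiuv$ multiplies $\Prx{u\in N(t)\mid t_i=t}\ge\exp(-\int_0^t\Amiu(y)\dd y)$ by the first bound in \Cref{lem:matching_lb}. The term with coefficient $\muiv[u']{u}\exp(-\int_0^{\beta_1}A_i^{u'}(y)\dd y)$ is present exactly when $u'\in M(\beta_1)$, so it multiplies $\Prx{u\in N(t), u'\in M(\beta_1)\mid t_i=t}\ge\exp(-\int_0^t\Amiu(y)\dd y)\bigl(1-\exp(-\int_0^{\beta_1}\Amiu[u'](y)\dd y)\bigr)$ by the second bound. The term with coefficient $\muiv[u']{u}\bigl(1-\exp(-\int_0^{\beta_1}A_i^{u'}(y)\dd y)\bigr)$ is present when $u'\in N(\beta_1)$; rather than bounding $\Prx{u\in N(t), u'\in N(\beta_1)}$ directly, I would write this term as the $\rhoiuv$-style ``unconditional'' piece minus the $M(\beta_1)$ piece, i.e. use $\ind{u'\in N(\beta_1)} = 1 - \ind{u'\in M(\beta_1)}$, so that it combines with the previous term. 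Summing, the coefficient multiplying the baseline $\exp(-\int_0^t\Amiu(y)\dd y)$ works out — after the cancellations among the $\muiv[u']{u}$ terms using $\sum_{u'\ne u}\muiv[u']{u}\le\rhoiuv$ and $A_i^{u'}(y)=A_i^{u'}(\beta_1)$ on the relevant range versus $\Amiu[u'](y)$ — to at least $\rhoiuv + \bigl(\rhoiuv - (2\rhoiuv-1)^+\bigr)(1-e^{-(\beta_1-\beta_0)})$, where $1-e^{-(\beta_1-\beta_0)}$ comes from $1-\exp(-\int_{\beta_0}^{\beta_1}A^{u'}(y)\dd y)$ with $A^{u'}\equiv 1$ on $(\beta_0,\beta_1]$ by \Cref{lem:aiu}.

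Then I would bound $\exp(-\int_0^t\Amiu(y)\dd y)$ from below by $\exp(-\int_0^t A^u(y)\dd y)$ (dropping the nonnegative term $A_i^u$), integrate over $t\in[\beta_1,1]$ to produce the factor $c_3(h_u)$ by its definition, and multiply through by $\piv$. Converting $\piv\rhoiuv = \xiuv$ and $\piv(2\rhoiuv-1)^+ = (2\xiuv-\piv)^+$ gives exactly the claimed bound $\bigl(\xiuv + (\xiuv-(2\xiuv-\piv)^+)(1-e^{-(\beta_1-\beta_0)})\bigr)c_3(h_u)$. The main obstacle I anticipate is the bookkeeping in the second paragraph: correctly splitting the $u'$-indexed sum across the events $\{u'\in M(\beta_1)\}$ and $\{u'\in N(\beta_1)\}$, matching each piece to the right inequality from \Cref{lem:matching_lb}, and verifying that the leftover $\muiv[u']{u}$ terms (which carry exponents $\exp(-\int_0^{\beta_1}A_i^{u'})$ rather than the "$\Amiu[u']$" exponents appearing in the lemma bounds, a discrepancy of exactly $\exp(-\int_0^{\beta_1}A_i^{u'})$) telescope so that the net coefficient is at least $\rhoiuv + (\rhoiuv-(2\rhoiuv-1)^+)(1-e^{-(\beta_1-\beta_0)})$ and not something smaller. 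One has to be careful that the per-$u'$ contributions, when summed, do not overcount, which is where the property $\sum_{u'\ne u}\muiv[u']{u} = \rhoiuv - (2\rhoiuv-1)^+$ from \Cref{lem:JL_sample} (equality, not just inequality) is needed to make the arithmetic tight.
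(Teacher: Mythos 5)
You have the right overall strategy --- condition on $t_i = t \in [\beta_1,1]$, expand $\aiuv(\beta_1)$ into its three summands, pull the expectation inside, apply \Cref{lem:matching_lb}, and integrate to produce $c_3(h_u)$ --- and this is how the paper proceeds. But you have misread the algorithm's third-stage rate: the second summand $\sum_{u' \ne u}\muiv[u']{u}\bigl(1-\exp(-\int_0^{\beta_1}\Aiu[u'](y)\dd y)\bigr)$ is taken over \emph{all} $u'\ne u$ and is deterministic; it is not an event-indexed sum over $u'\in N(\beta_1)$. Only the third summand, $\sum_{u'\in M(\beta_1)}\muiv[u']{u}\exp(-\int_0^{\beta_1}\Aiu[u'](y)\dd y)$, depends on the randomness of the first two stages.

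This is not cosmetic. Your proposed rewrite $\ind{u'\in N(\beta_1)}=1-\ind{u'\in M(\beta_1)}$ would leave an $M(\beta_1)$-indexed term with coefficient $\muiv[u']{u}\bigl(2\exp(-\int_0^{\beta_1}\Aiu[u'](y)\dd y)-1\bigr)$, which can be negative; when it is, the lower bound on $\Prx{u\in N(t),\,u'\in M(\beta_1)\mid t_i=t}$ from \Cref{lem:matching_lb} points the wrong way and you would need an upper bound on that event, which the lemma does not supply. Even in the nonnegative regime, writing $e_1=\exp(-\int_0^{\beta_1}\Aiu[u'](y)\dd y)$ and $e_2=\exp(-\int_0^{\beta_1}\Amiu[u'](y)\dd y)$, your per-$u'$ coefficient becomes $(1-e_1)+(2e_1-1)(1-e_2)=e_1+e_2-2e_1e_2$, which is strictly weaker than the intended $(1-e_1)+e_1(1-e_2)=1-e_1e_2=1-\exp(-\int_0^{\beta_1}A^{u'}(y)\dd y)$ and is not lower-bounded by $1-e^{-(\beta_1-\beta_0)}$ in general. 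Under the correct reading the second summand simply factors out of the expectation as a deterministic multiplier of $\Prx{u\in N(t)\mid t_i=t}$, the two $\muiv[u']{u}$-weighted pieces cleanly combine to $1-\exp(-\int_0^{\beta_1}A^{u'}(y)\dd y)\ge 1-e^{-(\beta_1-\beta_0)}$ (using \Cref{lem:aiu}), and $\sum_{u'\ne u}\muiv[u']{u}=\rhoiuv-\ziuv$ from \Cref{lem:JL_sample} closes the argument. The bookkeeping you flagged as an anticipated obstacle is exactly where the misread would have derailed you.
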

\begin{proof}
The probability of the edge being matched in the third stage $[\beta_1, 1]$ equals the following:
\begin{multline}
\label{eqn:after_beta1}
\Exlong{\int_{\beta_1}^1 \piv \cdot \giuv(t) \cdot \ind{u \in N(t)}\dd t} \\
= \piv \cdot \Exlong{\int_{\beta_1}^{1}\aiuv(\beta_1) \cdot \ind{u \in N(t)} \cdot \exp\left(-\int_0^t \Aiu(y)\dd y\right)  \dd t} \\
= \piv \cdot \int_{\beta_1}^{1} \Exlong{\aiuv(\beta_1) \cdot \ind{u \in N(t)}} \cdot \exp\left(-\int_0^t \Aiu(y)\dd y\right)  \dd t~,
\end{multline}
where the expectations are taken over both the randomness from the instance (i.e., the realization of the values and the arrival times except for vertex $v_i$) and the randomness from our algorithm; the second equation follows from the fact that $\Aiu(y)$ are prefixed constants. 

Further, we have that
\begin{align*}
& \quad \Exlong{\aiuv(\beta_1) \cdot \ind{u \in N(t)}} = \Ex{\rhoiuv \cdot \ind{u \in N(t)}} \\
& + \sum_{u' \ne u} \Ex{\muiv[u']{u} \cdot \left(1-\exp\left(-\int_0^{\beta_1} \Aiu[u'](y) \dd y \right) \right) \cdot \ind{u \in N(t)}} \\
& + \sum_{u' \ne u} \Ex{\muiv[u']{u} \cdot \exp\left(-\int_0^{\beta_1} \Aiu[u'](y) \dd y\right) \cdot \ind{u \in N(t), u' \in M(\beta_1)}} \\
& \ge \exp\left( - \int_0^t \Amiu(y) \dd y \right) \cdot \left( \rhoiuv + \sum_{u' \ne u} \muiv[u']{u} \cdot  \left(1 - \exp\left( - \int_0^{\beta_1} A^{u'}(y) \dd y \right)\right) \right) \\
& \ge \exp\left( - \int_0^t \Amiu(y) \dd y \right) \cdot \left( \rhoiuv + \sum_{u' \ne u} \muiv[u']{u} \cdot \left(1 - e^{- (\beta_1 - \beta_0)} \right) \right)\\
& = \exp\left( - \int_0^t \Amiu(y) \dd y \right) \cdot \left( \rhoiuv + \left( \rhoiuv - \ziuv \right) \cdot  \left(1 - e^{- (\beta_1 - \beta_0)} \right) \right)~,
\end{align*}
where the first inequality follows from \Cref{lem:matching_lb}; the second inequality follows from the fact that 
$\exp \left( -\int_0^{\beta_1} A^{u'}(y) \dd y \right) \le \exp \left( - \int_{\beta_0}^{\beta_1} 1 \dd y\right) = e^{-(\beta_1-\beta_0)}$.
Finally, we have that
\begin{multline*}
\eqref{eqn:after_beta1} \ge \piv \cdot \left( \rhoiuv + (\rhoiuv - \ziuv) \cdot  \left(1 - e^{- (\beta_1 - \beta_0)} \right) \right) \cdot \int_{\beta_1}^{1} \exp\left(-\int_0^t A^u(y) \right) \dd y \\
= \left( \xiuv + (\xiuv - (2 \xiuv - \piv)^+) \cdot  \left(1 - e^{- (\beta_1 - \beta_0)} \right) \right) \cdot c_3(h_u)~.
\end{multline*}
\end{proof}

Putting the two lemmas together, we have that
\begin{multline*}
\Prx{(u,v_i=v) \text{ is matched}} \ge 
(2 \xiuv - \piv)^+ \cdot c_1(h_u)+ \xiuv \cdot c_2(h_u) \\
+ \left( \xiuv + (\xiuv - (2\xiuv - \piv)^+) \cdot  \left(1 - e^{- (\beta_1 - \beta_0)} \right) \right) \cdot c_3(h_u) \\
\ge \min \left( c_1(\hu) + c_2(\hu) + c_3(\hu), c_2(\hu) + \left( 2-e^{-(\beta_1-\beta_0)} \right) \cdot c_3(\hu)\right) \cdot \xiuv~,
\end{multline*}
where the last inequality holds by the fact that $\xiuv \ge (2\xiuv -\piv)^+$.

\subsection{Constant Activation Rate Except One Large Vertex}
\label{subsec:rcrs}
The multistage activation-based matching algorithm achieves an improved competitive ratio over $1-1/e$ on instances with small $\xu$'s, we are left to design an algorithm working against large $\xu$'s, which is relatively the easier task.
We adapt the constant activation rate algorithm for the prophet secretary problem to each offline vertex $u$, and at the same time apply a special treatment to the most important online neighbor of $u$. Our algorithm is as the following.

\begin{tcolorbox}[title=Constant Activation Rate Except One Large Vertex]
Parameters: for every $x \in [0,1]$, let $\alpha(x)$ be the unique solution to the following equation:
\[
\int_{\alpha(x)}^1 e^{-t(1-x)} \dd t =  \int_0^1 e^{-t(1-x)} \cdot \left( 1 - (t-\alpha(x))^+\cdot x \right) \dd t~.
\]
\begin{enumerate}[(1)]
\item For every $u \in U$, let $i_u \eqdef \argmax_i \xiu$ be the most important online vertex of $u$.
\item For time $t$ from $0$ to $1$:
\begin{enumerate}
\item If vertex $v_i$ arrives with type $v \sim F_i$: \\
propose to a random offline vertex $u \in U$ with probability $\rhoiuv$.
\item If vertex $u$ receives an proposal from vertex $v_i = v$:
\begin{itemize}
\item if $i = i_u$, activate the edge iff $t \ge \alpha(x_u)$;
\item if $i \ne i_u$, activate the edge with probability $e^{-t \cdot \xiu}$~.
\end{itemize}
Accept the first active edge that $u$ receives.
\end{enumerate}
\end{enumerate}
\end{tcolorbox}

\begin{theorem}
\label[theorem]{thm:matching_rcrs}
For every $u \in U, v \in V, i \in [n]$, the Constant Activation Rate Except One Large Vertex algorithm matches edge $(u,v_i=v)$ with probability $\ratrcrs(\xu) \cdot \xiuv$, where
\[
\ratrcrs(x) \eqdef \int_{\alpha(x)}^1 e^{-t(1-x)} \dd t\]
\end{theorem}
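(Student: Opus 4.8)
The plan is to analyze the algorithm one offline vertex at a time. Fix $u \in U$ and regard the edges $\{(u, v_i = v)\}_{i,v}$ as ``items'' in the activation-based framework of \Cref{subsec:prophet-framework}: the key structural fact is that $u$'s decision whether to activate an incident edge depends only on the proposing online vertex's identity $i$, its type $v$, and the time $t$, and never on whether $u$ is already matched (the match status only decides whether an activated edge is actually accepted). Thus, from $u$'s viewpoint, the proposals form exactly an activation-based process in which the edge $(u, v_i = v)$ is activated --- given that $v_i$ arrives at time $t$ with type $v$ --- with probability
\[
\giuv(t) = \begin{cases}
\rho_{i_u}^{(u,v)}\cdot\ind{t \ge \alpha(\xu)} & \text{if } i = i_u,\\[1.5ex]
\rhoiuv\cdot e^{-t\,\xiu} & \text{if } i \ne i_u,
\end{cases}
\]
where for $i\ne i_u$ the exponential factor is precisely $e^{-\int_0^t \Ex[v']{\rho_i^{(u,v')}}\dd x}$ because $\sum_{v'} p_i^{v'}\rho_i^{(u,v')} = \xiu$. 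So the ``regular'' edges are handled by constant activation rates as in \Cref{subsubsec:constant-act-rate}, while $i_u$ receives the step-probability treatment reserved for the largest neighbour. Before anything else I would record two trivial facts: the algorithm is well-defined since $\sum_u \rhoiuv = \sum_u \xiuv/\piv \le 1$ by the LP constraint, and $\alpha(x)$ is well-defined with $\alpha(x)\in[0,1)$ because the left side of its defining equation is strictly decreasing in $\alpha$ while the right side is strictly increasing, the inequality reversing between $\alpha=0$ and $\alpha=1$.

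\paragraph{Step 1: probability that $u$ is free.}
Mimicking the first half of \Cref{lem:matching_lb} (equivalently \Cref{lem:activation-probability}), since the arrival times, types, proposal targets, and activation coins of distinct online vertices are mutually independent, conditioning on $t_i = t$ the probability that $u$ is still unmatched at time $t$ factorizes as $\prod_{j\ne i}\bigl(1 - \int_0^t \Ex[v']{g_j^{(u,v')}(x)}\dd x\bigr)$. The inner integral equals $\int_0^t x_j^{u}e^{-x\, x_j^u}\dd x = 1 - e^{-t x_j^u}$ when $j\ne i_u$, making that factor $e^{-t x_j^u}$, and equals $\int_0^t \xu\,\ind{x\ge\alpha(\xu)}\dd x = \xu(t-\alpha(\xu))^+$ when $j = i_u$ (using $x_{i_u}^u = \xu$ by the choice of $i_u$). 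Multiplying over $j\ne i$ and using the $1$-regularity identity $\sum_j x_j^u = 1$, the product telescopes to $e^{-t(1-\xu)}$ if $i = i_u$, and to $\bigl(1 - \xu(t-\alpha(\xu))^+\bigr)\,e^{-t(1-\xiu-\xu)}$ if $i \ne i_u$.

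\paragraph{Step 2: matching probability and the choice of $\alpha$.}
Substituting this into $\Prx{(u, v_i=v)\text{ is matched}} = \int_0^1 \piv\,\giuv(t)\,\Prx{u\text{ free at }t\mid t_i=t}\dd t$: for $i = i_u$ the step indicator cuts the range to $[\alpha(\xu),1]$ and there is no $e^{-t\xiu}$ factor, giving $x_{i_u}^{(u,v)}\int_{\alpha(\xu)}^1 e^{-t(1-\xu)}\dd t = \ratrcrs(\xu)\cdot\xiuv$; for $i\ne i_u$ the two exponentials $e^{-t\xiu}$ and $e^{-t(1-\xiu-\xu)}$ combine into $e^{-t(1-\xu)}$, giving $\xiuv\int_0^1 e^{-t(1-\xu)}\bigl(1-\xu(t-\alpha(\xu))^+\bigr)\dd t$. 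The defining equation of $\alpha(\cdot)$ asserts exactly that these two per-edge multipliers coincide, so in both cases the matching probability equals $\ratrcrs(\xu)\cdot\xiuv$, which is the claim.

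\paragraph{Main obstacle.}
There is no heavy computation here --- the two integrals are elementary and the rest is bookkeeping with $1$-regularity. The one genuinely delicate point is the exactness of the product form in Step 1, i.e.\ that conditioning on the match outcomes of the earlier proposals does not bias the activation decisions of the later proposals at $u$; this holds because each proposal's activation coin is fresh and depends only on $(i,v,t)$, but it is the step most prone to a subtle error, so I would spell out the independence structure in full.
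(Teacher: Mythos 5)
Your proof is correct and takes the same route as the paper: fix $u$, factorize $\Prx{u \text{ free at } t \mid t_i = t}$ over $j \ne i$ using independence and $1$-regularity, combine with $v_i$'s own propose-and-activate probability, and invoke the defining equation of $\alpha(\cdot)$ to equalize the $i = i_u$ and $i \ne i_u$ cases. One refinement worth noting: by keeping the activation factor $e^{-t\,\xiu}$ for the current edge, you obtain exact equality in Case~2, whereas the paper's displayed chain omits that factor from the integrand and then compensates with a $\ge$ step (replacing $e^{-t(1-\xiu-\xu)}$ by $e^{-t(1-\xu)}$); your bookkeeping is the cleaner way to see that the match probability is exactly $\ratrcrs(\xu)\cdot\xiuv$, as the theorem states. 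Your well-definedness checks for $\alpha(x) \in [0,1)$ and for $\sum_u \rhoiuv \le 1$ via the LP constraint are also correct, and the independence structure you flag as delicate is indeed sound because each proposal's coin is fresh and depends only on $(i,v,t)$.
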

By straightforward calculation, $\ratrcrs(0) = 1-1/e$ and $\ratrcrs(1)=\sqrt{3}-1$. As an implication, this algorithm achieves a competitive ratio better than $1-1/e$ if every $\xu$ is bounded away from $0$.

\subsubsection{Proof of \Cref{thm:matching_rcrs}}
Fix an arbitrary $u \in U$. According to the definition of our algorithm, the probability that the edge $(u,v_j)$ is not active before time $t$:
\begin{align}
\label{eqn:rcrs_j_nonactive}
1 - \int_0^t \sum_{v} \left( \pjv \cdot \rhojuv \cdot e^{-y \xju} \right)\dd y = 1 - \int_0^t \xju \cdot e^{-y \xju} \dd y = e^{-t \xju}~, \quad \forall j \ne i_u~; \\
\label{eqn:rcrs_iu_nonactive}
1 - \int_0^t \ind{y \ge \alpha(\xu)} \cdot \sum_{v} \pjv \cdot \rhojuv \dd y = 1-(t-\alpha(\xu))^+\cdot \xu~, \quad j=i_u~;
\end{align}
Next, we consider probability that edge $(u,v_i=v)$ being matched.

\paragraph{Case 1. ($i = i_u$)} 
The edge $(u,v_i=v)$ is matched with probability:
\begin{multline*}
\int_{\alpha(\xu)}^1 \piv\cdot \rhoiuv \cdot \Prx{u \in N(t) \mid t_i = t} \dd t = \xiuv \cdot \int_{\alpha(\xu)}^1  \prod_{j\ne i} e^{-t \xju} \dd t \ge \xiuv \cdot \int_{\alpha(\xu)}^{1} e^{-t(1-\xu)} \dd t~,
\end{multline*}
where the equality is by \eqref{eqn:rcrs_j_nonactive} and the inequality holds by the fact that $\sum_{j \ne i} \xju \le 1-\xiu=1-\xu$.

\paragraph{Case 2. ($i \ne i_u$)} 
The edge $(u,v_i=v)$ is matched with probability:
\begin{multline*}
\int_{0}^1 \piv\cdot \rhoiuv \cdot \Prx{u \in N(t) \mid t_i = t} \dd t = \xiuv \cdot \int_{0}^1  \prod_{j\ne i, i_u} e^{-t \xju} \cdot \left( 1-(t-\alpha(\xu))^+\cdot \xu \right)\dd t \\
\ge \xiuv \cdot \int_{0}^1  e^{-t(1-\xu)} \cdot \left( 1-(t-\alpha(\xu))^+\cdot \xu \right)\dd t = \xiuv \cdot \int_{\alpha(\xu)}^{1} e^{-t(1-\xu)} \dd t
\end{multline*}
where the first equality is by \eqref{eqn:rcrs_j_nonactive} and \eqref{eqn:rcrs_iu_nonactive}, and the last equality is by the definition of $\alpha(\cdot)$.

This concludes the proof of the theorem.

\bibliography{matching,ldc_bib}

\appendix

\section{Reduction from Query-Commit to Prophet Secretary Matching}
\label{app:reduction}

First of all, notice that the stochastic model of the underlying graph of query-commit is a special case of the model of prophet secretary matching. Indeed, the prophet secretary problem only assumes independent realizations of vertices, while query-commit further assumes independent realizations of edges.

Therefore, we can think of an arbitrary instance for query-commit to be an instance for prophet secretary by artificially separating the two sides of the vertices to be online and offline. Say $V$ is the set of online vertices and $U$ is the set of offline vertices. 
In order to apply an online algorithm $A_{PSM}$ of the prophet secretary matching problem for the instance, we draw a uniform at random arrival time $t_v$ for each vertex $v \in V$ on our own to simulate the random arrival of the vertices.
Then, the difficulty is that we are not able to see the realization of $v$'s type before we query the existence of an edge.

Nevertheless, on the arrival of an online vertex $v$, we first compute the expected matching probability $x_v^u$ of vertex $v$ to any offline vertex $u$, according to algorithm $A_{PSM}$ in an imaginary run of the prophet secretary problem. 
The vector $(x_v^u)_{u\in U}$ then must be in the following polytope:
\begin{align*}
    & \sum_{u\in S} x_v^u \le 1 - \prod_{u\in S}\left(1-p_{(v,u)}\right), && \forall S\subseteq U,\\
    & x_v^u \ge 0, && \forall u\in U~,
\end{align*}
where the first inequality states that the probability of $v$ being matched to some $u \in S$ is no larger than the probability that at least one of the edges $\{(v,u)\}_{u\in S}$ exists. 
We import the following lemma by \citet{soda/GamlathKS19} which characterizes the extreme points of this polytope.

\begin{lemma}[Lemma 5 of \cite{soda/GamlathKS19} rephrased]
    Every vertex of above polytope corresponds to a permutation of a subset of the offline vertices $u_1, u_2, \dots, u_\ell$ such that:
    \begin{align*}
    x_v^{u_1} & = p_{(v,u_1)} \\
    x_v^{u_2} & = \left(1-p_{(v,u_1)}\right) p_{(v,u_2)}  \\
    \dots & \\
    x_v^{u_\ell} & = (1-p_{(v,u_1)})\dots(1-p_{(v,u_{\ell-1})})p_{(v,u_\ell)} \\
    x_v^{u} & = 0 \quad \text{if $u \ne u_1, u_2, \dots, u_\ell$}
    \end{align*}
\end{lemma}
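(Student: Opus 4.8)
The plan is to recognize the polytope as a \emph{polymatroid} and read off its vertices via Edmonds' greedy procedure. Abbreviate $p_u \eqdef p_{(v,u)}$ and set $f(S) \eqdef 1 - \prod_{u\in S}(1-p_u)$ for $S\subseteq U$, so the polytope is $P = \{x\in\mathbb{R}^U_{\ge 0} : x(S)\le f(S)\ \text{for all } S\subseteq U\}$, writing $x(S) \eqdef \sum_{u\in S} x_v^u$. The first step is to verify the properties that make $f$ a polymatroid function: $f(\emptyset)=0$ (the empty product is $1$), monotonicity, and submodularity. Both monotonicity and submodularity are immediate from the marginal identity $f(S\cup\{u\})-f(S) = p_u\prod_{w\in S}(1-p_w)$, which is nonnegative and nonincreasing in $S$ because each factor $1-p_w$ lies in $[0,1]$. (In particular $0 \le x_v^u \le p_u \le 1$ on $P$, so $P$ is indeed a bounded polytope.)

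Granting this, the second step invokes the standard fact that every vertex of a polymatroid $P_f$ is produced by the greedy procedure: for a linear order $u_1,u_2,\dots,u_m$ on $U$, the point with coordinates $x_v^{u_j} = f(\{u_1,\dots,u_j\}) - f(\{u_1,\dots,u_{j-1}\})$ is a vertex of $P_f$, and conversely every vertex has this form for some order. Substituting the marginal identity makes the increments telescope to $x_v^{u_j} = p_{u_j}\prod_{i<j}(1-p_{u_i})$, which is exactly the claimed expression. Discarding the vanishing coordinates — those $u_j$ with $p_{u_j}=0$, together with every coordinate after the first index where $p_{u_i}=1$, if such an index exists — and relabelling the surviving prefix as $u_1,\dots,u_\ell$ yields the statement, all remaining offline vertices receiving value $0$. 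If one prefers a self-contained argument, start instead from an arbitrary vertex $x^*$ and uncross its family of tight sets $\mathcal T=\{S: x^*(S)=f(S)\}$: from the chain $x^*(S)+x^*(T) = x^*(S\cup T)+x^*(S\cap T) \le f(S\cup T)+f(S\cap T)\le f(S)+f(T)$ one reads off that $\mathcal T$ is closed under union and intersection, so a maximal chain in $\mathcal T$ together with the coordinates forced to $0$ determines $x^*$; one then checks that the chain's consecutive gaps are singletons $\{u_j\}$ with $x^*_{u_j}=f(S_j)-f(S_{j-1})$, and the same telescoping finishes.

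I expect the main obstacle to be not any single computation — the marginal identity and the telescoping are routine — but the bookkeeping around degenerate probabilities: $p_u=0$ inserts a spurious zero coordinate, and $p_u=1$ ``saturates'' $f$ so that every later coordinate is forced to zero. Lining the phrase ``a permutation of a subset of the offline vertices'' up cleanly with these cases is where the care is needed; in the self-contained route, the analogous delicate point is arguing that a maximal chain of tight sets really has singleton gaps and that these equalities, together with the zero coordinates, constitute $|U|$ linearly independent constraints that pin down the vertex.
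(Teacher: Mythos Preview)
The paper does not prove this lemma at all; it is simply imported from \cite{soda/GamlathKS19} as a black box. Your polymatroid approach is exactly the right way to prove it from scratch, and the marginal identity $f(S\cup\{u\})-f(S)=p_u\prod_{w\in S}(1-p_w)$ does all the real work.

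There is, however, a genuine gap in your first route. You state that every vertex of the polymatroid $P_f$ is a greedy point for some linear order on \emph{all} of $U$, and then try to recover the phrase ``permutation of a subset'' by discarding coordinates that happen to vanish due to $p_u=0$ or a preceding $p_u=1$. This is not correct: the extreme points of a polymatroid $P_f=\{x\ge 0:x(S)\le f(S)\}$ are the \emph{truncated} greedy points, obtained by fixing an order $u_1,\dots,u_m$ on $U$, choosing some $0\le \ell\le m$, setting $x_{u_j}=f(\{u_1,\dots,u_j\})-f(\{u_1,\dots,u_{j-1}\})$ for $j\le \ell$, and $x_{u_j}=0$ for $j>\ell$. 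The full greedy points are only the vertices of the base polytope $B_f=P_f\cap\{x(U)=f(U)\}$. Concretely, take $|U|=2$ with $p_{u_1}=p_{u_2}=\tfrac12$: then $(0,0)$, $(\tfrac12,0)$, and $(0,\tfrac12)$ are vertices of $P_f$, yet both full greedy points $(\tfrac12,\tfrac14)$ and $(\tfrac14,\tfrac12)$ have no zero coordinates to discard. Once you quote the correct polymatroid vertex characterization (which already builds in the subset $\{u_1,\dots,u_\ell\}$), the telescoping you wrote down finishes the proof immediately, and the degeneracy bookkeeping you worried about disappears.

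Your alternative uncrossing argument is in fact the cleaner path and would avoid this slip: at a vertex, the tight constraints split into a lattice of tight sets $\mathcal T$ and a set $Z$ of zero coordinates, and together a maximal chain in $\mathcal T$ plus the constraints $x_u=0$ for $u\in Z$ furnish $|U|$ independent equalities; the chain's top set is contained in $U\setminus Z$, and its consecutive gaps must be singletons (any larger gap would leave a coordinate undetermined). That directly yields the ordered subset $u_1,\dots,u_\ell$ and the telescoped values.
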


Notice that querying the vertices in the order of $u_1, u_2, \dots, u_\ell$, the vector of the expected probability exactly equals to the extreme point showed above. Thus, we can match the expected probability $(x_v^u)_{u\in U}$ in the query-commit model through the following: 1) decompose $(x_v^u)_{u\in U}$ as a convex combination of the extreme points and sample from the extreme points, 2) then query the offline vertices in the order with respect to the sampled extreme point.
Through this simulation, our algorithm matches the same performance of $A_{PSM}$. Finally notice that the benchmark (expected offline optimum matching) for the two problems are also the same.

\section{Proof of \Cref{lem:h_s_general}}
\label{app:h-function}

We study the following optimization problem and define $h_s(x)$ to be the optimal value:
    \begin{align*}
        \max_{\{x_i^v\},\{p_i^v\}}: \quad &\frac{1}{s-1} \cdot \sum_{i,v} \left(sx_i^v - p_i^v \right)^+\\
        \text{subject to}: \quad & \sum_{i} \sum_{v\in V_i}x_i^v \le 1 - \prod_i \left(1 - \sum_{v\in V_i}p_i^v \right) && \forall V_i \subseteq V\\
        & \sum_v x_i^v\le x && \forall i\in [n]
    \end{align*}

First, we reduce the dimension of the optimization by showing that it is without loss of generality to study when $|V|=1$. Specifically, we show that $h_s(x)$ equals to the value of the following program:
    \begin{align*}
        \max_{\{x_i\},\{p_i\}}: \quad & \frac{1}{s-1} \cdot \sum_{i} (sx_i - p_i)\\
        \text{subject to}: \quad & \sum_{i\in S} x_i \le 1 - \prod_{i\in S} (1 - p_i) && \forall S \subseteq [n]\\
        & x_i\le x && \forall i\in [n]
    \end{align*}
It is obvious that this program is a special case of the original one. 
On the other hand, given an arbitrary feasible solution $\{x_i^v\}, \{p_i^v\}$ of the original program. Consider the following vectors:
\[
x_i = \sum_{v:s x_i^v - p_i^v\ge 0} x_i^v \quad \text{and} \quad p_i = \sum_{v:s x_i^v - p_i^v\ge 0} p_i^v~,\quad \forall i\in [n]
\]
It is straightforward to verity that the constraints of the new program are satisfied, and the objective value equals to the objective of the original program.


Next, we fix the vector $\{x_i\}$ and consider the program as an optimization over $\alpha_i = \ln(1-p_i)$. The constraints can be rewritten as the following:
\begin{equation}
\label{eqn:poly_alpha}
\sum_{i\in S} \alpha_i \le \ln \left( 1 - \sum_{i\in S}x_i \right), \quad \forall S\subseteq [n]
\end{equation}
And the goal is now to minimize $\sum_i 1 - e^{\alpha_i}$, which is a concave function of $\{\alpha_i\}$.
Thus, it attains its minimum value at the extreme point of above polytope \eqref{eqn:poly_alpha}.

Let $\{\alpha_i\}$ be an extreme point. Then, there are $n$ tight constraints, corresponding to $n$ sets $S_1,\ldots,S_n$. 
We claim that the $n$ sets must form a chain 
\[
S_1\subset S_2\subset \cdots \subset S_n.
\]
Otherwise suppose there are two sets $S,T$ with $S\not\subset T$ and $T\not\subset S$ while the two corresponding constraints are tight. Then we have
\begin{align*}
    & \ln\left(1 - \sum_{i\in S} x_i\right) + \ln\left(1 - \sum_{i\in T} x_i\right) = \sum_{i\in S} \alpha_i + \sum_{i\in T} \alpha_i \\
    = ~ & \sum_{i\in S\cap T} \alpha_i + \sum_{i\in S\cup T}\alpha_i \le \ln\left(1 - \sum_{i\in S\cap T} x_i\right) + \ln\left(1 - \sum_{i\in S\cup T} x_i\right),
\end{align*}
a contradiction due to the concavity of $\ln(1-x)$.
Furthermore, since $|S_n| \le n$, we must have that $|S_i| = i$ for $i \in [n]$. 
Without loss of generality, we reorder the indices and assume that $S_i = [i]$. According to those tight constraints, we have 
\[
    \alpha_i = \ln\left(\frac{1 - \sum_{j\le i} x_j}{1 - \sum_{j<i}x_j}\right) \quad \text{and} \quad p_i = \frac{x_i}{1 - \sum_{j < i}x_j}~.
\]
We simplify the program as the following.
\begin{align*}
    \max_{\{x_i\}}: \quad & \frac{1}{s-1} \cdot  \sum_{i} \left(s x_i - \frac{x_i}{1 - \sum_{j < i}x_j} \right) \\
    \text{subject to}: \quad & \sum_{i} x_i \le 1 \\
        & x_i \le x && \forall i \in [n]
    \end{align*}
Let $\{x_i\}$ be the optimal solution. We prove that $x_i$'s are non-decreasing by contradiction. 
Suppose $x_i > x_{i+1}$. Consider switching the order of them, the objective function strictly improves since
\begin{align*}
& \frac{x_i}{1 - y} + \frac{x_{i+1}}{1 - y - x_i} > \frac{x_{i+1}}{1 - y} + \frac{x_i}{1 - y - x_{i+1}} \\
\iff & \frac{x_i - x_{i+1}}{1 - y} \ge \frac{(1 - y - x_i)x_i - (1 - y - x_{i+1})x_{i+1}}{(1 - y - x_i)(1 - y - x_{i+1})} \\
\iff & \frac{1}{1-y} > \frac{1 - y - x_i - x_{i+1}}{(1 - y - x_i) (1 - y - x_{i+1})}~,
\end{align*}
which holds for every $x_i > x_{i+1}$ and $y = \sum_{j<i} x_j$.
Thus, we have $x_1 \le x_2 ... \le x_n$. 
Finally, we prove that there is at most one $x_i$ in the open interval $(0, x)$ by contradiction. 
Suppose $0 < x_i \le x_{i+1} < x$. Consider changing $(x_i, x_{i+1})$ to $(x_i - \epsilon, x_{i+1} + \epsilon)$, the objective strictly improves since
\begin{align*}
& \frac{\partial}{\partial \epsilon} \left(\frac{x_i-\epsilon}{1 - y} + \frac{x_{i+1}+\epsilon}{1 - y - x_i + \epsilon} \right) < 0 \\
\iff & \frac{1-y-x_i-x_{i+1}}{(1-y-x_i+\epsilon)^2} < \frac{1}{1-y} \\
\Longleftarrow & \frac{1-y-x_i-x_{i+1}}{(1-y-x_i)^2} < \frac{1}{1-y}~,
\end{align*}
which holds for every $x_i \le x_{i+1}$ and $y = \sum_{j<i} x_j$.
To sum up, the optimal solution has the form of $0 \le x_1 < x_2 = x_3 = ... = x_n = x$ and the objective is
\[
x_1 + \frac{1}{s-1} \cdot \sum_{k=0}^{n-1} \left( s x - \frac{x}{1-x_1-i x} \right) \le \max_{t\in [0,x)} t + \frac{1}{s-1}\sum_{k \ge 0} \left(sx - \frac{x}{1 - t - kx}\right)^+~,
\]
that concludes the proof of the lemma.

\subsection{Some properties of $h_s(x)$}
As promised in \Cref{lem:hu_xu}, we verify that $h_2(0^+) = 1-\ln 2$.
\begin{lemma}
    For any $s>1$, $\lim_{x \to 0^+} h_s(x) = 1 - \frac{\ln s}{s-1}$.
\end{lemma}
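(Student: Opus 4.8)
The plan is to compute the limit directly from the closed-form expression
\[
h_s(x) = \max_{t\in[0,x)} \ t + \frac{1}{s-1}\sum_{k\ge 0}\Big(sx - \frac{x}{1-t-kx}\Big)^+
\]
derived in the body of the appendix. First I would fix a small $x>0$ and understand which terms in the sum are positive: the $k$-th term is nonnegative exactly when $s(1-t-kx)\ge 1$, i.e. when $t+kx \le 1 - \tfrac1s$. So for a given $t$, the positive terms are those with $k$ ranging over $0,1,\dots,K$ where $K = K(t,x) \approx \big(1-\tfrac1s-t\big)/x$. This is the key structural observation: as $x\to 0^+$, the sum becomes a Riemann sum.

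Next I would turn the sum into an integral. Writing $y = t + kx$ so that consecutive terms are spaced by $\Delta y = x$, the $k$-th summand is $\big(sx - \tfrac{x}{1-y}\big)^+ = x\big(s - \tfrac{1}{1-y}\big)^+$, and therefore
\[
\frac{1}{s-1}\sum_{k\ge 0}\Big(sx - \frac{x}{1-t-kx}\Big)^+ \;=\; \frac{1}{s-1}\sum_{k\ge 0}\Delta y \cdot \Big(s - \frac{1}{1-t-kx}\Big)^+ \;\xrightarrow{x\to 0^+}\; \frac{1}{s-1}\int_t^{1}\Big(s - \frac{1}{1-y}\Big)^+ dy .
\]
Since $t\in[0,x)$ forces $t\to 0$ as well, the outer $\max$ and the additive $t$ both vanish in the limit, so it remains to evaluate $\frac{1}{s-1}\int_0^1\big(s-\tfrac{1}{1-y}\big)^+ dy$. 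The integrand is positive precisely for $y < 1-\tfrac1s$; substituting $z = 1-y$ gives $\frac{1}{s-1}\int_{1/s}^{1}\big(s-\tfrac1z\big)dz = \frac{1}{s-1}\big[sz - \ln z\big]_{1/s}^{1} = \frac{1}{s-1}\big(s - 1 - \ln s\big) = 1 - \frac{\ln s}{s-1}$, which is the claimed value (and at $s=2$ this is $1-\ln 2$).

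The main obstacle is making the Riemann-sum convergence rigorous and uniform enough to justify exchanging the limit with the $\max_{t\in[0,x)}$: one must show the approximation error is $O(x)$ uniformly in $t$ over the shrinking interval $[0,x)$, handling the boundary term $k=K$ where the summand straddles the threshold, and confirming that the optimal $t$ (whatever it is) contributes at most $x\to 0$ to the outer maximum. A clean way is to sandwich the sum between two integrals — shifting the sample points left and right by $x$ — each of which converges to the same limit, using that $y\mapsto (s-\tfrac{1}{1-y})^+$ is bounded and monotone on $[0,1-\tfrac1s]$ so the monotone-rearrangement bound on Riemann sums applies; the total discrepancy is then at most a constant times $x$, which kills the $+t$ term and the max simultaneously. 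Everything else is a routine one-variable integral.
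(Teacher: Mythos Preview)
Your proposal is correct and takes essentially the same approach as the paper: both recognize $h_s(x)$ as a Riemann sum for $\frac{1}{s-1}\int_0^{1-1/s}\big(s-\tfrac{1}{1-y}\big)\,dy$ and evaluate that integral to get $1-\frac{\ln s}{s-1}$. The paper's proof is a two-line sketch that simply asserts the Riemann-sum identification, whereas you are more careful about why the outer $\max_{t\in[0,x)}$ and the additive $t$ term vanish in the limit --- but the underlying idea is identical.
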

\begin{proof}
    Notice that $h_s(x)$ is a Riemann sum for the integral of function $f_s(y) = \frac{1}{s-1} \left( s-\frac{1}{1 - y} \right)^+$ on $[0, 1 - \frac1s]$, with intervals smaller than $x$. Hence
    \[
        \lim_{x\to 0^+} h_s(x) = \frac{1}{s-1} \cdot \int_0^{1 - \frac1s} \left(s - \frac{1}{1-y}\right) \dd y = 1 - \frac{\ln s}{s-1}~.
    \]
\end{proof}

\begin{lemma}
    For any $s>1$ and any $x\in [0,1)$, $h_s(x)$ is right continuous.
\end{lemma}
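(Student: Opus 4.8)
The plan is to exhibit $h_s$ as the maximum of a jointly (locally) Lipschitz function over a continuously varying compact interval, from which right continuity — indeed local Lipschitz continuity — follows by a routine projection argument. First, dispose of the boundary point $x=0$: there $h_s(0)$ is \emph{defined} to be $\lim_{x\to0^+}h_s(x)=1-\frac{\ln s}{s-1}$ by the previous lemma, so right continuity at $0$ holds by definition. Fix $x_0\in(0,1)$ from now on.

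Next I would rewrite the objective in ``Riemann-sum form''. Set $g_s(y)\eqdef\frac{1}{s-1}\bigl(s-\frac{1}{1-y}\bigr)^+$ for $y\in[0,1-\frac{1}{s})$ and $g_s(y)\eqdef 0$ for $y\ge 1-\frac{1}{s}$; equivalently, declare the summand $\bigl(sx-\frac{x}{1-t-kx}\bigr)^+$ to be $0$ whenever $1-t-kx\le 0$, so that the series is in fact a finite sum. Since $\bigl(sx-\frac{x}{1-t-kx}\bigr)^+=(s-1)\,x\,g_s(t+kx)$, one gets
\[
F(x,t)\;\eqdef\;t+\tfrac{1}{s-1}\sum_{k\ge0}\Bigl(sx-\tfrac{x}{1-t-kx}\Bigr)^+\;=\;t+x\sum_{k\ge0}g_s(t+kx),\qquad h_s(x)=\sup_{t\in[0,x)}F(x,t).
\]
The structural facts I would record are: (i) $g_s$ takes values in $[0,1]$, is nonincreasing, and is Lipschitz on $[0,\infty)$ (its only kink is the continuous join to $0$ at $y=1-\frac{1}{s}$, where $g_s(1-\frac1s)=0$); and (ii) for $x$ in a fixed neighborhood $[x_0-\delta,x_0+\delta]\subset(0,1)$ (take $0<\delta<\min\{x_0,1-x_0\}$) the number of nonzero summands is at most a constant $N$ depending only on $s,x_0,\delta$, uniformly in $t$, because a nonzero term needs $t+kx<1-\frac1s$, i.e.\ $k<(1-\frac1s)/(x_0-\delta)$. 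Padding the sum to exactly $N$ harmless zero terms, $F$ is a finite sum of the maps $(x,t)\mapsto x\,g_s(t+kx)$, each Lipschitz on $[x_0-\delta,x_0+\delta]\times[0,1]$ with a constant depending only on $s,x_0,\delta$ (using $\lvert g_s\rvert\le1$, the Lipschitz bound on $g_s$, $x\le x_0+\delta$, and $k\le N$); moreover $F(x,t)=t$ for $t\ge1$. Hence $F$ is $L$-Lipschitz on $[x_0-\delta,x_0+\delta]\times[0,1]$ for some finite $L$.

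Then I would remove the half-open interval: since $F(x,\cdot)$ is continuous and $[0,x)$ is dense in $[0,x]$, $\sup_{t\in[0,x)}F(x,t)=\max_{t\in[0,x]}F(x,t)$, a maximum attained on a compact set. Finally I would run the standard argument: given $x,x'\in[x_0-\delta,x_0+\delta]$ and a maximizer $t^\star\in[0,x']$ of $F(x',\cdot)$, the projected point $\hat t\eqdef\min(t^\star,x)\in[0,x]$ satisfies $\lvert\hat t-t^\star\rvert\le\lvert x-x'\rvert$, so
\[
h_s(x)\;\ge\;F(x,\hat t)\;\ge\;F(x',t^\star)-L\bigl(\lvert x-x'\rvert+\lvert\hat t-t^\star\rvert\bigr)\;\ge\;h_s(x')-2L\lvert x-x'\rvert,
\]
and symmetrically $h_s(x')\ge h_s(x)-2L\lvert x-x'\rvert$. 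Thus $h_s$ is $2L$-Lipschitz near $x_0$, in particular right continuous at $x_0$; the argument in fact gives local Lipschitz continuity on $(0,1)$.

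The main obstacle is fact (ii): controlling $F$ and its supremum uniformly in $t$ relies on the number of active summands staying bounded near $x_0$, which degenerates precisely as $x_0\to0^+$ — but that regime is exactly the content of the preceding lemma $\lim_{x\to0^+}h_s(x)=1-\frac{\ln s}{s-1}$, so nothing is lost. A secondary point to pin down is the convention making the displayed series finite and making $g_s$ continuous at $y=1-\frac1s$; with the reading that $1-t-kx\le 0$ contributes $0$, each summand of $F$ is genuinely Lipschitz and the remaining steps are routine.
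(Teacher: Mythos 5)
Your proof is correct, and it takes a somewhat different and cleaner route than the paper's. The paper fixes $x\in(0,1)$, takes a maximizer $t$ for $h_s(x')$ with $x'>x$, observes that the number of nonzero summands at scale $x$ is at most $n<(1-1/s)/x+1$, and then compares the sums term by term to conclude $h_s(x')-h_s(x)\le\bigl(1+\tfrac1x+\tfrac{s}{s-1}\bigr)(x'-x)$; the matching lower bound $h_s(x')\ge h_s(x)$ is left implicit (it follows because $h_s$ is a max over a constraint set that only grows with $x$). Your proof instead recasts the objective as $F(x,t)=t+x\sum_k g_s(t+kx)$ with $g_s$ bounded, nonincreasing, and Lipschitz, shows $F$ is jointly Lipschitz on $[x_0-\delta,x_0+\delta]\times[0,1]$ once you bound the number of active summands (the same key observation as the paper, with $N<(1-1/s)/(x_0-\delta)+1$), passes to $\max_{[0,x]}$ via continuity, and runs a projection argument to get \emph{two-sided} local Lipschitz continuity of $h_s$ on $(0,1)$. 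This is a genuinely stronger conclusion: it yields both left and right continuity in one stroke and does not lean on monotonicity of $h_s$. You also explicitly handle the boundary point $x=0$ (via the preceding lemma giving $\lim_{x\to 0^+}h_s(x)$), which the paper's proof silently skips even though the lemma statement includes $x=0$, and you correctly flag the convention needed to make the $k$-series finite (summands vanish when $1-t-kx\le0$), a point the paper leaves implicit in the notation. The paper's version is shorter and gives a usable explicit constant for $h_s(x')-h_s(x)$ with $x'>x$, which is what the verification program in the appendix actually consumes; your version buys generality (local Lipschitzness, both sides) at the cost of a slightly heavier setup. Both are valid.

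Two small points worth noting. First, when you invoke $|g_s|\le 1$, this relies on $g_s(0)=1$ being the maximum — correct since $g_s$ is nonincreasing — and you might state this in one clause. Second, your proof as written establishes local Lipschitz continuity on $(0,1)$, which the paper deliberately avoids claiming here because the next lemma gives a sharper one-sided bound on the left derivative ($\le\frac{s+1}{2}$, uniform in $x$); your Lipschitz constant $2L$ depends on $x_0$ and $\delta$ and blows up as $x_0\to0^+$, so the two statements are complementary rather than redundant.
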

Actually, $h_s(x)$ is also left continuous, but we will conclude a stronger property for the left side in the next lemma.
\begin{proof}
For any $x\in (0,1)$, and any $x'>x$,
suppose that 
\[
    h_s(x') = t + \frac{1}{s-1} \sum_{k\in \mathbb{N}, kx'+t\le 1 - 1/s} \left(sx' - \frac{x'}{1 - t - kx'} \right).
\]
Let $n\in \mathbb{N}$ be that $(n-1)x + t < 1 - 1/s \le nx + t$, then $n$ is bounded according to $x$, $n < (1-1/s)/x + 1$. The range of the summation can be rewritten as from $0$ to $n-1$. Then, for any $x'\in (t, x)$, we have
\[
    h_s(x) \ge \min\{t, x\} + \frac{1}{s-1} \sum_{k = 0}^{n-1} \left(sx - \frac{x}{1 - t - kx}\right)~.
\]
Hence,
\begin{align*}
    & h_s(x') - h_s(x) \le x' - x + \frac{1}{s-1} \sum_{k = 0}^{n-1} (sx' - sx) \\
    =~ & (1 + \frac{sn}{s-1})(x'-x) < (1 + \frac1x + \frac{s}{s-1})(x'-x)
\end{align*}
and our lemma follows.
\end{proof}

\begin{lemma}
    For any $s>1$ and any $x\in (0,1]$, the left derivative of $h_s(x)$ is upper bounded by $\frac{s+1}{2}$,
    \[
        \frac{\partial h_s(x)}{\partial x^-} \le \frac{s + 1}{2}.
    \]
\end{lemma}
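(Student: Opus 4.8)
The goal is to bound the left derivative of $h_s(x)$ at any point $x \in (0,1]$ by $\frac{s+1}{2}$. Recall from the closed form established above that $h_s(x) = \max_{t \in [0,x)} t + \frac{1}{s-1}\sum_{k\ge 0}\big(sx - \frac{x}{1-t-kx}\big)^+$, and that the maximizing $t^*$ together with the values $x = x_2 = \cdots = x_n$ realize an extreme-point solution of the simplified program. The plan is to use the envelope theorem idea: fix a maximizer $t^*$ for a given $x$, and compute the derivative of the objective $\phi(t^*, x) \eqdef t^* + \frac{1}{s-1}\sum_{k\ge 0}\big(sx - \frac{x}{1-t^*-kx}\big)^+$ with respect to $x$ holding $t^*$ fixed; since $h_s$ is the upper envelope over $t$, its left derivative at $x$ is at most $\frac{\partial}{\partial x^-}\phi(t^*, x)$ for the maximizer $t^*$ chosen appropriately (taking the maximizer that is ``left-continuous'' so it remains valid for $x' < x$ close to $x$).

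\medskip

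The key computation is then $\frac{\partial}{\partial x}\phi(t^*, x) = \frac{1}{s-1}\sum_{k : sx(1-t^*-kx) \ge x} \frac{\partial}{\partial x}\big(sx - \frac{x}{1-t^*-kx}\big)$. Writing each summand's derivative as $s - \frac{(1-t^*-kx) + kx}{(1-t^*-kx)^2} = s - \frac{1-t^*}{(1-t^*-kx)^2}$, I would group the $k=0$ term separately and bound the tail. For $k=0$ the derivative is $s - \frac{1}{1-t^*} \le s$ (and more usefully, it is at most $s - 1$ times something controllable since the term is only present when $1-t^* \ge 1/s$). For $k \ge 1$, one expects the sum of the negative corrections $\frac{1-t^*}{(1-t^*-kx)^2}$ to telescope or dominate: the heart of the bound is showing $\sum_{k\ge 0}\big(s - \frac{1-t^*}{(1-t^*-kx)^2}\big)^+ \le \frac{(s-1)(s+1)}{2}$, i.e. $\le \frac{s^2-1}{2}$. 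A natural way is to compare the discrete sum with the integral $\int_0^{1-1/s}\big(s - \frac{1}{1-y}\big)\dd y = \frac{(s-1)^2}{s} \cdot$ (something), or more directly to observe that the derivative of $\phi$ in $x$ equals, term by term, the derivative of a Riemann-type sum, so the tail terms with $k\ge 1$ contribute a non-positive amount beyond what the integral gives, leaving the $k=0$ term as the dominant one. I would make this precise by noting that when $x$ is small the sum has many terms but each is close to the integrand, and when $x$ is large there are few terms; the worst case should be checked to be at $x \to 0^+$ or at a boundary, where the bound $\frac{s+1}{2}$ is attained (indeed $\frac{\partial}{\partial x}$ of the limiting integral expression, or a direct one-term analysis at $x=1$ giving derivative $s - \frac{1}{1-t^*}$ maximized over feasible $t^*$, yields exactly $\frac{s+1}{2}$ when $t^* = \frac{s-1}{s+1}$ or similar).

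\medskip

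The main obstacle I anticipate is handling the discontinuous dependence of the index set $\{k : 1-t^*-kx \ge 1/s\}$ on $x$: as $x$ decreases, new terms enter the sum, and at those transition points the added term has value exactly $0$ (since $sx - \frac{x}{1-t^*-kx} = 0$ there), so the objective is continuous but I must confirm the derivative does not jump upward — which follows because a term entering at value $0$ with a negative-or-controlled slope only decreases the derivative. The other delicate point is justifying the envelope inequality for the \emph{left} derivative specifically: I would argue that for $x' \uparrow x$, choosing the maximizer $t^*(x)$ of $\phi(\cdot, x)$, we have $h_s(x') \ge \phi(t^*(x), x')$ provided $t^*(x) < x'$, which holds for $x'$ close enough to $x$ unless $t^*(x)$ equals $x$ — but $t^* < x$ strictly by the structure of the optimum (the range is $t \in [0,x)$), so this is fine, giving $\frac{h_s(x) - h_s(x')}{x - x'} \le \frac{\phi(t^*(x),x) - \phi(t^*(x),x')}{x-x'} \to \frac{\partial}{\partial x}\phi(t^*(x),x)$. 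Once this reduction is in place, the whole statement reduces to the single-variable inequality $\frac{1}{s-1}\sum_{k\ge 0}\big(s - \frac{1-t}{(1-t-kx)^2}\big)^+ \le \frac{s+1}{2}$ for all $t\in[0,x)$, $x\in(0,1]$, which I would finish by the monotonicity/Riemann-sum comparison described above.
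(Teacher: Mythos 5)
Your envelope reduction is correct and matches the paper's: fix the maximizer $t$ for a given $x$, note that all $n$ active terms stay active (and nonnegative) for $x' \uparrow x$, and conclude $\frac{\partial h_s}{\partial x^-} \le \frac{1}{s-1}\sum_{k=0}^{n-1}\bigl(s - \frac{1-t}{(1-t-kx)^2}\bigr)$. You also correctly identify the target inequality $\sum_{k=0}^{n-1}\bigl(s - \frac{1-t}{(1-t-kx)^2}\bigr) \le \frac{s^2-1}{2}$. But from there the argument does not close, and the crucial step is missing.

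Two concrete problems. First, the integral you name, $\int_0^{1-1/s}\bigl(s - \frac{1}{1-y}\bigr)\dd y$, is the Riemann-limit of the \emph{objective} $h_s$ (it gives $1 - \frac{\ln s}{s-1}$), not of its \emph{derivative}. The derivative sum $\sum_k\bigl(s - \frac{1-t}{(1-t-kx)^2}\bigr)$ does not have the form $\frac{1}{x}\sum f(kx)\cdot x$ for that integrand, so "the sum approaches this integral" is not the relevant comparison, and the statement that the $k\ge 1$ tail "contributes a non-positive amount beyond what the integral gives" has no support. Second, your sanity check at $x=1$ is miscomputed: with one active term the derivative is $\frac{1}{s-1}\bigl(s - \frac{1}{1-t}\bigr) \le \frac{s-1}{s-1}=1$, not $\frac{s+1}{2}$ at $t=\frac{s-1}{s+1}$; that choice of $t$ actually gives $\frac{1}{2}$.

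What the paper does instead, and what you need to supply: first reduce WLOG to $nx+t = 1-\frac1s$ by the monotonicity of the derivative bound in $x$. Then use the exact telescoping identity
\[
\sum_{k=0}^{n-1}\frac{1-t}{(1-t-kx)(1-t-(k+1)x)}
= \frac{1-t}{x}\left(s - \frac{1}{1-t}\right) = sn,
\]
so $sn - (1-t)\sum_k\frac{1}{(1-t-kx)^2} = (1-t)\sum_k\bigl[\frac{1}{(1-t-kx)(1-t-(k+1)x)} - \frac{1}{(1-t-kx)^2}\bigr]$. Applying $\frac{1}{ab} \le \frac{1}{2a^2} + \frac{1}{2b^2}$ to each bracket makes the sum telescope to $\frac{1}{2}\bigl(s^2 - \frac{1}{(1-t)^2}\bigr)$, whence $(s-1)\frac{\partial h_s}{\partial x^-} \le \frac{s^2(1-t)}{2} - \frac{1}{2(1-t)} \le \frac{s^2-1}{2}$. (Equivalently, your ``Riemann-sum comparison'' can be made rigorous via the trapezoidal rule for the convex function $y\mapsto\frac{1-t}{(1-t-y)^2}$ — that is exactly the AM-GM step — but you would still need to write that out; as stated, the key inequality is asserted, not proved.)
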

\begin{proof}
Suppose that 
\[
    h_s(x) = t + \frac{1}{s-1} \sum_{k\in \mathbb{N}, kx+t\le 1 - 1/s} \left(sx - \frac{x}{1 - t - kx} \right).
\]
Let $n\in \mathbb{N}$ be that $(n-1)x + t < 1 - 1/s \le nx + t$. The range of the summation can be rewritten as from $0$ to $n-1$.
Then, for any $x'\in (t, x)$, we have
\[
    h_s(x') \ge t + \frac{1}{s-1} \sum_{k = 0}^{n-1} \left(sx' - \frac{x'}{1 - t - kx'}\right)~.
\]
Therefore therefore the left derivative of $h_s(x)$ is at most:
\[
    \frac{1}{s-1} \sum_{k = 0}^{n-1} \left(s - \frac{1 - t}{(1 - t - kx)^2}\right)~.
\]
This formula is non-increasing with respect to $x$. Without loss of generality, we assume $nx + t = 1 - 1/s$.
Consider the following equation:
\begin{align*}
    \sum_{k = 0}^{n-1} \frac{1-t}{(1-t-kx)(1-t-(k+1)x)} 
    & = \frac{1-t}{x} \sum_{k = 0}^{n-1} \frac{1}{1-t-(k+1)x} - \frac{1}{1-t-kx}\\
    & = \frac{1-t}{x} \cdot \left(s - \frac{1}{1-t}\right) =  sn.
\end{align*}
Finally, we have
\begin{align*}
    (s-1) \cdot \frac{\partial h_s(x)}{\partial x^-}
    & \le (1-t) \cdot \sum_{k = 0}^{n-1} \frac{1}{(1-t-kx)(1-t-(k+1)x)} - \frac{1}{(1-t-kx)^2}\\
    & \le (1-t) \cdot \sum_{k = 0}^{n-1} \frac12 \cdot \frac{1}{(1-t-(k+1)x)^2} - \frac12 \cdot \frac{1}{(1-t-kx)^2}\\
    & = (1-t) \cdot \left( \frac{s^2}{2} - \frac{1}{2(1-t)^2} \right) \le \frac{s^2 - 1}{2}.
\end{align*}
where the second inequality is by AM-GM inequality.
\end{proof}

Now, we are able to globally bound the difference of $h_s(x)$ and $h_s(x')$ by their difference $x-x'$ within a constant factor, that shall be used in \Cref{app:program-verify}.
\begin{lemma}
\label{lem:h-derivative}
    For any $s>1$ and any $x' \le x\in [0, 1]$ we have $h_s(x)-h_s(x')\le \frac{s+1}{2}(x-x')$.
\end{lemma}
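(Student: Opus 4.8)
The plan is to deduce this global estimate from the pointwise bound on the (upper) left Dini derivative established in the previous lemma, via an elementary ``integrate the derivative bound'' argument. First I would record the inputs we already have. From the preceding lemma, for every $y\in(0,1]$ the left derivative of $h_s$ at $y$ is at most $\frac{s+1}{2}$; inspecting that proof, what is actually shown is the slightly stronger Dini statement $D^-h_s(y):=\limsup_{h\to 0^+}\frac{h_s(y)-h_s(y-h)}{h}\le \frac{s+1}{2}$ (the proof lower-bounds $h_s(y')$ by a fixed feasible expression for all $y'$ just below $y$, then divides by $y-y'$ and takes $y'\to y^-$). Moreover $h_s$ is non-decreasing, since enlarging $x$ only relaxes the constraint $x_i\le x$, and it is right continuous by the earlier lemma; a \emph{finite} left Dini derivative together with monotonicity rules out an upward left jump, so in fact $h_s$ is continuous on $[0,1]$.

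Next I would prove the standard real-analysis fact that a continuous function with upper left Dini derivative bounded by $M$ is $M$-Lipschitz from the left: if $f$ is continuous on $[a,b]$ and $D^-f(y)\le M$ for all $y\in(a,b]$, then $f(b)-f(a)\le M(b-a)$. The argument is a ``descending climb'': fix $\varepsilon>0$, let $E=\{y\in[a,b]:\,f(b)-f(y)\le (M+\varepsilon)(b-y)\}$, which is closed by continuity and contains $b$, and set $c=\inf E\in E$. If $c>a$, then $D^-f(c)\le M<M+\varepsilon$ gives some $h>0$ with $f(c)-f(c-h)<(M+\varepsilon)h$, and combining this with $c\in E$ shows $f(b)-f(c-h)<(M+\varepsilon)(b-(c-h))$, i.e. $c-h\in E$, contradicting minimality of $c$. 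Hence $c=a\in E$, so $f(b)-f(a)\le(M+\varepsilon)(b-a)$; letting $\varepsilon\to 0$ proves the claim.

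Finally I would apply this claim with $f=h_s$, $M=\frac{s+1}{2}$, and $[a,b]=[x',x]\subseteq[0,1]$: since $(x',x]\subseteq(0,1]$, the Dini bound from the previous lemma holds on the required interval, and the claim yields $h_s(x)-h_s(x')\le\frac{s+1}{2}(x-x')$. I do not expect any real obstacle: the genuine work (the explicit estimate $D^-h_s\le\frac{s+1}{2}$ via the telescoping identity and AM--GM) is already carried out in the previous lemma, and this lemma is just its integrated form. The only points requiring care are that the controlled derivative is the \emph{left} one — so the climbing argument must descend from $b$ through $c=\inf E$ rather than ascend from $a$ — and the left endpoint case $x'=0$, which is fine because $h_s$ is right continuous at $0$ and the Dini bound is needed only on $(x',x]$.
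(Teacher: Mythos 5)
Your proof is correct and follows the same ``descending climb from the right endpoint'' strategy as the paper's: take the infimum of a set of points that already satisfy the estimate, use right continuity to show the infimum lies in the set, and use the bound on the left derivative at the infimum to push strictly below it, forcing the infimum to be the left endpoint. You are actually a bit more careful than the paper---the $\varepsilon$-slack cleanly handles the fact that the derivative control is only a Dini $\limsup$, a point the paper's phrasing glosses over---but this is a refinement of the same argument rather than a different route.
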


\begin{proof}
    For any $x\in [0,1]$, consider set $A = \{a : \forall b\in [a,x], h_s(x) - h_s(b)\le \frac{s+1}{2}(x-b), 0\le a \le x\}$. This set is non-empty since $x\in A$. We denote $\inf A$ by $a$.

    First we prove $a\in \inf A$. Since $a = \inf A$, we know that $h_s(x) - h_s(b)\le \frac{s+1}{2}(x-b)$ for any $a<b\le x$. Take limit $b\to a^+$ on both sides gives us $h_s(x) - h_s(a)\le \frac{s+1}{2}(x-a)$, so $a\in A$.

    Next we prove $a = 0$. Otherwise suppose that $a>0$, since the left derivative of $a$ is upper bounded by $\frac{s+1}{2}$, there exist $0<t<a$ such that for any $b\in (t,a)$, $h_s(a) - h_s(b) \le \frac{s+1}{2}(a-b)$, which implies that $h_s(x) - h_s(b) \le \frac{s+1}{2}(x-b)$. As a result, for any $b\in (t,a)$, $b\in A$, contradict to $a = \inf A$. Therefore we must have $a=0$.

    Hence, $A = [0,x]$ and our lemma follows.
\end{proof}
\newpage


\section{Algorithm with general $s$}
\label{app:algo-general-s}

\begin{tcolorbox}[title=Step Activation Rates Except One Large Item]
    \label{algo:general-step-act-rate-0.688}
    \begin{enumerate}[(1)]
        \item Let $i_0=\arg \max_i \sum_v x_i^v$ be the largest item, and let:
        	\[
        		x_0 = \sum_v x_{i_0}^v
        		\quad,\quad
        		z_{i_0}^v = \frac{(s \rho_{i_0}^v - 1)^+}{s-1}
        		\quad,\quad
        		h_0 = \sum_{v} p_{i_0}^v z_{i_0}^v
        		~.
    		\]
        \item For calculate a set of $z_i^v$'s for the remaining items $i \ne i_0$ such that:
			\[
				h_{ot} \eqdef \sum_{i\neq i_0} \sum_v p_i^v z_i^v = \min\{h_s(x_0) - h_0, 1 - x_0\}
				\quad,\quad
				z_i^v \in \left[ \frac{1}{s-1}(s\rho_i^v - 1)^+, \rho_i^v \right]
				~,
			\]
			where $h_s(x_0)$ is the function as defined in \Cref{lem:h_s_general}.
        \item Select threshold times $\threshold_0 \leq \threshold_1 \leq \threshold_2$ based on the instance.
        \item For time $t$ from $0$ to $1$:
        \begin{enumerate}[(a)]
            \item If the largest item $i_0$ arrives with value $v$, activate it with probability:
            \[
	            g_{i_0}^v(t)=\left\{ \begin{aligned}
	                0 & \quad \text{if } t\in [0,\threshold_0),\\
	                z_{i_0}^v & \quad \text{if } t\in [\threshold_0, \threshold_2),\\
	                s\rho_{i_0}^v - (s-1)z_{i_0}^v & \quad \text{if } t\in [\threshold_2, 1].
	            \end{aligned} \right.
            \]
            \item If an item $i\neq i_0$ arrives with value $v$, activate it with activation rate:
            \[
            a_i^v(t)=\left\{ \begin{aligned}
                z_i^v & \quad \text{if } t\in [0,\threshold_1),\\
                s\rho_i^v - (s-1)z_i^v & \quad \text{if } t\in [\threshold_1, 1].
            \end{aligned} \right.
            \]
            \item Accept the item if it is the first activated item.
        \end{enumerate}
    \end{enumerate}
\end{tcolorbox}    



\begin{lemma}
	There exist $z_i^v$'s satisfying the stated constraints in (2).
\end{lemma}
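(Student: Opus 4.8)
The plan is to reproduce the $s=2$ argument used earlier for the algorithm in \Cref{fig:prophet-secretary}, with the truncation $(2\rho-1)^+$ replaced by $\frac{1}{s-1}(s\rho-1)^+$ and with \Cref{lem:h_s_general} used in place of \Cref{lem:hu_xu}. First I would record the elementary inequality
\[
    \frac{(s\rho-1)^+}{s-1} ~\le~ \rho \qquad \text{for all } \rho \in [0,1],
\]
which holds because $\frac{s\rho-1}{s-1}\le\rho$ is equivalent to $\rho\le 1$. In particular each interval $\big[\tfrac{1}{s-1}(s\rho_i^v-1)^+,\,\rho_i^v\big]$ is nonempty, so the per-coordinate constraints on the $z_i^v$'s are consistent.

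Since we additionally require $\sum_{i\ne i_0}\sum_v p_i^v z_i^v = h_{ot}$ with $h_{ot} = \min\{h_s(x_0)-h_0,\,1-x_0\}$, it suffices to produce a single feasible choice, and for this I would invoke continuity: setting $z_i^v(\lambda) = (1-\lambda)\cdot\frac{(s\rho_i^v-1)^+}{s-1} + \lambda\cdot\rho_i^v$ for $\lambda\in[0,1]$, the quantity $\sum_{i\ne i_0}\sum_v p_i^v z_i^v(\lambda)$ is continuous in $\lambda$, so by the intermediate value theorem it attains the value $h_{ot}$ for some $\lambda$, provided
\[
    \sum_{i\ne i_0}\sum_v p_i^v\cdot\frac{(s\rho_i^v-1)^+}{s-1}
    ~\le~ h_{ot} ~\le~
    \sum_{i\ne i_0}\sum_v p_i^v\,\rho_i^v ,
\]
and the resulting $z_i^v(\lambda)$, being a convex combination of the two endpoints, automatically lies in the required interval.

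It remains to check the two displayed inequalities. The right one follows from $\sum_{i\ne i_0}\sum_v p_i^v\rho_i^v = \sum_{i\ne i_0}\sum_v x_i^v = 1-x_0 \ge \min\{h_s(x_0)-h_0,\,1-x_0\}$. For the left one I would bound the left-hand side against each term of the minimum in turn. Using $p_i^v\rho_i^v = x_i^v$ and adding back the $i=i_0$ term,
\begin{align*}
    \sum_{i\ne i_0}\sum_v p_i^v\cdot\frac{(s\rho_i^v-1)^+}{s-1}
    &= \frac{1}{s-1}\sum_{i\ne i_0}\sum_v (sx_i^v-p_i^v)^+ \\
    &= \frac{1}{s-1}\sum_{i}\sum_v (sx_i^v-p_i^v)^+ - h_0
    ~\le~ h_s(x_0) - h_0
\end{align*}
by \Cref{lem:h_s_general}; separately, the same left-hand side is at most $\sum_{i\ne i_0}\sum_v p_i^v\rho_i^v = 1-x_0$ by the elementary inequality from the first step. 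Hence it is at most $\min\{h_s(x_0)-h_0,\,1-x_0\} = h_{ot}$, which completes the proof. There is no real obstacle here; the only point that needs care is the bookkeeping in the chain above, namely that $h_0 = \sum_v p_{i_0}^v z_{i_0}^v$ is defined with exactly the truncation $z_{i_0}^v = \frac{1}{s-1}(s\rho_{i_0}^v-1)^+$, so that the $i=i_0$ contribution to $\frac{1}{s-1}\sum_{i,v}(sx_i^v-p_i^v)^+$ is precisely $h_0$ and \Cref{lem:h_s_general} applies verbatim.
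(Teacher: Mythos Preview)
Your proof is correct and follows essentially the same approach as the paper: verify the elementary bound $\frac{(s\rho-1)^+}{s-1}\le\rho$, then sandwich $h_{ot}$ between $\sum_{i\ne i_0}\sum_v p_i^v\cdot\frac{(s\rho_i^v-1)^+}{s-1}$ and $\sum_{i\ne i_0}\sum_v p_i^v\rho_i^v=1-x_0$ using \Cref{lem:h_s_general}. Your explicit convex-combination/IVT step and the separate check that the lower sum is also at most $1-x_0$ are details the paper leaves implicit, but otherwise the arguments coincide.
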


\begin{proof}
    Note that $ \frac{1}{s-1}(s\rho_i^v - 1)^+ \leq \rho_i^v$, it suffices to show that
    \[
        \sum_{i\neq i_0} \sum_v p_i^v \cdot \frac{1}{s-1} (s\rho_i^v - 1)^+ \leq \min\{h_s(x_0) - h_0, 1 - x_0\} \leq \sum_{i\neq i_0} \sum_v p_i^v \rho_i^v.
    \]
    
    Recall that $h_s(x_0)$ is an upper bound of $\sum_{i,v} \frac{(sx_i^v - p_i^v)^+}{s-1}$ (see \Cref{app:h-function}).
    The first inequality follows by:
    \[
        \begin{aligned}
            \sum_{i\neq i_0} \sum_v p_i^v \cdot \frac{1}{s-1}(s \rho_{i}^v - 1)^+
            &
            = \sum_{i\neq i_0} \sum_v \frac{1}{s-1} (sx_i^v - p_i^v)^+ \\
            & 
            = \sum_{i} \sum_v \frac{1}{s-1} (sx_i^v - p_i^v)^+ -h_0  \leq h_s(x_0) - h_0
            ~.
        \end{aligned}
    \]
    
    The second inequality follows by $\sum_{i\neq i_0} \sum_v p_i^v \rho_i^v = \sum_{i\neq i_0} \sum_v x_i^v = 1 - x_0$.
\end{proof}

\begin{lemma}
    The activation rates and probabilities are well-defined, namely they are between $0$ and $1$.
\end{lemma}

\begin{proof} 
    Recall that $0\leq x_i^v \leq \rho_{i}^v\leq 1$.
    For any item $i$, we have $0 \le z_{i}^v \leq \rho_{i}^v \leq 1$, and $s \rho_{i_0}^v - (s-1) z_{i_0}^v= s \rho_{i_0}^v - (s \rho_{i_0}^v - 1)^+ = \min\{s \rho_{i_0}^v, 1\}\in [0,1]$
    
    Hence, for the largest item $i_0$, the activation probability $g_{i_0}^v(t)$ is always between $0$ and $1$.
    For any other item $i \ne i_0$, the activation rate $a_i^v(t)$ is between $0$ and $1$ and the resulting activation probability $g_i^v(t) = a_i^v(t)e^{-\int_0^t A_i(x) \dd x}$ is also between $0$ and $1$.
\end{proof}


Then, we are ready to prove the main theorem of this section~\Cref{thm:prophet-secretary}, which says that the Step Activation Rates Except One Large Item is $0.688$-competitive.

\begin{proof}[Proof of \Cref{thm:prophet-secretary}]
    Actually, we do not give the explicit form of $\threshold_0,\threshold_1,\threshold_2$. We only show that there always \emph{exists} proper choices of them based on any $x_0, h_0$ such that the competitive ratio is $0.688$. In the following, we let $\threshold_0\leq \threshold_1\leq \threshold_2$ to be determined.

    \paragraph{Largest Item.}
    By \Cref{lemma:single-act-rate}, the probability that the item $i(\neq i_0)$ was not activated before time $t$ is $e^{-\int_{0}^t A_i(x)\dd x}$, where $A_i(x)$ equals $\sum_v p_i^v z_i^v$ when $x\in [0, \threshold_1]$ and equals $s\sum_v x_i^v - \sum_v p_i^v z_i^v$ otherwise.

    By the definition of $h_{ot}$ and $\sum_j \sum_v x_j^v =1$ (\Cref{lem:prophet-secretary-properties}), we have:
    \[
        \sum_{j\neq i_0} A_j(x) = 
        \begin{cases} 
        \sum_{j \neq i_0} \sum_v p_j^v z_j^v = h_{ot}, & \text{for } x \in [0, \threshold_1] \\
        \sum_{j \neq i_0} (s\sum_v x_j^v - \sum_v p_j^v z_j^v) = s(1 - x_0) - h_{ot}. & \text{for } x \in [\threshold_1, 1]
        \end{cases}
    \]
    Thus, the probability of not existing any item $j\neq i_0$ was activated before time threshold $t$ is:
    \[
    \begin{aligned}
        \prod_{j\neq i_0} e^{-\int_0^t A_j(x)\dd x} 
        = e^{-\int_0^t \sum_{j\neq i_0}A_j(x)\dd x}
        = e^{-h_{ot}\cdot t - s(1 - x_0 - h_{ot}) \cdot (t - \threshold_1)^+}\eqdef e^{-K(t)},
    \end{aligned}    
    \]
    Then, according to~\Cref{lem:acceptance-probability}, the probability of item $i_0$ is accepted with value $v_{i_0} = v$ is:
    \begin{align*}
        & \int_0^1 p_{i_0}^v g_{i_0}^v(t) e^{-K(t)} \dd t \\
        = ~ & p_{i_0}^v \left( z_{i_0}^v \int_{\threshold_0}^1 e^{-K(t)} \dd t + (\rho_{i_0}^v - z_{i_0}^v) \cdot s \int_{\threshold_2}^1 e^{-K(t)} \dd t \right) \\
        \ge ~ & p_{i_0}^v \rho_{i_0}^v \cdot \min \left(\int_{\threshold_0}^1 e^{-K(t)}\dd t,
        ~ s \int_{\threshold_2}^1 e^{-K(t)} \dd t \right)
    \end{align*}

    \paragraph{Other Items.}
    We define $L(t)$ as the probability that item $i_0$ is activated before time $t$. Then, according to the definition of $g_{i_0}^v(t)$ and $h_0$, $\sum_v p_{i_0}^v g_{i_0}^v(t)$ equals $0$ for $t\in [0, \threshold_0]$, equals $h_0$ for $t\in [\threshold_0, \threshold_2]$, and equals $s x_0 - h_0$ for $t\in [\threshold_2, 1]$. Thus, by~\Cref{lem:activation-probability}, we have
    \[
        \begin{aligned}
            L(t) = \int_0^t \sum_v p_{i_0}^v g_{i_0}^v(x) \dd x= h_0 \cdot (t - \threshold_0)^+ + s (x_0 - h_0) \cdot (t - \threshold_2)^+.
        \end{aligned}
    \]

Thus, for any item $i\neq i_0$, according to~\Cref{lem:acceptance-probability}, the probability of item $i$ to be accepted with value $v$ is:
\begin{align*}
    & \int_0^1 p_i^v a_i^v(t) e^{-\int_0^t \sum_i A_i(x) \dd x}  \cdot (1-L(t)) \prod_{j\neq i, i_0} e^{-\int_0^t A_{j}(y)\dd y}\dd t \\
    = ~ & p_i^v\int_0^1 a_i^v(t)  \prod_{j\neq i_0} e^{-\int_0^t A_{j}(y)\dd y} (1-L(t)) \dd t \\
    = ~ & p_i^v\int_0^1 a_i^v(t) e^{-K(t)} (1-L(t)) \dd t \\
    = ~ & p_i^v \left( z_i^v\cdot   \int_0^1 e^{-K(t)}  (1- L(t)) \dd t +  (\rho_i^v-z_i^v) \cdot s \int_{\threshold_1}^1 e^{-K(t) }(1-L(t)) \dd t \right)\\
    \ge ~ & p_i^v \rho_i^v \min\left( \int_0^1 e^{-K(t)} (1-L(t)) \dd t,
    ~ s \int_{\threshold_1}^1 e^{-K(t)} \left(1 - L(t)\right) \dd t \right)
\end{align*}

\paragraph{Choice of Parameters.}
Therefore, we define:
\begin{align}
    \Gamma(x_0, h_0, s, \threshold_0, \threshold_1, \threshold_2) = \min \Bigg( & \int_{\threshold_0}^1 e^{-K(t)} \dd t,
    ~ s\int_{\threshold_2}^1 e^{-K(t)} \dd t,\\
    & \int_0^1 e^{-K(t)} \left(1 - L(t)\right) \dd t,
~ s \int_{\threshold_1}^1 e^{-K(t)} \left(1 - L(t)\right) \dd t \Bigg)
~,
\end{align}

We use computer to numerically verify the following, that we discuss in detail in \Cref{app:program-verify}.
\begin{lemma}
\label{lem:verification_689}
For every $x_0 \in [0,1]$ and $h_0 \in [0,x_0]$, there exists $s \ge 1, 0\le \threshold_0\le \threshold_1\le \threshold_2\le 1$, such that $\Gamma(x_0, h_0, s, \threshold_0, \threshold_1, \threshold_2) > 0.688.$
\end{lemma}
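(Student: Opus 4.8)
The statement is a $\forall\exists$ claim over the two-dimensional region $D=\{(x_0,h_0):0\le h_0\le x_0\le 1\}$, and the plan is to reduce it to finitely many inequalities. I would first fix a \emph{choice rule} assigning to every $(x_0,h_0)\in D$ a tuple $(s,\threshold_0,\threshold_1,\threshold_2)$, taken to be piecewise constant on a sufficiently fine grid $\mathcal G$ of $D$; on each cell the assigned tuple is a concrete value produced by an offline search. It then suffices to show, cell by cell, that the assigned tuple $(s^\ast,\threshold_0^\ast,\threshold_1^\ast,\threshold_2^\ast)$ makes $\Gamma(x_0,h_0,s^\ast,\threshold_0^\ast,\threshold_1^\ast,\threshold_2^\ast)>0.688$ for \emph{all} $(x_0,h_0)$ in that cell, not merely at its center.

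The device that upgrades a pointwise check to a cellwise one is a Lipschitz estimate: for a fixed admissible tuple, $\Gamma$ is Lipschitz in $(x_0,h_0)$ with an explicit constant $C$. Recall $K(t)=h_{ot}\cdot t+s(1-x_0-h_{ot})(t-\threshold_1)^+$ and $L(t)=h_0(t-\threshold_0)^++s(x_0-h_0)(t-\threshold_2)^+$, which depend on $(x_0,h_0)$ only through affine expressions and through $h_{ot}=\min\{h_s(x_0)-h_0,1-x_0\}$. By \Cref{lem:h-derivative}, $x_0\mapsto h_s(x_0)$ is one-sided Lipschitz with constant $\tfrac{s+1}{2}$, so $(x_0,h_0)\mapsto h_{ot}$ is Lipschitz (a minimum of Lipschitz functions), whence $t\mapsto K(t)$ and $t\mapsto L(t)$ are Lipschitz in $(x_0,h_0)$ uniformly over $t\in[0,1]$. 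Since $e^{-K}\in[0,1]$ and $1-L\in[0,1]$ are $1$-Lipschitz in their arguments, each of the four integrands, hence each of the four integrals defining $\Gamma$, is Lipschitz in $(x_0,h_0)$; restricting $s$ to a compact range $[1,s_{\max}]$ (which the search only ever uses) bounds the constant, and the minimum of Lipschitz functions is Lipschitz. It therefore suffices to verify $\Gamma>0.688+C\cdot\operatorname{diam}(R)$ at the center of each cell $R$.

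Evaluating $\Gamma$ at one point is a finite computation once $h_s(x_0)$ is known: $K$ is piecewise linear in $t$ with breakpoint $\threshold_1$ and $L$ is piecewise linear with breakpoints $\threshold_0,\threshold_2$, so on each of the $O(1)$ subintervals determined by $\threshold_0\le\threshold_1\le\threshold_2$ the functions $e^{-K(t)}$ and $e^{-K(t)}(1-L(t))$ take the forms $e^{at+b}$ and $(ct+d)e^{at+b}$, whose antiderivatives are elementary; hence $\Gamma$ has a closed form in $x_0,h_0,s,\threshold_i$ and $h_s(x_0)$. The remaining ingredient is a rigorous over-estimate $\overline h_s(x_0)\ge h_s(x_0)$: by \Cref{lem:h_s_general}, $h_s(x_0)$ is the maximum over $t\in[0,x_0)$ of $t+\tfrac{1}{s-1}\sum_{k\ge0}(sx_0-\tfrac{x_0}{1-t-kx_0})^+$, a sum of at most $\lceil 1/x_0\rceil$ terms that is explicitly Lipschitz in $t$ on $[0,x_0)$, so a sub-grid in $t$ plus the corresponding slack yields such an $\overline h_s$ (for $x_0$ bounded away from $0$; the regime $x_0\to0^+$ is covered separately via $\lim_{x\to0^+}h_s(x)=1-\tfrac{\ln s}{s-1}$ from \Cref{lem:hu_xu}). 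One then runs the algorithm and its analysis with $h_{ot}=\min\{\overline h_s(x_0)-h_0,1-x_0\}$; over-estimating $h_s$ only enlarges $h_{ot}$, which keeps the constraint $z_i^v\in[\tfrac{1}{s-1}(s\rho_i^v-1)^+,\rho_i^v]$ feasible (the relevant instance quantity is always at most $h_s(x_0)-h_0\le\overline h_s(x_0)-h_0$) while perturbing $\Gamma$ by a controlled amount that is absorbed into the slack.

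I expect the main obstacle to be the rigorous accounting that glues these pieces together: the outer grid resolution, the Lipschitz slack $C\cdot\operatorname{diam}(R)$, the inner-grid slack in $\overline h_s$, the kink of $h_{ot}$ along $\{h_s(x_0)-h_0=1-x_0\}$, the kinks of $K$ and $L$ at the threshold times, and the floating-point (or interval-arithmetic) error in the closed-form integrals, all arranged so that the strict inequality $\Gamma>0.688+(\text{total error})$ verified on a finite grid genuinely forces $\Gamma>0.688$ throughout $D$. A related point is that because of these kinks the whole argument must remain at the level of Lipschitz estimates and cannot invoke gradient bounds, which is precisely why \Cref{lem:h-derivative} is stated and used; finding a choice rule with margin exceeding the total error is where the computer-aided search is needed, while the verification itself is a finite list of inequalities that can be certified by interval arithmetic.
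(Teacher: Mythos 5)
Your proposal takes essentially the same approach as the paper: discretize the $(x_0,h_0)$ domain on a grid of resolution $\epsilon$, use the one-sided Lipschitz bound on $h_s$ from \Cref{lem:h-derivative} to control the deviation of $K$, $L$, and hence $\Gamma$, within each cell, and verify the resulting finitely many strict inequalities by a computer-aided brute-force search over the thresholds with a fixed piecewise-constant choice of $s$. The paper rounds $(x_0,h_0)$ down to grid multiples and proves the one-sided estimate $\Gamma(x_0,h_0,s,\cdot)\ge\Gamma(x_0',h_0',s,\cdot)-\bigl(\tfrac32 s^2+\tfrac12 s\bigr)\epsilon$ rather than your cell-center plus two-sided Lipschitz slack, but this is the same device, and your extra care about a rigorous over-estimate $\overline h_s\ge h_s$ and interval arithmetic is a reasonable fleshing-out of what the paper delegates to the accompanying source code.
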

This concludes a competitive ratio of $0.688$.
\end{proof}



\section{Proof of \Cref{lem:verification_687} and \Cref{lem:verification_689} with Computer Assistance}
\label{app:program-verify}
Below we prove \Cref{lem:verification_689}. \Cref{lem:verification_687} can be proved in a similar way except that the function $\Gamma$ has one less input $s$ which is fixed to be $2$.

First, we set the parameter $s$ as the following:
\[
s = \begin{cases}
3 & x_0 \in [0,0.35] \\
2.5 & x_0 \in (0.35,0.6] \\
2 & x_0 \in (0.6,1]
\end{cases}
\]
Next, let $\epsilon = 1/10000$.
For every $x_0\in [0,1]$, and $h_0\in [0,x_0]$, we round them down to multiples of $\epsilon$. I.e., let $x_0' = \epsilon \cdot \lfloor x_0 / \epsilon \rfloor$ and $h_0' = \epsilon \cdot \lfloor h_0 / \epsilon \rfloor$.


Next, we prove that for any $0\le \beta_0 \le \beta_1 \le \beta_2 \le 1$, we have:
\begin{equation}
\label{eqn:error}
    \Gamma(x_0, h_0, s, \beta_0, \beta_1, \beta_2) \ge \Gamma(x_0', h_0', s, \beta_0, \beta_1, \beta_2) - \left(\frac{3}{2} s^2 + \frac{1}{2} s \right) \epsilon.
\end{equation}

We naturally consider variables according to the rounding: $h_{ot}' = \min\{1-x_0', h_s(x_0') - h_0'\}$ and $K(t)' = h_ot' \cdot t + s(1-x_0'-h_{ot}')(t-\beta_1)^+$, and $L(t)' = h_0' \cdot (t-\beta_0)^+ + s(x_0' - h_0')(t-\beta_2)^+$.
 
We first bound $K(t)' - K(t)$ for any $t$:
\begin{multline*}
K(t)' - K(t) = s(t-\beta_1)^+ \cdot (x_0 - x_0') + \left(s(t-\beta_1)^+ - t\right) \cdot (h_{ot} - h_{ot}') \\
\ge \left(s(t-\beta_1)^+ - t\right) \cdot (h_{ot} - h_{ot}') \ge - \frac{s+1}{2} \epsilon~,
\end{multline*}
where the last inequality holds by the following case analysis.
\begin{itemize}
\item If $s(t-\beta_1)^+ - t \ge 0$, then $s(t-\beta_1)^+ - t \le s-1$. Further by $h_{ot} \ge h_{ot}' - \epsilon$ due to $1 - x_0 \ge 1 - x_0' - \epsilon$ and $h_s(x_0) - h_0 \ge h_s(x_0') - h_0' - \epsilon$, we have $K(t)' - K(t) \ge -(s-1)\epsilon \ge -\frac{s+1}{2}\cdot \epsilon$.
\item If $s(t-\beta_1)^+ - t\le 0$, then $t - s(t-\beta_1)^+ \le 1$. Further by $h_{ot'} \ge h_{ot}' - \frac{s+1}{2}\epsilon$ due to $1-x_0' \ge 1-x_0$ and $h_s(x_0') - h_0' \ge h_s(x_0') - h_0 \ge h_s(x_0) - h_0 - \frac{s+1}{2}\epsilon$ by \Cref{lem:h-derivative}, we have $K(t)' -  K(t) \ge -\frac{s+1}{2}\epsilon$.
\end{itemize}

Second we lower bound $L(t)' - L(t)$ for any $t$:
\[
    L(t)' - L(t) = -s(t-\beta_0)^+ \cdot (x_0 - x_0') + ((t-\beta_0)^+ + s(t-\beta_2)^+)\cdot (h_0 - h_0') \ge -s\cdot \epsilon.
\]
Putting these together, we have for any $t$:
\begin{align*}
    & e^{-K(t)} - e^{-K'(t)} \ge -\frac{s+1}{2} \cdot \epsilon, \\
    & e^{-K(t)}(1-L(t)) - e^{-K(t)'}(1-L(t)') \ge -\left(\frac{s+1}{2} + s\right) \cdot \epsilon.
\end{align*}
Recall the definition of $\Gamma$ function, we conclude the proof of \eqref{eqn:error}.

Finally, we are left to verify that for every $(x_0,h_0) = (i \epsilon, j \epsilon)$, $0 \le i,j \le \frac{1}{\epsilon}$, there exist $0\le \beta_0 \le \beta_1 \le \beta_2 \le 1$,
\[
    \Gamma(x_0, h_0, s, \beta_0, \beta_1, \beta_2) - s\left(\frac32 s + \frac12\right) \cdot \epsilon > 0.688~.
\]
This final step is done by brute force search with computer assistance. For verification, the source code can be found at a \hyperlink{https://github.com/billyldc/prophet_secretary_and_matching_code}{github repository}\footnote{https://github.com/billyldc/prophet\_secretary\_and\_matching\_code}.

\end{document}